\newcommand{\blind}{0}
\newtheorem{theorem}{Theorem}
\newtheorem{lemma}{Lemma}
\newtheorem{example}{Example}
    \titlespacing{\section}{0pt}{1ex}{0.75ex}
    \titlespacing{\subsection}{0pt}{1ex}{0.75ex}
    \titlespacing{\subsubsection}{0pt}{1ex}{0.75ex}
\begin{document}

\def\spacingset#1{\renewcommand{\baselinestretch}%
{#1}\small\normalsize} \spacingset{1}


  \title{\bf 
    Facilitating heterogeneous effect estimation via statistically efficient categorical modifiers
  }
  \if0\blind
    {   
  \author{Daniel R. Kowal\thanks{Associate Professor, Department of Statistics and Data Science, Cornell University and Department of Statistics, Rice University (\href{mailto:dan.kowal@cornell.edu}{dan.kowal@cornell.edu}). 
    Research  was sponsored by the National Institute of Environmental Health Sciences (R01ES028819) and the National Science Foundation (SES-2214726). The findings and conclusions in this publication are those of the author(s) and do not necessarily represent the views of the NIH, the U.S. government, or the North Carolina Department of Health and Human Services, Division of Public Health.}\hspace{.2cm}
    } 
  \maketitle
} \fi

\bigskip
\begin{abstract}
Categorical covariates such as race, sex, or group are ubiquitous in regression analysis. While main-only (or ANCOVA) linear models are predominant, \emph{cat-modified} linear models that include categorical-continuous or categorical-categorical interactions are increasingly important and allow heterogeneous, group-specific effects. However, with standard approaches, the addition of cat-modifiers fundamentally alters the estimates and interpretations of the main effects, often inflates their standard errors, and introduces significant concerns about group (e.g., racial) biases. We advocate an alternative parametrization and estimation scheme using \emph{abundance-based constraints} (ABCs). ABCs induce a model parametrization that is both interpretable and equitable. Crucially, we show that with ABCs, the addition of cat-modifiers 1) leaves main effect estimates unchanged and 2) enhances their statistical power, under reasonable conditions. Thus, analysts can, and arguably \emph{should} include cat-modifiers in linear regression models to discover potential heterogeneous effects---without compromising estimation, inference, and interpretability for the main effects. Using simulated data, we verify these invariance properties for estimation and inference and showcase the capabilities of ABCs to increase statistical power. We apply these tools to study demographic heterogeneities among the effects of social and environmental factors on STEM educational outcomes for children in North Carolina. \if0\blind An \texttt{R} package \texttt{lmabc} is available. \fi
\end{abstract}

\noindent%
{\it Keywords:}  Discrete data; Interactions; Penalized Estimation; Regression analysis
\vfill

\newpage
\spacingset{1.45} 
\section{Introduction}
\label{sec-intro}
Interactions are remarkably valuable  in linear regression analysis. 
In particular, interactions between a categorical (or nominal) variable and either a continuous or categorical variable---referred to here as \emph{cat-modifiers}---are crucial for discovering and quantifying heterogeneous effects. A prominent example is  race: due to structural racism and discrimination, the effects of many important variables on health and life outcomes vary by race  
\citep{Williams2019},  
with race often interacting with sex or socioeconomic status \citep{Schoendorf1992,Bauer2014}. 
Cat-modifiers are also highly relevant for studying gene-environment interactions \citep{Miao2024} and appear broadly in the social and behavioral sciences \citep{Krefeld2024}. Within statistics, the urgency of cat-modifiers is perhaps best known by Simpson's paradox \citep{Simpson1951}, where the omission of cat-modifiers produces entirely misleading associations.  
 



Yet there are significant obstacles to the inclusion of cat-modifiers in linear regression analysis. Broadly, cat-modifiers alter the interpretation of the main effects, introduce concerns about equity across categorical groups (e.g., for race, sex, and other protected groups), 
change the main effect estimates, and typically inflate the main effect standard errors (SEs). Consequently, cat-modifiers are often omitted or misreported \citep{Knol2009}, which falsely suppresses heterogeneity. 

We argue that, \emph{with the right parametrization},  cat-modifiers can readily, and arguably \emph{should} be included in linear  regression models with categorical covariates. 
To establish ideas, suppose we have $p$ continuous covariates $\bm x = (x_1,\ldots, x_p)^\top$ and $K$ categorical variables $\bm C = (C_1,\ldots, C_K)^\top$ with $L_k$ levels for each categorical variable $k=1,\ldots,K$. We consider regression models for data $\{(\bm x_i, \bm c_i, y_i)\}_{i=1}^n$ parameterized by a linear regression function $\mu(\bm x, \bm c)$ which typically models the conditional expectation $\mathbb{E}(Y \mid \bm x, \bm c)$ or a transformed version for generalized linear models. We distinguish between two classes of linear models: those that do not include cat-modifiers and those that do.  First, the \emph{main-only} model includes multiple continuous and categorical variables, but no interactions:
\begin{equation}
    \label{reg-main}
    \mu^M(\bm x, \bm c) = \alpha_0^M + \bm x^\top \bm \alpha^M + \sum_{k=1}^K \beta_{k, c_k}^M
\end{equation}
or in Wilkinson notation, \texttt{y $\sim$ x$_1$ + \ldots + x$_p$ + c$_1$ + \ldots + c$_K$}. Second, the \emph{cat-modified} model expands \eqref{reg-main} to allow categorical-continuous and categorical-categorical interactions:
\begin{equation}\label{reg-cm}
\mu(\bm x, \bm c) = \alpha_0 + \bm x^\top \bm \alpha + \sum_{k=1}^K \beta_{k, c_k} +  \sum_{k=1}^K \bm x^\top  \bm\gamma_{k, c_k} + \sum_{k=1}^{K-1} \sum_{k'=k+1}^K \gamma_{k,k',c_k, c_{k'}}
\end{equation}
or equivalently, \texttt{y $\sim$ (x$_1$ + \ldots + x$_p$)*(c$_1$ + \ldots + c$_K$) + c$_1$*c$_2$ + \ldots + c$_{K-1}$*c$_K$}, using pairwise interactions for convenience. Our notation emphasizes that the parameters in \eqref{reg-main} and \eqref{reg-cm} are fundamentally distinct, even though these models are nested. 

The advantage of the cat-modified model is the ability to estimate heterogeneous, group-specific effects for each $x_j$. While both models specify \emph{group-specific intercepts} (consider \eqref{reg-main} and \eqref{reg-cm} with $\bm x =\bm 0$), 
only the cat-modified model features \emph{group-specific slopes}: 
\begin{equation} \label{joint-slopes}
\mu_{x_j}'(\bm c) \coloneqq \mu(x_j+1, \bm x_{-j}, \bm c) - \mu(x_j, \bm x_{-j}, \bm c) = \alpha_j + \sum_{k=1}^K \gamma_{j, k, c_k}.
\end{equation}
By comparison, the slopes in the main-only model do not depend on $\bm c$:  ${\mu_{x_j}^M}' \coloneqq \mu^M(x_j+1, \bm x_{-j}, \bm c) - \mu^M(x_j, \bm x_{-j}, \bm c) = \alpha_j^M$.

For concreteness, we consider two popular cases. Empirical examples are given in Tables~\ref{tab:ex-x}~and~\ref{tab:ex-cat}, respectively, and these cases are revisited subsequently. 
\begin{example}[ANCOVA]\label{ex:cts}
    Suppose we have $p=1$ continuous variable $x \in \mathbb{R}$ and $K=1$ categorical variable \texttt{race} with $L_R$ groups. The main-only model \eqref{reg-main} is then 
    \begin{equation}\label{reg-main-x}
        \mu^M(x, r) = \alpha_0^M + x\alpha_1^M + \beta_{r}^M
    \end{equation}
    or equivalently, \texttt{y $\sim$ x + race}, with group-specific intercepts, $\mu^M(0, r) = \alpha_0^M + \beta_r^M$ for each \texttt{race} group $r$, but a global (race-invariant) slope, ${\mu_{x}^M}' =  \alpha_1^M$. Thus, \eqref{reg-main-x} produces parallel lines with race-specific vertical shifts. By comparison, the cat-modified model \eqref{reg-cm} is 
    \begin{equation}\label{reg-cm-x}
    \mu(x, r) = \alpha_0 + x \alpha_1 + \beta_{r} + x \gamma_{r}
    \end{equation}
    or equivalently, \texttt{y $\sim$ x + race + x:race}, with group-specific intercepts $\mu(0, r) = \alpha_0 + \beta_r$ \emph{and} group-specific slopes  $\mu_x'(r) = \alpha_1 + \gamma_r$ for each \texttt{race} group $r$. 
\end{example}

\begin{example}[Two-way ANOVA]\label{ex:cat}
    Suppose we have $K=2$ categorical variables \texttt{race} and \texttt{sex} with $L_R$ and $L_S$ groups, respectively. The main-only  model \eqref{reg-main} is then
    \begin{equation}\label{reg-main-cat}
        \mu^M(r,s) = \alpha_0^M + \beta_{1,r}^M + \beta_{2,s}^M
    \end{equation}
    or equivalently, \texttt{y $\sim$ race + sex}, while the cat-modified model \eqref{reg-cm} is 
    \begin{equation}\label{reg-cm-cat}
    \mu(r, s) = \alpha_0 + \beta_{1,r} + \beta_{2,s} + \gamma_{rs}
    \end{equation}
or equivalently, \texttt{y $\sim$ race + sex + race:sex}. 
\end{example}

The central challenge is that expanding from the main-only model to the cat-modified model  alters the interpretations, estimates, and inference for the \emph{main effects}, i.e., the parameters $\{\alpha_0^M, \bm \alpha^M, \beta_{k,c_k}^M\}$ in \eqref{reg-main} or the analogous terms $\{\alpha_0, \bm \alpha, \beta_{k,c_k}\}$ in \eqref{reg-cm}. 
If these impacts are detrimental, then a quantitative modeler may be reluctant to include cat-modifiers. The key determinant is the model parametrization or \emph{identification} strategy used for the categorical variable coefficients. Specifically, both models \eqref{reg-main} and \eqref{reg-cm} require additional constraints to interpret and estimate the model parameters:  the main-only intercepts $\{\alpha_0^M, \beta_{k,c_k}^M\}$ are overparametrized, while the cat-modified intercepts $\{\alpha_0, \beta_{k,c_k}, \gamma_{k,k', c_k,c_{k'}}\}$ and slopes $\{\bm \alpha, \bm \gamma_{k,c_k}\}$ are overparametrized. The identifications determine the interpretations of all main and interaction parameters and the statistical properties of their estimators.

The most popular identification strategies are  problematic for cat-modified models. \emph{Reference group encoding} (RGE)  is the overwhelming default, including for all major software implementations of  generalized linear regression (\texttt{R}, SAS, Python,  MATLAB, Stata, etc.).  With RGE, a reference group is selected for each categorical variable $C_k$ and removed: $\beta_{k,1}^M = 0$ for all $k$ in \eqref{reg-main} and $\beta_{k,1} = 0$, $\bm \gamma_{k,1} = \bm 0$, $\gamma_{k,k', 1,c_{k'}} = \gamma_{k,k', c_k, 1} =0$ for all $(c_k,c_{k'})$ in \eqref{reg-cm} (using $1$ for each reference group without loss of generality). This is equivalent to using $L_k -1$ ``dummy variables" to encode each $C_k$. 
Despite the simplicity of RGE, the implied notion of ``main effects" in \eqref{reg-cm} significantly impedes the use of cat-modifiers.  For the main-only model, the $j$th main effect is a \emph{global} slope, $\alpha_j^M = {\mu_{x_j}^M}'$, invariant of $\bm c$; yet for the cat-modified model, RGE fixes $ \gamma_{j,k,1} =  0$ for all $j$ so that $\alpha_j = \mu_{x_j}'(\bm 1)$ is the group-specific $x_j$-effect with \emph{all} categorical variables set to their reference groups ($\bm c = \bm 1$). 


First, this main effect parametrization is statistically inefficient: SEs for $\hat\alpha_j$ are typically larger than those for $\hat\alpha_j^M$---the intersection of all reference groups is a subset of the data with a much smaller effective sample size---while the group-specific $x_j$-effect $\mu_{x_j}'(\bm c)$ may be smaller for the reference groups ($\bm c= \bm 1$) than for other groups or globally (i.e., $\alpha_j^M$). We illustrate this effect in Table~\ref{tab:ex-x}: with RGE, the main effects for the cat-modified model are attenuated and sacrifice power compared to those for the main-only model (see also Sections~\ref{sec-sims}~and~\ref{sec-app}). Similar effects occur for categorical-categorical interactions (see Table~\ref{tab:ex-cat}). Of course, these parameters refer to different functionals of $\mu(\bm x, \bm c)$; yet crucially,  they are presented \emph{identically} as ``main effects" in statistical software output and manuscript tables  \citep{Knol2009}. In fact, fewer than half of recent social science publications even reported the reference category \citep{Johfre2021}. This leads to misleading conclusions about effect magnitudes, directions, and heterogeneity \citep{Kowal2024}.


\begin{table}[h]
\centering \scriptsize

\begin{tabular}{llrr} 
\multicolumn{4}{c}{{\bf Reference group encoding (RGE)}} \\ 
Variable & Model & Estimate (SE)   & $p$-value \\
\hline
\multirow{2}{*}{\color{red} \texttt{RI}} & \color{red} Main-only &  \color{red} -0.036 (0.007) &  \color{red} $<$0.001 \\
& \color{red} Cat-modified &  \color{red} -0.022 (0.011) &   \color{red} 0.047 \\
\texttt{RI:White} & Cat-modified& ref &  ref \\
\texttt{RI:Black} & Cat-modified & -0.030 (0.015) & 0.036\\
\texttt{RI:Hispanic} & Cat-modified& 0.038 (0.028) & 0.163\\
\hline
\end{tabular} 
\begin{tabular}{llrr} 
\multicolumn{4}{c}{{\bf Abundance-based constraints (ABCs)}} \\ 
Variable & Model & Estimate (SE)   & $p$-value \\
\hline
\multirow{2}{*}{\color{blue} \texttt{RI}} & \color{blue} Main-only & \color{blue}   -0.036 (0.007) &  \color{blue}  $<$0.001 \\
&\color{blue}  Cat-modified &  \color{blue}  -0.030 (0.007) & \color{blue}   $<$0.001 \\
\texttt{RI:White} & Cat-modified & 0.008 (0.006) &  0.157 \\
\texttt{RI:Black} & Cat-modified & -0.022 (0.009) &  0.014 \\
\texttt{RI:Hispanic} & Cat-modified & 0.047 (0.025) &  0.059 \\
\hline
\end{tabular} 
\caption{\footnotesize Abbreviated regression output for the main-only model  \eqref{reg-main-x} 
and the cat-modified model  \eqref{reg-cm-x} 
for North Carolina end-of-4th-grade reading scores $y$ (see Section~\ref{sec-app}) with $x = $ racial residential isolation (\texttt{RI}) and (mother's) race. 
With RGE (left), the cat-modifier attenuates the \texttt{RI} main effect (red), inflates its SE, and suppresses race-specific \texttt{RI} effects. With ABCs (right), the \texttt{RI} main effect (blue) estimates and SEs are nearly invariant to the cat-modifier (
see Section~\ref{sec-inv}) 
and the output clearly shows that the \texttt{RI} effect is significantly negative and much worse for Black students.  
\label{tab:ex-x}
}
\end{table}

Second, RGE is inequitable: the main effect elevates a single (reference) group above the others. In Table~\ref{tab:ex-x},  White is the reference group: the main effect $\alpha_1 = \mu_x'(\texttt{White})$ is  the $x$-effect \emph{for the  White group}, while the interaction effect $\gamma_r = \mu_x'(r) - \mu_x'(\texttt{White})$ is the difference between the $x$-effect for race $r$ and that for the White group. RGE presents the reference groups (main effects) as ``normal" while all the other groups (interaction effects) are ``deviations from normal"---almost always without any explicit labeling. This framing biases the interpretations of results \citep{Chestnut2018}. 
The problem is compounded for regularized regression: when coefficient estimates are regularized toward zero, the group-specific slopes are \emph{statistically biased} toward the reference group slope ($\gamma_r \to 0$ implies $\mu_x'(r) \to \mu_x'(\texttt{White})$). Beyond the obvious inequities---including (racial, gender, etc.) bias in the estimators---this shrinkage obscures potential differences between the $x$-effects for dominant (e.g., White, Male, etc.) and nondominant groups. 
Thus, RGE undermines progress toward statistical methods that  promote equity \citep{Chen2021}.

Finally, RGE is difficult to interpret: each main effect and interaction in \eqref{reg-cm} must be traced back to \emph{all} reference groups. Consider Example~\ref{ex:cat} (and Table~\ref{tab:ex-cat}), using White and Male for the reference groups:  the main effects in the cat-modified model \eqref{reg-cm-cat} are  $\alpha_0 = \mu(\texttt{White}, \texttt{Male})$, $\beta_{1,r} = \mu(r, \texttt{Male}) - \mu(\texttt{White}, \texttt{Male})$ for each race $r$, and $\beta_{2,s} = \mu(\texttt{White}, s) - \mu(\texttt{White}, \texttt{Male})$ for each sex $s$. Each main effect is anchored at both reference groups, which then affects the interpretations of the interaction effects $\gamma_{rs}$. These challenges are accentuated with multiple categorical covariates and interactions as in \eqref{reg-cm}.

An alternative identification strategy uses \emph{sum-to-zero} constraints (STZ). STZ identifies the parameters by restricting the group-specific coefficients to sum to zero:  $\sum_{\ell=1}^{L_k} \beta_{k,\ell}^M = 0$ for all $k$ in the main-only model and $\sum_{\ell=1}^{L_k} \beta_{k,\ell} = 0$, $\sum_{\ell=1}^{L_k} \gamma_{j,k,\ell} = 0$ for $j=1,\ldots, p$, and  $\sum_{\ell=1}^{L_k} \gamma_{k,k', \ell,c_{k'}} = 0 $ and $\sum_{\ell=1}^{L_k} \gamma_{k,k', c_k,\ell} =0$ for all $(c_k, c_{k'})$ in the cat-modified model. STZ is common for ANOVA models \citep{Scheffe1999,Fujikoshi1993} and has been incorporated into regularized regression \citep{Lim2015}. STZ eliminates the need for a reference group, and thus resolves the inequities of RGE. However, STZ does not offer any special statistical properties for estimation of \eqref{reg-main} or \eqref{reg-cm}, nor does it establish a clear connection between the ``main effects" in \eqref{reg-main} and \eqref{reg-cm}. As a result, it is difficult to interpret the parameters under STZ, while the addition of cat-modifiers may have unpredictable or detrimental effects on the main effect estimates and inferences (see Section~\ref{sec-sims}). 



To address these limitations, we advocate and analyze   \emph{abundance-based constraints} (ABCs) for identification and estimation with cat-modified models. Broadly, ABCs identify parameters using group abundances (see Section~\ref{sec-abcs}). 
ABCs are sufficiently general and may be combined with ordinary least squares (OLS), maximum likelihood, and modern regularized estimation techniques. 
The benefits are summarized by ``EEI":
\begin{enumerate}
    \item {\bf \underline{E}fficiency:} ABCs permit the inclusion of cat-modifiers 1) \emph{without} altering the main effect OLS estimates and 2) either maintaining or \emph{increasing} their statistical power, under reasonable conditions; 
    \item {\bf \underline{E}quity:} ABCs do \emph{not} require a reference group and thus eliminate the alarming inequities under default approaches (RGE); and 
    \item {\bf \underline{I}nterpretability:} main effects  are identified as group-averaged  parameters, interaction effects are group-specific deviations from these group-averaged parameters, and both sets of parameters inherit meaningful notions of sparsity. 
\end{enumerate}
ABCs effectively remove the impediments to cat-modifiers, thus facilitating richer regression analyses of heterogeneous effects. Of course, ABCs cannot guarantee that cat-modifiers will be practically or statistically significant, especially when the effective sample sizes for interactions are small. Rather, with ABCs, there is virtually nothing to lose by expanding from the main-only model \eqref{reg-main} to the cat-modified model \eqref{reg-cm}; yet the potential gains include greater statistical power for the main effects and discovery of  heterogeneous effects.

We emphasize that ABCs, in various forms and by other names, have deep historical roots, but have lacked sufficient motivation to encourage widespread adoption.  \cite{Scheffe1999} and \cite{Fujikoshi1993} considered identification strategies for ANOVA models based on arbitrary group-specific weights. 
Ultimately, both adopted STZ. \cite{Sweeney1972} suggested ABCs for the simple ANCOVA \eqref{reg-main-x} so that the estimated intercept  would  equal the sample mean  (see also Theorem~\ref{thm-intercept}). However, there was no consideration of cat-modifiers or multiple covariates and no case made for any of EEI. Among nonlinear models, \cite{Park2021} and \cite{Park2023} used an ABC-like approach to \emph{avoid} estimating main effects, instead focusing exclusively on interactions to optimize individual treatment rules. More subtly, they required independence between the cat-modifier (treatment) and any modified covariates, which is not usually satisfied for observational data and \emph{not} required for our results. For the two-way ANOVA \eqref{reg-cm-cat}, \cite{Wang2024} briefly mentioned ABCs only to dismiss them, claiming  they ``complicate the interpretation of the model parameters and make it difficult to fit the model...especially when other covariates are present." Here, we forcefully argue the opposite, embodied by EEI---each of which applies 
with multiple covariates present. 
ABCs are considered concurrently in  \cite{Kowal2024}, which focuses on issues of equity with race as a single cat-modifier.  

Contrasts provide an alternative perspective on identification of  \eqref{reg-main} and \eqref{reg-cm}: dummy coding, effects coding, and weighted effects coding (WEC) are respectively linked to RGE, STZ, and ABCs. Although WEC has garnered recent support \citep{TeGrotenhuis2017a,TeGrotenhuis2017},
this work did not consider general cat-modifiers or any EEI. Instead, WEC has been mainly limited to two-way ANOVAs \eqref{reg-main-cat} or \eqref{reg-cm-cat} and only advocated in restrictive settings with 
``certain types of unbalanced data that are missing not at random'' \citep{Brehm2022} or ``categories of different sizes, and if these differences are considered relevant''  \citep{TeGrotenhuis2017}. Our case is much broader and more direct: ABCs are ideal to identify coefficients on any categorical variables and  cat-modifiers should be included in many, if not all linear models. Further, we enforce ABCs using linearly-constrained optimization, which---unlike contrasts---is well-suited for regularized regression (see Section~\ref{sec-est}).


Lastly, we acknowledge additional perspectives on the cat-modified model \eqref{reg-cm}. A widely-used approach is \emph{subgroup analysis}, which subsets the data into groups (for all combinations of $\bm c$) and then fits separate regression models (e.g., \citealp{Pocock2004}). The appeal is that it estimates group-specific slopes without the complicated interpretations of the parameters in \eqref{reg-cm} under default approaches (RGE). However, subgroup analysis does not provide  estimates or inference for the main effects, cannot incorporate regularization or borrow information across groups, and does not allow direct testing for interaction effects. Notably, ABCs offer the same (and more) benefits without any of these drawbacks. 
Related, \cite{Searle1980} advocated for marginal means. These quantities, like group-specific slopes and fitted values, will be identical for all (minimally sufficient) identification strategies under maximum likelihood estimation. Thus, it does not distinguish among identification strategies. 
However, the identification strategy remains key for 1) \emph{parameter} interpretation, estimation, and inference and 2) regularized regression and variable selection. 





The paper is organized as follows. We introduce ABCs  in Section~\ref{sec-abcs}, both for parameter identification and statistical estimation. Our main results on theory for estimation and inference with ABCs are in Section~\ref{sec-inv}. Simulation studies are in Section~\ref{sec-sims} and a real data example is in Section~\ref{sec-app}. We conclude in Section~\ref{sec-conc}. Supplementary material includes proofs of all results, details for generalized linear models, additional simulation results, and supporting data information and analysis.  
\if0\blind An \texttt{R} package \texttt{lmabc} is available. \fi

\section{Identification, estimation, and inference with ABCs} 
\label{sec-abcs}



The goal of ABCs is to enforce model identifiability while maintaining EEI. We first describe the model parametrization and interpretation, and then show how to compute regularized regression estimators and inference  using linearly-constrained optimization. The main properties for estimation and inference are in Section~\ref{sec-inv}.

\subsection{Parameter identification with ABCs}
For motivation, consider \eqref{reg-cm-x} from Example~\ref{ex:cts}: identifiability is obtained by constraining $\sum_{r=1}^{L_R} \pi_r \beta_r = 0$ and $\sum_{r=1}^{L_R} \pi_r \gamma_r = 0$ for some chosen nonnegative weights $\{\pi_r\}_{r=1}^{L_R}$. RGE sets $\pi_1 = 1$ and $\pi_r = 0$ for $r>0$, while STZ sets all $\pi_r = 1$. Instead, suppose we view each constraint as an expectation: $\mathbb{E}_{\pi}(\beta_R) = 0$ and $\mathbb{E}_{\pi}(\gamma_R) = 0$, where  $R$ is a categorical random variable with probabilities $\{\pi_r\}_{r=1}^{L_R}$. Now, 
the main $x$-effect $\alpha_1$ is equivalently the \emph{average} of the group-specific slopes: $\mathbb{E}_{\pi}\{\mu_x'(R)\} = \mathbb{E}_{\pi}(\alpha_1 + \gamma_R) =\alpha_1$. Of course, this notion of ``average"---as well as the  accompanying properties for statistical estimation (Section~\ref{sec-inv})---depends entirely on the supplied probabilities $\{\pi_r\}$. 
 ABCs adopt a natural choice: the (population or sample) abundances by group. For instance, in 
Table~\ref{tab:ex-x}, the ABCs specify
$(\pi_{\texttt{White}}, \pi_{\texttt{Black}}, \pi_{\texttt{Hisp}}) = (0.587, 0.351, 0.062)$ using  sample proportions. 

More broadly, we define ABCs for the cat-modified model \eqref{reg-cm}; special cases such as  \eqref{reg-main} simply omit the constraints for the omitted parameters. We express the ABCs in terms of $\bm{\hat \pi}$, which is the joint proportions across all categorical variables $\bm C = (C_1,\ldots, C_K)^\top$ in the data $\{\bm c_i\}_{i=1}^n$. ABCs may be defined using population or sample proportions; we prefer the latter because they are always available  and estimation properties are tractable and favorable (Section~\ref{sec-inv}). First, ABCs for categorical main effects and categorical-continuous interactions are
\begin{equation}\label{abcs-gen}
    \begin{split}
        \mathbb{E}_{\bm{\hat \pi}} (\beta_{k, C_k}) &= 0, \quad k=1,\ldots, K \\
        \mathbb{E}_{\bm{\hat \pi}} (\gamma_{j,k, C_k}) &= 0, \quad k=1,\ldots, K, \quad j=1,\ldots,p.
    \end{split}
\end{equation}
Equivalently, ABCs may be expressed marginally and with summations: $\sum_{\ell = 1}^{L_k} \hat \pi_{k, \ell} \beta_{k, \ell} = 0$, where $\{\hat \pi_{k,\ell}\}_{\ell=1}^{L_k}$ are the sample proportions for each categorical variable $C_k$, $k=1,\ldots, K$,  and then similarly for each $\{\gamma_{j,k,\ell}\}_{\ell=1}^{L_k}$. The key implication is that, while the cat-modified model \eqref{reg-cm} incorporates heterogeneity via mutual, group-specific slopes \eqref{joint-slopes},  ABCs  concisely identify each main $x_j$-effect as the average of this group-specific slope:
\begin{equation}\label{joint-slopes-avg}
\alpha_j = \mathbb{E}_{\bm{\hat \pi}} \{\mu_{x_j}'(\bm C)\}.
\end{equation}
ABCs parameterize each main $x_j$-effect  by aggregating the group-specific slopes \eqref{joint-slopes}, each weighted by its respective abundance in the data. Unlike RGE, ABCs do not elevate any single (reference) group, and thus avoid the accompanying inequities. 

The identification in \eqref{joint-slopes-avg} also guides interpretation of the group-specific slope parameters $\{\gamma_{j,k,\ell}\}_{\ell=1}^{L_k}$. Consider \eqref{reg-cm-x} from Example~\ref{ex:cts}:  $\gamma_r = \mu_x'(r) - \mathbb{E}_\pi\{\mu_x'(R)\}$ is the difference between the group-specific slope for group $r$ and the group-averaged slope. 
For the general cat-modified model \eqref{reg-cm}, isolating $\gamma_{j,k,c_k}$ requires averaging over the remaining categorical variables $\bm C_{-k}$ with joint proportions $\bm{\hat \pi}_{-k}$ with $C_k=c_k$  fixed: $\gamma_{j,k,c_k} = \mathbb{E}_{\bm{\hat \pi}_{-k}} \{\mu_{x_j}'(\bm C_{-k}, c_k)\} - \mathbb{E}_{\bm{\hat \pi}} \{\mu_{x_j}'(\bm C)\}$. 
Further simplifications are often available, since these averages only must include the categorical variables that act as cat-modifiers for $x_j$. In contrast with RGE, these group-specific coefficients are parameterized relative to a global main effect term \eqref{joint-slopes-avg}, rather than a single (reference) group (White, Male, etc.).

For categorical-categorical interactions, ABCs identify $\{\gamma_{k,k', c_k, c_{k'}}\}$ by requiring
\begin{equation}\label{abcs-cat-cat-gen}
    \begin{split}
    \mathbb{E}_{\bm{\hat \pi}_{C_k \mid C_{k'}=\ell}}(\gamma_{k,k', C_k, \ell}) &= 0, \quad \ell=1,\ldots,L_{k'}\\
    \mathbb{E}_{\bm{\hat \pi}_{C_{k'} \mid C_{k}=\ell}}(\gamma_{k,k', \ell, C_{k'}}) &= 0, \quad \ell=1,\ldots,L_{k}
    \end{split}
\end{equation}
for all $(C_k,C_{k'})$ interactions based on the \emph{conditional} proportions for categorical variable $C_k$ given that the interacting  variable $C_{k'}$ belongs to group $\ell$ (and vice versa). In conjunction,  \eqref{abcs-gen} and \eqref{abcs-cat-cat-gen} constitute ABCs. We illustrate \eqref{abcs-cat-cat-gen} using model  \eqref{reg-cm-cat} from Example~\ref{ex:cat}: 
 $\mathbb{E}_{\bm{\hat \pi}_{S \mid R=r}}(\gamma_{rS}) = 0$ for  $r=1,\ldots,L_R$ and $\mathbb{E}_{\bm{\hat \pi}_{R \mid S=s}}(\gamma_{Rs}) = 0$ for $s=1,\ldots,L_S$, where $\bm{\hat{\pi}}_{S \mid R=r} = \{\hat \pi_{rs}/\pi_r\}_{s=1}^{L_S}$ is the conditional probability for each \texttt{sex} given $\texttt{race} =r$ (similarly for $\bm{\hat{\pi}}_{R \mid S=s}$). Equivalently, \eqref{abcs-cat-cat-gen} may be expressed using the joint proportions $\bm{\hat\pi}$: for Example~\ref{ex:cat}, this is $\sum_{s=1}^{L_S} \hat\pi_{rs}\gamma_{rs} = 0$ for $r=1,\ldots,L_R$ and  $\sum_{r=1}^{L_R} \hat\pi_{rs}\gamma_{rs} = 0$ for  $s=1,\ldots,L_S$. Thus, all ABCs \eqref{abcs-gen} and \eqref{abcs-cat-cat-gen} can be written in terms of the joint probabilities $\bm{\hat \pi}$.  

%



There are several compelling reasons to identify the categorical-categorical interactions with \eqref{abcs-cat-cat-gen}.  First,  it guarantees a global, group-averaged identification for the intercept: 
\begin{lemma}\label{lemma-int}
    Under ABCs, 
    the intercept parameter in  \eqref{reg-cm} satisfies $\mathbb{E}_{\bm{\hat\pi}}\{\mu(\bm 0, \bm C)\} = \alpha_0$.
\end{lemma}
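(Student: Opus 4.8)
The plan is to evaluate $\mu(\bm x, \bm c)$ from \eqref{reg-cm} at $\bm x = \bm 0$, which annihilates the continuous main-effect term $\bm x^\top \bm\alpha$ and every categorical-continuous interaction term $\bm x^\top \bm\gamma_{k,c_k}$, leaving $\mu(\bm 0, \bm c) = \alpha_0 + \sum_{k=1}^K \beta_{k,c_k} + \sum_{k=1}^{K-1}\sum_{k'=k+1}^K \gamma_{k,k',c_k,c_{k'}}$. Then I would take the expectation with respect to $\bm C \sim \bm{\hat\pi}$ and use linearity to split it into three pieces: the constant $\alpha_0$, a sum of terms $\mathbb{E}_{\bm{\hat\pi}}(\beta_{k,C_k})$, and a sum of terms $\mathbb{E}_{\bm{\hat\pi}}(\gamma_{k,k',C_k,C_{k'}})$. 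The claim follows once the latter two sums are shown to vanish.

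For the $\beta$ terms, $\beta_{k,C_k}$ depends on $\bm C$ only through its $k$th coordinate, so $\mathbb{E}_{\bm{\hat\pi}}(\beta_{k,C_k}) = \mathbb{E}_{\bm{\hat\pi}_{C_k}}(\beta_{k,C_k})$, where $\bm{\hat\pi}_{C_k}$ is the marginal sample-proportion vector obtained from $\bm{\hat\pi}$; this is zero by the first line of the ABCs \eqref{abcs-gen}. For the pairwise $\gamma$ terms, I would apply the tower property, $\mathbb{E}_{\bm{\hat\pi}}(\gamma_{k,k',C_k,C_{k'}}) = \mathbb{E}_{\bm{\hat\pi}_{C_{k'}}}\!\big[\mathbb{E}_{\bm{\hat\pi}_{C_k\mid C_{k'}}}(\gamma_{k,k',C_k,C_{k'}})\big]$, and observe that the inner conditional expectation is precisely the left-hand side of the first line of \eqref{abcs-cat-cat-gen}, hence zero for every level $C_{k'}=\ell$; the outer expectation of an identically-zero quantity is then zero. (Equivalently, one can bypass the tower property and work directly with the joint-proportion form $\sum_{c_k,c_{k'}} \hat\pi_{k,k',c_k,c_{k'}}\,\gamma_{k,k',c_k,c_{k'}}$, grouping the double sum by the value of $c_{k'}$ and invoking the summation form of \eqref{abcs-cat-cat-gen}.) Adding the three pieces yields $\mathbb{E}_{\bm{\hat\pi}}\{\mu(\bm 0, \bm C)\} = \alpha_0$.

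This is essentially a bookkeeping argument, so I do not expect a real obstacle; the only place needing care is the reduction of joint expectations to marginal and conditional ones, i.e., checking that the weights appearing in \eqref{abcs-gen}--\eqref{abcs-cat-cat-gen} are exactly those produced by marginalizing or conditioning the joint proportions $\bm{\hat\pi}$ (which holds by construction, since the marginal and conditional sample proportions are obtained from $\bm{\hat\pi}$ by summation and normalization). The mild ``hard part'' is just phrasing the iterated-expectation step cleanly enough that it transparently invokes \eqref{abcs-cat-cat-gen} rather than obscuring it.
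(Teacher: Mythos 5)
Your proposal is correct and follows essentially the same route as the paper's proof: linearity of the expectation, marginal ABCs \eqref{abcs-gen} to kill the $\beta$ terms, and the inner (conditional) sum of \eqref{abcs-cat-cat-gen} to kill each pairwise $\gamma$ term. The paper writes this out only for the two-way ANOVA special case \eqref{reg-cm-cat} and remarks that the general case is identical, whereas you carry out the general bookkeeping directly, but there is no substantive difference.
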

ABCs produce clean expressions and simple interpretations for these main effects: while cat-modifiers induce group-specific intercepts and slopes, ABCs identify suitably global, group-averaged quantities $\alpha_0 = \mathbb{E}_{\bm{\hat\pi}}\{\mu(\bm 0, \bm C)\} $ and  $\alpha_j = \mathbb{E}_{\bm{\hat \pi}} \{\mu_{x_j}'(\bm C)\}$ for $j=1,\ldots,p$. This cannot occur for RGE and only occurs for STZ if the probabilities $\bm{\hat\pi}$ are exactly uniform. 
Second, \eqref{abcs-cat-cat-gen} orthogonalizes the main and interaction categorical effects: in fact, the OLS estimates of the main categorical effects $\{\beta_{k, c_k}\}$ are identical between models that do \eqref{reg-cm-cat} or do not \eqref{reg-main-cat} include cat-modifiers (Theorem~\ref{thm-cat-cat}). Finally, \eqref{abcs-cat-cat-gen} offers the interesting result that, if we were to instead combine the interacted covariates $(C_k, C_{k'})$ into a single categorical variable (e.g., race-sex) with $L_kL_{k'}$ levels, the main effect ABCs \eqref{abcs-gen} would be satisfied for this new categorical variable. Of course, doing so would sacrifice the ability to estimate the main effects $\{\beta_{k, c_k}\}$, but this internal consistency is reassuring. 

For implementation, it is sufficient to enforce $L_k + L_{k'} -1$ of the $L_k + L_{k'}$ constraints in \eqref{abcs-cat-cat-gen}. The choice of omitted constraint is arbitrary, since all constraints \eqref{abcs-cat-cat-gen} hold regardless:
\begin{lemma} \label{lemma-cat-cat}
    Suppose we apply \eqref{abcs-cat-cat-gen} to all but one interaction term: 
    $\mathbb{E}_{\bm{\hat \pi}_{C_k \mid C_{k'}=\ell}}(\gamma_{k,k', C_k, \ell}) = 0$ for $\ell=1,\ldots,L_{k'}$ and 
    $\mathbb{E}_{\bm{\hat \pi}_{C_{k'} \mid C_{k}=\ell}}(\gamma_{k,k', \ell, C_{k'}}) = 0$ for  $\ell=2,\ldots,L_{k}$. Then the same constraint holds for $\ell=1$: $\mathbb{E}_{\bm{\hat \pi}_{C_{k'} \mid C_{k}=1}}(\gamma_{k,k', 1, C_{k'}}) = 0$. 
\end{lemma}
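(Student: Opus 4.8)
The plan is to reduce both families of constraints in \eqref{abcs-cat-cat-gen} to statements about a single two-way array of numbers and then close the argument with a finite double-counting (Fubini) identity. Fix the interacting pair $(C_k,C_{k'})$ and abbreviate $\gamma_{rs} := \gamma_{k,k',r,s}$ for $r=1,\dots,L_k$, $s=1,\dots,L_{k'}$, with $\hat\pi_{rs}$ the joint sample proportion of $\{C_k=r,\,C_{k'}=s\}$ and marginals $\hat\pi_{\ell\bullet}:=\sum_s\hat\pi_{\ell s}$, $\hat\pi_{\bullet\ell}:=\sum_r\hat\pi_{r\ell}$. As noted in the text just before the statement, each conditional-expectation constraint is a positive multiple of an unnormalized sum, $\mathbb{E}_{\bm{\hat\pi}_{C_k\mid C_{k'}=\ell}}(\gamma_{k,k',C_k,\ell})=\hat\pi_{\bullet\ell}^{-1}\sum_{r}\hat\pi_{r\ell}\gamma_{r\ell}$ and $\mathbb{E}_{\bm{\hat\pi}_{C_{k'}\mid C_k=\ell}}(\gamma_{k,k',\ell,C_{k'})}=\hat\pi_{\ell\bullet}^{-1}\sum_{s}\hat\pi_{\ell s}\gamma_{\ell s}$, so the hypotheses are equivalent to: $\sum_{r}\hat\pi_{rs}\gamma_{rs}=0$ for every $s=1,\dots,L_{k'}$ (every ``column'' sum of the array $M:=(\hat\pi_{rs}\gamma_{rs})$ vanishes) together with $\sum_{s}\hat\pi_{rs}\gamma_{rs}=0$ for every $r=2,\dots,L_k$ (every ``row'' sum except the first vanishes); the target is $\sum_{s}\hat\pi_{1s}\gamma_{1s}=0$, since dividing by $\hat\pi_{1\bullet}>0$ recovers $\mathbb{E}_{\bm{\hat\pi}_{C_{k'}\mid C_k=1}}(\gamma_{k,k',1,C_{k'}})=0$.

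The key step is then to evaluate the grand total $T:=\sum_{r=1}^{L_k}\sum_{s=1}^{L_{k'}}\hat\pi_{rs}\gamma_{rs}$ in two ways. Summing the column-sum constraints over $s$ gives $T=0$. Regrouping the same double sum by $r$ and using that rows $2,\dots,L_k$ contribute zero leaves $T=\sum_{s}\hat\pi_{1s}\gamma_{1s}$. Equating the two expressions yields $\sum_{s}\hat\pi_{1s}\gamma_{1s}=0$, and dividing by $\hat\pi_{1\bullet}$ completes the proof. The same computation makes transparent why exactly one of the $L_k+L_{k'}$ constraints in \eqref{abcs-cat-cat-gen} is redundant: the sum of the column-sum constraints and the sum of the row-sum constraints are the identical functional $T$, so the linear system has a one-dimensional dependency and any single constraint may be dropped (the lemma happens to drop the $\ell=1$ row constraint, but the choice is arbitrary).

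I do not anticipate a genuine obstacle here; the argument is a one-line double-counting identity once the constraints are written in joint-proportion form. The only points that merit a sentence of care are (i) the equivalence between the normalized conditional-expectation form of \eqref{abcs-cat-cat-gen} and the unnormalized joint-proportion form, which is immediate as long as the relevant marginals are positive, and (ii) the implicit standing assumption that level $1$ of $C_k$ actually occurs in the data, i.e.\ $\hat\pi_{1\bullet}>0$ (otherwise that level would not be in the model at all). If one wishes to sidestep any positivity assumption, the conclusion can simply be stated in the unnormalized form $\sum_{s}\hat\pi_{1s}\gamma_{1s}=0$, which holds unconditionally from the double-counting step.
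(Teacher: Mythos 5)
Your proposal is correct and is essentially the paper's argument: both reduce the conditional-expectation constraints to vanishing row/column sums of the array $\hat\pi_{rs}\gamma_{rs}$ and close via an interchange of the double sum (the paper substitutes the column constraints into the first row sum and swaps the order of summation, which is the same double-counting identity you phrase as evaluating the grand total two ways). Your added remarks on normalization and the one-dimensional redundancy are accurate but not needed beyond what the paper does.
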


Finally, we emphasize that ABCs \eqref{abcs-gen} and \eqref{abcs-cat-cat-gen} are designed for parameter identification in the general cat-modified model \eqref{reg-cm}, which may be featured in generalized linear models (see the supplementary material, Section~\ref{sec-a-glm}) and includes numerous important special cases, such as  main-only models \eqref{reg-main}, ANCOVA models (Example~\ref{ex:cts}), and two-way ANOVA models (Example~\ref{ex:cat}), among many others.

\subsection{Estimation, inference, and sparsity with ABCs}\label{sec-est}
ABCs are linear constraints and thus readily compatible with regularized regression. First, we consolidate the cat-modified model \eqref{reg-cm} into a traditional regression structure: $ (\mu(\bm x_1, \bm c_1), \ldots, \mu(\bm x_n, \bm c_n))^\top = \bm X\bm\theta$, where  $\bm X$ is the $n \times P$ matrix that includes an intercept, all (centered) continuous covariates, indicator variables for all levels of each categorical variable, and all specified interactions, and $\bm \theta$ include all unknown regression coefficients. In the presence of at least one categorical covariate, $\bm X$ is rank deficient, say $\mbox{rank}(\bm X) = P-m$. We represent all ABCs \eqref{abcs-gen} and \eqref{abcs-cat-cat-gen} generically as $\bm A_{\bm{\hat\pi}} \bm \theta = \bm 0$, where is the $m \times P$ matrix of constraints with $\mbox{rank}( \bm A_{\bm{\hat\pi}}) = m$. 
Then, for a loss function $\mathcal{L}(\bm y, \bm X \bm \theta)$ for data $\bm y = (y_1,\ldots,y_n)^\top$ and a coefficient penalty $\mathcal{P}(\bm\theta)$, we aim to solve 
\begin{equation}
    \label{est-pen}
        \bm{\hat \theta}(\lambda) = \arg\min_{\bm \theta} 
    \mathcal{L}(\bm y, \bm X \bm \theta)  + \lambda\mathcal{P}(\bm\theta)  \quad \mbox{subject to }  \bm A_{\bm{\hat\pi}} \bm \theta = \bm 0
\end{equation}
and $\lambda \ge 0$ is a tuning parameter. We primarily focus on squared error loss  $\mathcal{L}(\bm y, \bm X \bm \theta) = \Vert \bm y -  \bm X \bm \theta \Vert^2$ and either unpenalized estimation ($\lambda=0$) or (group) lasso and ridge regression with $\lambda$ selected by cross-validation. 
One way to solve \eqref{est-pen} is to reparametrize to an unconstrained space with only $P-m$ parameters. 
Let $\bm A_{\bm{\hat\pi}}^\top = \bm Q \bm R$ be the QR-decomposition 
with  columnwise partitioning of the $P \times P$ orthogonal matrix $\bm Q = (\bm Q_{1:m} : \bm Q_{\bm{\hat\pi}} ) $ and similarly, $\bm R^\top = (\bm R_{1:m, 1:m} : \bm 0)$. 
By construction, $\bm A_{\bm{\hat\pi}} \bm Q_{\bm{\hat\pi}} = \bm 0$, so that for any $(P-m)$-dimensional vector $\bm \theta_Q$, the vector $\bm \theta = \bm Q_{\bm{\hat\pi}} \bm \theta_Q$ satisfies  $ \bm A_{\bm{\hat\pi}} \bm \theta = \bm 0$. Then, letting $\bm X_Q \coloneqq \bm X  \bm Q_{\bm{\hat\pi}}$,  \eqref{est-pen} is equivalently
\begin{equation}
    \label{pen-abc}
    \bm{\hat \theta}(\lambda) = \bm Q_{\bm{\hat\pi}} \bm{\hat \theta}_Q(\lambda), \quad \bm{\hat \theta}_Q(\lambda) = \arg\min_{\bm \theta} 
    \mathcal{L}(\bm y, \bm X_Q \bm \theta_Q) + \lambda \mathcal{P}(\bm Q_{\bm{\hat\pi}} \bm{\theta}_Q).
\end{equation}
Regularized regression with ABCs simply requires 1) computing the QR decomposition of $\bm A_{\bm{\hat\pi}}^\top$ and 2)  solving an unconstrained regularized regression problem. 
When $\mathcal{L}$ is a negative log-likelihood,  $\bm{\hat \theta}_Q \coloneqq \bm{\hat \theta}_Q(0)$ is a maximum likelihood estimator (MLE) and so is  $\bm{\hat \theta}$. Hence, usual properties for MLEs apply to estimators with ABCs. Under standard regularity conditions,  \eqref{pen-abc} satisfies 
    $\sqrt{n}(\bm{\hat \theta} - \bm \theta) \stackrel{d}{\rightarrow} N_{P}(\bm 0, \bm Q_{\bm{\pi}}\mathcal{I}(\bm \theta_Q)^{-1}\bm Q_{\bm{\pi}}^\top) 
$ 
where $\mathcal{I}$ is the Fisher information associated with $\bm{\theta}_Q$ and $\bm\pi$ is the joint population probabilities for the categorical covariates $\bm C$. Thus, it is straightforward to construct confidence intervals and conduct hypothesis tests for the coefficients $\bm \theta$.  When the model errors $y_i - \mu(\bm x_i, \bm c_i)$ are Gaussian, uncorrelated, and homoskedastic, 
the OLS estimator under ABCs satisfies 
$\bm{\hat \theta} \sim N_{P}\{\bm{\theta}, \sigma^2 \bm Q_{\bm{\hat\pi}}(\bm X_Q ^\top\bm X_Q)^{-1}\bm Q_{\bm{\hat\pi}}^\top \}$, even in finite samples. Although this distribution does not account for the sampling variability in $\bm{\hat\pi}$, this is typically quite small relative to the variability in $\bm{\hat \theta}$. Our empirical analyses suggest that no further adjustments are needed (see Section~\ref{sec-sims}). 


Finally, we emphasize the unique challenges of regularization and selection  for cat-modified models. Selection of interaction effects has primarily focused on high-dimensional, continuous-continuous interactions \citep{Bien2013,Lim2015}. For cat-modified models with RGE,  coefficient  shrinkage introduces (racial, gender, etc.) biases: $\gamma_{j,k,c_k} \to 0$ implies $\mu_{x_j}'(\bm c) \to \mu_{x_j}'(\bm 1)$, so group-specific effects are pulled toward those for the reference (White, Male, etc.) groups.  With ABCs, no such biases occur: $\gamma_{j,k,c_k} \to 0$ implies $\mu_{x_j}'(\bm c) \to \mathbb{E}_{\bm{\hat \pi}} \{\mu_{x_j}'(\bm C)\}$ collapses to the group-averaged $x_j$-effect, which produces a reasonable notion of parameter sparsity.  

When $\lambda > 0$, it is possible to omit constraints and still obtain unique estimators. However, these estimators do not target identifiable parameters and thus are difficult to interpret. For lasso estimation, \cite{Kowal2024} observes that such ``overparametrized" estimation tends to reproduce RGE by implicitly selecting a reference group, and thus inherits the same limitations as RGE.



\section{Theory for estimation and inference with ABCs}\label{sec-inv}
A central nuisance with interactions is that they change the main effect estimates and SEs. Here, we show that ABCs circumvent these challenges for cat-modifiers. The main point is that, with ABCs, the \emph{addition} of cat-modifiers is either 1) harmless, since it has little to no impact on main effects estimates and inference, or 2) beneficial, since it can reveal heterogeneity and improve statistical power for the main effects.



\subsection{Estimation invariance with ABCs}\label{sec-est-inv}
We establish conditions under which main effect OLS estimates are \emph{invariant} to the addition of cat-modifiers under ABCs. These results make  minimal  assumptions about the true data-generating process and do not apply for other identifications (RGE, STZ, etc.). 

First, consider OLS estimation of the intercept. For an enormous class of linear models---with arbitrarily many continuous covariates, categorical covariates, and categorical-categorical interactions---ABCs ensure that the OLS-estimated intercept is always \emph{exactly} equal to the sample mean, $\hat \alpha_0 = \bar y \coloneqq n^{-1}\sum_{i=1}^n y_i$.
\begin{theorem}\label{thm-intercept}
For \emph{any} linear model of the form \eqref{reg-cm} with 1) centered continuous covariates ($\bm{\bar x} = \bm 0$),  2) no categorical-continuous interactions (all $\bm\gamma_{k,c_k} = \bm 0$), and 3) ABCs \eqref{abcs-gen} and \eqref{abcs-cat-cat-gen}, 
the OLS estimate of the intercept is $\hat \alpha_0 = \bar y$. 
\end{theorem}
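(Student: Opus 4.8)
The plan is to show that the OLS-fitted values have sample mean equal both to $\hat\alpha_0$ and to $\bar y$, and then read off the identity. Write the constrained OLS estimator $\hat{\bm\theta}$ as a minimizer of $\Vert\bm y-\bm X\bm\theta\Vert^2$ subject to $\bm A_{\bm{\hat\pi}}\bm\theta=\bm 0$, and set $\hat{\bm\epsilon}=\bm y-\bm X\hat{\bm\theta}$. The first step is to prove $\bm 1^\top\hat{\bm\epsilon}=0$. The stationarity condition for this equality-constrained quadratic program gives $\bm X^\top\hat{\bm\epsilon}=\bm A_{\bm{\hat\pi}}^\top\hat{\bm\nu}$ for some multiplier $\hat{\bm\nu}$. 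The intercept enters $\bm X$ as the column $\bm 1$ but enters no ABC at all — constraints \eqref{abcs-gen} and \eqref{abcs-cat-cat-gen} involve only the $\beta$'s and $\gamma$'s — so the row of $\bm X^\top$ indexed by $\alpha_0$ is $\bm 1^\top$ while the corresponding row of $\bm A_{\bm{\hat\pi}}^\top$ is $\bm 0^\top$; reading off that coordinate yields $\bm 1^\top\hat{\bm\epsilon}=0$, i.e. $n^{-1}\sum_{i=1}^n\hat\mu(\bm x_i,\bm c_i)=\bar y$. (Equivalently, the feasible coefficient vector with $\alpha_0=1$ and all other entries zero satisfies every ABC and maps to $\bm X\bm\theta=\bm 1$, so $\bm 1$ lies in the estimable column space $\mbox{col}(\bm X_Q)$, to which the fitted residual is orthogonal.)

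The second step evaluates $n^{-1}\sum_i\hat\mu(\bm x_i,\bm c_i)$ using the model form. Since there are no categorical–continuous interactions, $\hat\mu(\bm x_i,\bm c_i)=\bm x_i^\top\hat{\bm\alpha}+\hat\mu(\bm 0,\bm c_i)$, and averaging over $i$ gives $\bm{\bar x}^\top\hat{\bm\alpha}+n^{-1}\sum_i\hat\mu(\bm 0,\bm c_i)$, whose first term vanishes because $\bm{\bar x}=\bm 0$. For any function $g$ of the categorical covariates, $n^{-1}\sum_i g(\bm c_i)=\mathbb{E}_{\bm{\hat\pi}}\{g(\bm C)\}$ by the very definition of $\bm{\hat\pi}$ as the joint empirical distribution of $\{\bm c_i\}$; hence $n^{-1}\sum_i\hat\mu(\bm 0,\bm c_i)=\mathbb{E}_{\bm{\hat\pi}}\{\hat\mu(\bm 0,\bm C)\}$. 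Because $\hat{\bm\theta}$ obeys the ABCs and Lemma~\ref{lemma-int} holds for every coefficient vector that does, $\mathbb{E}_{\bm{\hat\pi}}\{\hat\mu(\bm 0,\bm C)\}=\hat\alpha_0$. Combining the two steps, $\hat\alpha_0=n^{-1}\sum_i\hat\mu(\bm x_i,\bm c_i)=\bar y$.

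For a self-contained version that does not quote Lemma~\ref{lemma-int}, one instead expands $\mathbb{E}_{\bm{\hat\pi}}\{\hat\mu(\bm 0,\bm C)\}=\hat\alpha_0+\sum_k\mathbb{E}_{\bm{\hat\pi}}(\hat\beta_{k,C_k})+\sum_{k<k'}\mathbb{E}_{\bm{\hat\pi}}(\hat\gamma_{k,k',C_k,C_{k'}})$ and checks that each categorical main-effect term and each categorical–categorical interaction term has zero joint $\bm{\hat\pi}$-expectation: the main-effect terms vanish directly by \eqref{abcs-gen}, and for the interaction terms one multiplies the conditional constraint $\mathbb{E}_{\bm{\hat\pi}_{C_k\mid C_{k'}=\ell}}(\hat\gamma_{k,k',C_k,\ell})=0$ by the marginal proportion $\hat\pi_{k',\ell}$ and sums over $\ell$, so that conditional-times-marginal collapses to joint proportions and gives $\mathbb{E}_{\bm{\hat\pi}}(\hat\gamma_{k,k',C_k,C_{k'}})=0$.

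I expect the only delicate point to be the first step: one must be careful that the \emph{constrained} residual — not the ordinary OLS residual — is orthogonal to $\bm 1$, and this rests entirely on the structural fact that no ABC touches $\alpha_0$, so $\bm 1$ stays in the estimable subspace after the constraints are imposed. Everything else is bookkeeping with the empirical distribution $\bm{\hat\pi}$ and the linearity of the ABC expectations. A minor standing regularity remark: when $\bm X_Q=\bm X\bm Q_{\bm{\hat\pi}}$ has full column rank $\hat{\bm\theta}$ is unique, but the identity $\hat\alpha_0=\bar y$ in fact holds for any minimizer, since both steps above use only the normal/KKT equations.
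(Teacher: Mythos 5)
Your proposal is correct and follows essentially the same route as the paper: both arguments identify $\hat\alpha_0$ with the sample mean of the fitted values and then expand that mean, using $\bm{\bar x}=\bm 0$ and the ABCs (as in Lemma~\ref{lemma-int}) to annihilate every term except $\hat\alpha_0$. The one difference is that the paper simply asserts that the fitted values average to $\bar y$ "for ABCs, RGE, STZ, etc.," whereas you supply the justification—via the KKT condition $\bm X^\top\hat{\bm\epsilon}=\bm A_{\bm{\hat\pi}}^\top\hat{\bm\nu}$ and the observation that no constraint touches $\alpha_0$, so $\bm 1$ remains in the constrained column space—which is a worthwhile tightening of a step the paper leaves implicit.
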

Simple models such as \texttt{y $\sim$ race} yield the same intercept estimate as more complex models like \texttt{y $\sim$ x$_1$ + \ldots + x$_p$ + race + sex + race:sex}.  This reaffirms the global interpretation of the intercept: under ABCs, $\hat \alpha_0$ targets the global intercept $\alpha_0 = \mathbb{E}_{\bm{\hat \pi}} \{\mu(\bm 0, \bm C)\}$ (Lemma~\ref{lemma-int}). Of course, $\bar y$ is a good estimator for the marginal expectation of $Y$, so  $\hat \alpha_0$ is appropriately global---even in the presence of categorical variables and their interactions. For models with at least one categorical variable, this result cannot occur for any other identification (RGE, STZ, etc.).  Theorem~\ref{thm-intercept} extends   \cite{Sweeney1972} to allow for categorical-categorical interactions and arbitrarily many continuous covariates. 

Next, consider the impact of adding categorical-categorical interactions on estimation of the main effects. For concreteness, we consider a two-way ANOVA (Example~\ref{ex:cat}).
\begin{theorem}\label{thm-cat-cat}
    Under ABCs \eqref{abcs-gen} and \eqref{abcs-cat-cat-gen}, the OLS estimates of \emph{all} main effects are identical under the main-only model \eqref{reg-main-cat} and the cat-modified model \eqref{reg-cm-cat}: $\hat \alpha_0^M = \hat \alpha_0$, $\hat \beta_{1,r}^M = \hat \beta_{1,r}$ for all $r=1,\ldots,L_R$, and $\hat \beta_{2,s}^M = \hat \beta_{2,s}$ for all $s=1,\ldots,L_S$.
\end{theorem}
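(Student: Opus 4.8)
The plan is to leverage the fact that the cat-modified two-way ANOVA \eqref{reg-cm-cat} is \emph{saturated}: the columns of its design matrix (an intercept, the \texttt{race} indicators, the \texttt{sex} indicators, and the \texttt{race}$\times$\texttt{sex} indicators) span the full space of cell-indicator vectors, so the OLS fitted value at cell $(r,s)$ is the cell mean $\bar y_{rs}$ and the ABC decomposition reads $\hat\alpha_0 + \hat\beta_{1,r} + \hat\beta_{2,s} + \hat\gamma_{rs} = \bar y_{rs}$. Writing $n_{rs}$ for the cell counts and $n_r,n_s$ for the margins, the ABCs \eqref{abcs-gen}--\eqref{abcs-cat-cat-gen} in summation form become $\sum_r n_r\hat\beta_{1,r}=0$, $\sum_s n_s\hat\beta_{2,s}=0$, $\sum_s n_{rs}\hat\gamma_{rs}=0$ for each $r$, and $\sum_r n_{rs}\hat\gamma_{rs}=0$ for each $s$. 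The strategy is to show that the \emph{additive part} $(r,s)\mapsto\hat\alpha_0+\hat\beta_{1,r}+\hat\beta_{2,s}$ of this decomposition is exactly the main-only OLS fitted function, and then to invoke uniqueness of the ABC decomposition of an additive function.

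First I would verify the projection claim directly. Let $\hat r_i \coloneqq y_i - \hat\alpha_0 - \hat\beta_{1,r_i} - \hat\beta_{2,s_i}$. Because the additive part lies in the column space of the main-only design $\bm X^M$ of \eqref{reg-main-cat}, it suffices to check that $\hat{\bm r}$ is orthogonal to every column of $\bm X^M$, i.e. $\sum_i \hat r_i = 0$, $\sum_{i:\,r_i=r}\hat r_i = 0$ for each $r$, and $\sum_{i:\,s_i=s}\hat r_i = 0$ for each $s$. Each such sum collapses to a cellwise sum: e.g.\ $\sum_{i:\,r_i=r}\hat r_i = \sum_s n_{rs}(\bar y_{rs} - \hat\alpha_0 - \hat\beta_{1,r} - \hat\beta_{2,s}) = \sum_s n_{rs}\hat\gamma_{rs} = 0$, using the saturated decomposition and then the ABC interaction constraint; the \texttt{sex} and intercept identities follow the same way (the intercept one also using the margin constraints). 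Hence the additive part is $P_{\mathrm{col}(\bm X^M)}\bm y$, the main-only OLS fitted function.

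It then remains to note that two ABC-constrained additive representations of the same function must coincide. If $\hat\alpha_0 + \hat\beta_{1,r} + \hat\beta_{2,s} = \hat\alpha_0^M + \hat\beta_{1,r}^M + \hat\beta_{2,s}^M$ for all observed $(r,s)$, subtracting shows $\hat\beta_{1,r} - \hat\beta_{1,r}^M$ is constant in $r$ and $\hat\beta_{2,s} - \hat\beta_{2,s}^M$ is constant in $s$; the weighted-zero constraints $\sum_r n_r\hat\beta_{1,r} = \sum_r n_r\hat\beta_{1,r}^M = 0$ (and similarly for \texttt{sex}) force both constants to be $0$, whence also $\hat\alpha_0 = \hat\alpha_0^M$. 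This yields $\hat\beta_{1,r} = \hat\beta_{1,r}^M$, $\hat\beta_{2,s} = \hat\beta_{2,s}^M$, and $\hat\alpha_0 = \hat\alpha_0^M$ (consistent with Theorem~\ref{thm-intercept}, which already pins both at $\bar y$).

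The main obstacle is that the ``obvious'' route---averaging the saturated decomposition over \texttt{sex} conditionally on \texttt{race}$=r$ to read off $\hat\beta_{1,r}$---does not work: it leaves a residual term $\sum_s (n_{rs}/n_r)\hat\beta_{2,s}$ that need not vanish when the cell counts are not proportional, and symmetrically for $\hat\beta_{2,s}$, so the two main-effect vectors stay coupled. Circumventing this is exactly why one matches the whole additive function rather than its coefficients, and why the particular ABC \emph{weights}---the subsample sizes $n_{rs}$, $n_r$, $n_s$, which are precisely the inner products of the cell-indicator basis with the main-only design columns---are what make the interaction constraints annihilate the right quantities. A minor point to dispatch is degeneracy: I will assume every $(r,s)$ cell is observed (equivalently, that $\bm X$ and $\bm X^M$ have the ranks used in Section~\ref{sec-est}) so that the constrained OLS problems in \eqref{est-pen} have unique solutions; a connected-but-incomplete design can be handled by the same orthogonality argument plus a standard graph-connectivity step.
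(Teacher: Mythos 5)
Your proof is correct, but it is organized differently from the paper's. The paper argues at the level of the design matrices: it forms the cross-products $\bm 1^\top \bm Z_{12}$, $\bm Z_1^\top \bm Z_{12}$, $\bm Z_2^\top \bm Z_{12}$ of the main-effect columns with the interaction-indicator columns, observes that (after scaling by $n^{-1}$) these rows are exactly the rows of the constraint matrix $\bm A_{\bm{\hat\pi}}$ encoding \eqref{abcs-cat-cat-gen}, and concludes that the ABC-reparametrized interaction block $\bm Z_{12}\bm Q_{\bm{\hat\pi}}$ is orthogonal to the main-only design; invariance of the main-effect coefficients then follows from the Frisch--Waugh--Lovell machinery set up at the start of the appendix. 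You instead work at the level of the fitted function: you use saturation to write the cat-modified fit as the cell means, peel off the additive part, verify via the residual orthogonality check that this additive part \emph{is} the main-only OLS projection, and finish with a uniqueness argument for ABC-constrained additive representations. The two arguments pivot on the same computation --- your identity $\sum_s n_{rs}\hat\gamma_{rs}=0$ is precisely the statement that the rows of $\bm Z_1^\top\bm Z_{12}$ annihilate the constrained interaction coefficients, which you yourself note when you remark that the ABC weights are the inner products of the cell indicators with the main-only columns --- but the packaging differs in what each buys. The paper's route needs neither saturation nor a uniqueness step, and because it is phrased through the general condition $\bm X_*^\top(\bm X_1-\bm H_0\bm X_1)\bm Q_{\bm{\hat\pi}}=\bm 0$ it extends immediately to models carrying an additional common block (e.g., continuous covariates alongside the two factors); your route is more self-contained and elementary for the pure two-way ANOVA, makes the ``additive part of the cell-mean table'' interpretation explicit, and correctly isolates where each constraint is used (the interaction constraints for the projection identity, the margin constraints only for uniqueness). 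Two trivial remarks: the intercept orthogonality $\sum_i\hat r_i=0$ already follows by summing your race-conditional identities over $r$, so your parenthetical appeal to the margin constraints there is unnecessary; and your assumption that every $(r,s)$ cell is observed matches the paper's own standing assumption that empty categories are not permitted, so it costs you no generality.
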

This estimation invariance applies to \emph{all} ($1 + L_R + L_S$) main effects in \eqref{reg-cm-cat}. Implicitly, we assume that the OLS estimates exist and are unique (i.e., empty categories are not permitted), but otherwise there are no requirements on the data-generating process. In particular, there are no assumptions of independence or uncorrelateness between the categorical covariates and no assumptions about their relationship with $Y$.  ABCs deliver a natural interpretation of the categorical variable coefficients: the main effects are deviations from the global mean (Theorem~\ref{thm-intercept}), while the interaction effects are deviations from the main effects in the main-only model (Theorem~\ref{thm-cat-cat}).  This result validates our choice of ABCs \eqref{abcs-gen} and especially \eqref{abcs-cat-cat-gen}. Again, such invariance does not occur for other identifications.


Finally, we consider the addition of categorical-continuous interactions to main-only models. For clarity, we focus on a single categorical variable ($K=1$) with levels $r=1,\ldots, L_R$, but showcase these principles empirically with multiple categorical variables (Section~\ref{sec-app}).  Following Example~\ref{ex:cts}, we begin with a single continuous covariate $x$ ($p=1$). Let $\hat \sigma_{x[r]}^2 \coloneqq n_r^{-1} s_{x[r]}^2 - \bar x_r^2$ be the (scaled) sample variance of $\{x_i\}_{i=1}^n$ for each group $r$, where $n_r = n\hat\pi_r$, $s_{x[r]}^2 = \sum_{r_i=r} x_i^2$ and $\bar x_r = n_r^{-1} \sum_{r_i=r} x_i$. If the continuous covariate has  the same scale for each group, then the OLS estimate of the coefficient on $x$ is the same whether or not the cat-modifier is included.
\begin{theorem}\label{thm-cts}
    Under ABCs \eqref{abcs-gen} and the equal-variance condition
    \begin{equation} \label{eq-v}
    \hat \sigma_{x[r]}^2 = \hat \sigma_{x[1]}^2 \quad \mbox{for all } r=1,\ldots,L_R,
\end{equation}
    the OLS estimates for the main-only model \eqref{reg-main-x} and the cat-modified model \eqref{reg-cm-x} satisfy  \emph{estimation invariance}, $\hat \alpha_1^M = \hat \alpha_1.$
\end{theorem}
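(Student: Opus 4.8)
The plan is to reduce each OLS fit to an explicit weighted average of the \emph{within-group} simple-regression slopes, and then observe that condition~\eqref{eq-v} is precisely what makes the two sets of weights proportional. For a single categorical variable with no other covariates, the cat-modified model~\eqref{reg-cm-x} is saturated in the groups: reparametrizing $a_r = \alpha_0 + \beta_r$ and $b_r = \alpha_1 + \gamma_r$ is a bijection from the ABC-constrained parameters onto unconstrained $(\{a_r\},\{b_r\})$, and the residual sum of squares splits as $\sum_{r}\sum_{r_i=r}(y_i - a_r - b_r x_i)^2$. Hence the constrained OLS fit within group $r$ is just the within-group simple linear regression, with slope $\hat b_r = S_{xy}^{(r)}/S_{xx}^{(r)}$, where $S_{xx}^{(r)} = \sum_{r_i=r}(x_i - \bar x_r)^2$ and $S_{xy}^{(r)} = \sum_{r_i=r}(x_i - \bar x_r)(y_i - \bar y_r)$. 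These quantities, and thus $\hat b_r = \hat\alpha_1 + \hat\gamma_r$, are invariant to the identification; applying the ABC $\sum_r \hat\pi_r \hat\gamma_r = 0$ (equivalently \eqref{joint-slopes-avg} at the estimator level) gives
\[
\hat\alpha_1 = \sum_{r=1}^{L_R} \hat\pi_r\, \hat b_r = \frac{\sum_r n_r\, \hat b_r}{\sum_r n_r}.
\]

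Next I would handle the main-only model~\eqref{reg-main-x} the same way. Writing $a_r = \alpha_0^M + \beta_r^M$ and profiling out $\{a_r\}$ for a fixed common slope gives $\hat a_r = \bar y_r - \alpha_1^M \bar x_r$, so $\alpha_1^M$ minimizes $\sum_r \sum_{r_i=r}\big((y_i - \bar y_r) - \alpha_1^M (x_i - \bar x_r)\big)^2$, a one-parameter least squares problem solved by
\[
\hat\alpha_1^M = \frac{\sum_r S_{xy}^{(r)}}{\sum_r S_{xx}^{(r)}} = \frac{\sum_r S_{xx}^{(r)}\, \hat b_r}{\sum_r S_{xx}^{(r)}}.
\]
So $\hat\alpha_1^M$ is a weighted average of the same within-group slopes $\hat b_r$, but with weights $S_{xx}^{(r)}$ in place of the abundances $n_r$ (equivalently $\hat\pi_r$).

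Finally I would match the weights. Directly, $S_{xx}^{(r)}/n_r = n_r^{-1} s_{x[r]}^2 - \bar x_r^2 = \hat\sigma_{x[r]}^2$ in the paper's notation, so the equal-variance condition~\eqref{eq-v} states exactly that $S_{xx}^{(r)} = n_r\, \hat\sigma_{x[1]}^2$ for every $r$. Substituting into the expression for $\hat\alpha_1^M$, the common positive factor $\hat\sigma_{x[1]}^2$ cancels between numerator and denominator, leaving $\hat\alpha_1^M = \sum_r n_r \hat b_r / \sum_r n_r = \hat\alpha_1$, which is the claim.

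The two profiling computations are routine; the step that needs care is the saturated-in-groups decomposition for the cat-modified model---justifying that global ABC-constrained OLS restricted to group $r$ coincides with the within-group simple regression, so that $\hat b_r$ is genuinely the within-group slope---together with the implicit non-degeneracy assumption $\hat\sigma_{x[r]}^2 > 0$ for all $r$, which is what makes every $\hat b_r$, and indeed both sets of OLS estimates, exist and be unique. The conceptual crux, and the reason no competing identification enjoys this invariance, is that only ABCs aggregate the $\hat b_r$ with the abundance weights $\hat\pi_r \propto n_r$ (RGE returns the reference-group slope $\hat b_1$, STZ the unweighted mean); forcing those weights to agree with the efficiency weights $S_{xx}^{(r)}$ is precisely the content of~\eqref{eq-v}.
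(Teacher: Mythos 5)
Your proof is correct, but it takes a genuinely different route from the paper's. The paper argues via the Frisch--Waugh--Lovell theorem: it writes both $\hat\alpha_1^M$ and $\hat\alpha_1$ in the form $(\bm x^\top \bm{\hat e})^{-1}\bm{\hat e}^\top \bm y$ for appropriate residual vectors, and shows that under \eqref{eq-v} the residualized, QR-reparametrized interaction columns are orthogonal to $\bm x$ (the key computation being $\bm x^\top \bm E_{Z_{XQ}} = n\hat\sigma_{x[1]}^2\bm{\hat\pi}^\top \bm Q_{-(1:m)} = \bm 0$), so the two residual vectors---and hence the two slope estimates---coincide. You instead exploit the saturated-in-groups structure of \eqref{reg-cm-x}: the ABC-constrained parameters biject onto unconstrained group-specific $(a_r,b_r)$, the RSS separates by group, and both estimators become weighted averages of the same within-group slopes $\hat b_r = S_{xy}^{(r)}/S_{xx}^{(r)}$, with weights $n_r$ (via the ABC $\sum_r \hat\pi_r\hat\gamma_r=0$) versus $S_{xx}^{(r)} = n_r\hat\sigma_{x[r]}^2$ (via profiling out the intercepts), which \eqref{eq-v} makes proportional. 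Both arguments hinge on the same scalar identity $S_{xx}^{(r)} = n_r\hat\sigma_{x[r]}^2$, but your version is more elementary and more interpretable---it exhibits \eqref{eq-v} as exactly the condition that the abundance weights agree with the least-squares efficiency weights, and it immediately explains why RGE (weight concentrated on $r=1$) and STZ (uniform weights) cannot be invariant. What it buys less of is extensibility: the paper's FWL formulation establishes equality of the full residual vectors $\bm{\hat e}_{x\sim r} = \bm{\hat e}_{x\sim r+Z_{XQ}}$, not just of the scalar slopes, which is what Theorem~\ref{thm-se} reuses for the SE comparison and what generalizes directly to the multi-covariate settings of Theorems~\ref{thm-int-full} and \ref{thm-int-local}, where the group-saturated decomposition no longer separates so cleanly. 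Your stated caveats (the bijection justifying the within-group fits, and $\hat\sigma_{x[r]}^2>0$ for existence and uniqueness) are the right ones and are consistent with the paper's implicit assumptions.
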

We apply Theorem~\ref{thm-cts} as an approximation, $\hat \alpha_1 \approx \hat \alpha_1^M$ whenever $\hat \sigma_{x[r]}^2 \approx \hat \sigma_{x[1]}^2$ for all $r$, which is reasonably robust to deviations from  \eqref{eq-v} (see Table~\ref{tab:ex-x} and Section~\ref{sims-cts-cat}). This condition makes no requirements on the true associations between $Y$ and $(x,r)$ and generally allows the distribution of $x$ to vary by $r$---as long as the scale is approximately constant. In particular, strong dependencies between $x$ and $r$ are permissible.

It is clarifying to consider violations of the equal-variance condition \eqref{eq-v}, so that $x$ varies substantially for some groups, but varies little for others. This scenario does \emph{not} invalidate estimation with ABCs; rather, it decouples the coefficients on $x$ (i.e., the main $x$-effects) from the models that do \eqref{reg-cm-x} or do not \eqref{reg-main-x} include a cat-modifier. Arguably, the main-only model \eqref{reg-main-x} is no longer appropriate in this setting. The $x$-effect in \eqref{reg-main-x} is $\alpha_1^M = \mu^M(x+1) - \mu^M(x)$, which considers a one-unit change in $x$ \emph{regardless of the group $r$}. But when \eqref{eq-v} is violated, the scale of $x$---and a ``one-unit change in $x$"---is no longer comparable across  groups. Thus,  \emph{group-specific} slopes $\mu(x+1, r) - \mu(x, r) = \alpha_1 + \gamma_r$ are appropriate, which mandates the cat-modified model. As the global $x$-effect $\alpha_1^M$ from the main-only model is no longer appealing, the cat-modified model with ABCs instead identifies a global $x$-effect via the group-averaged quantity $\alpha_1 = \mathbb{E}_{\bm{\hat \pi}}\{\mu(x+1, R) - \mu(x, R)\}$ as in \eqref{joint-slopes-avg}. In this setting, distinctness between $\alpha_1$ and $\alpha_1^M$ is appropriate.

This result can be extended  for $p$ continuous covariates, each of which is cat-modified:  \texttt{y $\sim$ x$_1$ + \ldots + x$_p$ + c$_1$ + x$_1$:c$_1$ + \ldots + x$_p$:c$_1$}. Here, the equal-variance condition \eqref{eq-v}  instead uses the (scaled) sample covariance between $x_j$ and $x_h$ in group $r$, $\widehat{\mbox{Cov}}_r(\bm x_j, \bm{x}_h) \coloneqq \ n_r^{-1} \sum_{r_i = r} (x_{ij} - \bar x_j)(x_{ih} - \bar x_h)$. 
\begin{theorem}
    \label{thm-int-full}
    Consider the main-only model \eqref{reg-main} and the cat-modified model 
    \eqref{reg-cm}, each with $K=1$ categorical variable. 
    Under  ABCs \eqref{abcs-gen} and the equal-covariance condition 
    $\widehat{\mbox{Cov}}_r(\bm x_j, \bm{x}_h) = \widehat{\mbox{Cov}}_1(\bm x_j, \bm{x}_h)$
    for all $r=1,\ldots,L_R$ and each $j,h=1,\ldots,p$, 
    the OLS estimates satisfy $\bm{\hat \alpha} = \bm{\hat \alpha}^M $. 
\end{theorem}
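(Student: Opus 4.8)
The plan is to reparametrize both models so that the abundance-based constraints are absorbed, after which the cat-modified model \eqref{reg-cm} with $K=1$ is seen to be nothing more than a collection of $L_R$ separate within-group least-squares regressions, and the main-only model \eqref{reg-main} is the common-slope ``ANCOVA'' model. Then $\bm{\hat\alpha}$ and $\bm{\hat\alpha}^M$ become two explicit weighted combinations of the \emph{same} within-group cross-products, and the equal-covariance hypothesis forces the two combinations to coincide.

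Concretely, I would first pass to the unconstrained group-specific coordinates $a_r \coloneqq \alpha_0 + \beta_r$ and $\bm\delta_r \coloneqq \bm\alpha + \bm\gamma_r$, $r=1,\dots,L_R$. The map from $(\alpha_0,\bm\alpha,\{\beta_r\},\{\bm\gamma_r\})$ to $(\{a_r\},\{\bm\delta_r\})$ is a bijection whose inverse is $\alpha_0 = \sum_r \hat\pi_r a_r$, $\bm\alpha = \sum_r \hat\pi_r \bm\delta_r$, $\beta_r = a_r - \alpha_0$, $\bm\gamma_r = \bm\delta_r - \bm\alpha$; in particular the ABCs \eqref{abcs-gen} hold automatically, so the ABC-constrained least-squares fit of \eqref{reg-cm} is, in these coordinates, the \emph{unconstrained} minimization of $\sum_r \sum_{r_i=r}(y_i - a_r - \bm x_i^\top \bm\delta_r)^2$, which separates across $r$. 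Profiling out each $a_r$ gives the within-group slope $\bm{\hat\delta}_r = \bm S_r^{-1}\bm u_r$, where $\bm S_r \coloneqq \sum_{r_i=r}(\bm x_i - \bar{\bm x}_r)(\bm x_i - \bar{\bm x}_r)^\top$ and $\bm u_r \coloneqq \sum_{r_i=r}(\bm x_i - \bar{\bm x}_r)(y_i - \bar y_r)$, hence $\bm{\hat\alpha} = \sum_r \hat\pi_r \bm S_r^{-1}\bm u_r = n^{-1}\sum_r n_r \bm S_r^{-1}\bm u_r$. Applying the same device to \eqref{reg-main} — now with a single shared slope $\bm\alpha^M$ and free group intercepts $a_r^M \coloneqq \alpha_0^M + \beta_r^M$ — and profiling out the $a_r^M$ yields $\bm{\hat\alpha}^M = \big(\sum_r \bm S_r\big)^{-1}\big(\sum_r \bm u_r\big)$. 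Throughout, uniqueness of the OLS fit (no empty cells; each within-group scatter invertible) is assumed, exactly as in Theorem~\ref{thm-cts}.

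The final step is to substitute the equal-covariance condition, which states precisely that $\bm S_r = n_r\bm\Sigma$ for a common matrix $\bm\Sigma$ (the $p$-dimensional, per-group analogue of \eqref{eq-v}). Then $\sum_r \bm S_r = n\bm\Sigma$, so $\bm{\hat\alpha}^M = n^{-1}\bm\Sigma^{-1}\sum_r \bm u_r$; and $n_r\bm S_r^{-1} = \bm\Sigma^{-1}$ for every $r$, so $\bm{\hat\alpha} = n^{-1}\sum_r \bm\Sigma^{-1}\bm u_r = n^{-1}\bm\Sigma^{-1}\sum_r\bm u_r$ as well, giving $\bm{\hat\alpha} = \bm{\hat\alpha}^M$.

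I do not expect a genuine obstacle: once the reparametrization is set up, the conclusion is a single substitution. The two points needing care are (i) confirming that the constrained estimator of Section~\ref{sec-est} (the $\bm Q_{\bm{\hat\pi}}$ reduction) indeed returns the group-separated estimates above — this is where the bijection and the rank/uniqueness hypotheses are used — and (ii) making explicit that ``$\widehat{\mbox{Cov}}_r(\bm x_j,\bm x_h) = \widehat{\mbox{Cov}}_1(\bm x_j,\bm x_h)$'' is literally $\bm S_r \propto n_r$ (i.e., the covariances are taken about the group-specific means), so that Theorem~\ref{thm-int-full} reduces to Theorem~\ref{thm-cts} when $p=1$. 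A Frisch--Waugh--Lovell argument that residualizes each $\bm x_j$ against the span of the intercept, the group-indicator columns, and all interaction columns would reach the same conclusion, but with heavier bookkeeping, so I would keep the reparametrization route.
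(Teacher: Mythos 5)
Your proof is correct, and it takes a genuinely different route from the paper's. The paper argues via the Frisch--Waugh--Lovell theorem on the constrained design: after replacing the interaction block $\bm Z_{X}$ by $\bm Z_{X}\bm Q_{\bm{\hat\pi}}$, it shows that each cross-product $\bm x_h^\top(\bm I_n-\bm H_Z)\bm Z_{X_j}$ equals $(\cdots\, n_r\widehat{\mbox{Cov}}_r(\bm x_j,\bm x_h)\,\cdots)$, which under equal-covariance is proportional to $\bm{\hat\pi}^\top$ and hence annihilated by $\bm Q_{\bm{\hat\pi}}$; orthogonality then forces the residualized $\bm X$ to be unchanged by the added interaction columns. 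You instead reparametrize to group-specific intercepts and slopes $(a_r,\bm\delta_r)$, observe that this is a bijection onto the ABC-constrained subspace so that the constrained fit separates into $L_R$ within-group regressions, and obtain the explicit closed forms $\bm{\hat\alpha}=\sum_r\hat\pi_r\bm S_r^{-1}\bm u_r$ versus $\bm{\hat\alpha}^M=(\sum_r\bm S_r)^{-1}\sum_r\bm u_r$, which visibly coincide exactly when $\bm S_r=n_r\bm\Sigma$. Your route buys more: it identifies $\bm{\hat\alpha}$ as the abundance-weighted average of the subgroup-analysis slopes and $\bm{\hat\alpha}^M$ as the pooled-within estimator, making the role of the equal-covariance condition transparent and showing it is essentially sharp (the two weighted combinations of the same $\{\bm u_r\}$ agree for generic $\bm y$ only when the $n_r^{-1}\bm S_r$ agree). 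The paper's FWL route is less explicit here but generalizes more readily to the non-separable settings of Theorems~\ref{thm-int-local} and~\ref{thm-se}, where only some covariates are cat-modified and the model no longer decouples by group. Your reading of the equal-covariance condition as group-mean-centered (so that $p=1$ recovers Theorem~\ref{thm-cts}) matches what the paper's proof actually uses, despite the unsubscripted $\bar x_j$ in the theorem statement.
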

Theorem~\ref{thm-int-full} ensures estimation invariance for \emph{all} $p$ continuous main effects, each of which is cat-modified. Thus, the equal-covariance condition is stricter than \eqref{eq-v}. As with Theorem~\ref{thm-cts}, we apply Theorem~\ref{thm-int-full} as an approximation, so that $\bm{\hat \alpha} \approx \bm{\hat \alpha}^M $ when equal-covariance approximately holds. 

Lastly, we establish a middle ground: \texttt{y $\sim$ x$_1$ + \ldots + x$_p$ + c$_1$ + x$_1$:c$_1$}, which is a cat-modified model with $p$ continuous covariates and $K=1$ categorical variable, but now only $x_1$ is cat-modified. Instead of covariances between all pairs of covariates, the equal-variance condition now involves only $x_1$ and the residuals $\bm{\hat e}_{1}$ from regressing $x_1$ on all other variables, \texttt{x$_1$ $\sim$ x$_2$ + \ldots + x$_p$ + c$_1$}. 
\begin{theorem}\label{thm-int-local}
    Consider the main-only model \eqref{reg-main} with $K=1$ and the  cat-modified model 
    \eqref{reg-cm} with $K=1$ and interactions only with $x_1$ (fix $\gamma_{j,r} = 0$ for all $j > 1$ and $r=1,\ldots,L_R$). 
    Under ABCs \eqref{abcs-gen}  and the equal-variance condition 
    $\widehat{\mbox{Cov}}_r(\bm x_1, \bm{\hat e}_1) = \widehat{\mbox{Cov}}_1(\bm x_1, \bm{\hat e}_1)$
    for all $r=1,\ldots,L_R$, the OLS estimates satisfy $\hat \alpha_1^M = \hat \alpha_1$. 
\end{theorem}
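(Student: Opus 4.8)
The plan is to extract $\hat\alpha_1$ and $\hat\alpha_1^M$ through the Frisch--Waugh--Lovell (FWL) decomposition and then argue that, under the equal-variance hypothesis, the two residualized copies of $\bm x_1$ literally coincide. Using the ABC reparametrization of Section~\ref{sec-est}, I would first identify the fitted values of the cat-modified model \eqref{reg-cm} (with $K=1$ and interactions only with $x_1$) as $\bm P_V\bm y$, where $V = \mathrm{span}\{\bm x_1\} + W_0 + W_1$, the subspace $W_0 = \mathrm{span}\{\bm 1,\bm x_2,\ldots,\bm x_p,\text{indicators of the levels of }c_1\}$ gathers every main-only regressor other than $\bm x_1$ (the intercept and the ABC-constrained categorical main-effect block together span the full group-indicator space), and $W_1 = \{(x_{i1}v_{r_i})_{i=1}^n : \sum_{\ell=1}^{L_R}\hat\pi_{1,\ell}v_\ell = 0\}$ is the column space of the ABC-constrained $x_1$-interaction block; note $\bm x_1\notin W_1$ since $\bm v\equiv\bm 1$ violates the ABC. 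Because the continuous main-effect coordinates are untouched by the ABCs, $\bm x_1$ remains a literal free column, so FWL gives $\hat\alpha_1 = \langle\bm{\hat r}_1,\bm y\rangle/\|\bm{\hat r}_1\|^2$ with $\bm{\hat r}_1 = (\bm I - \bm P_{W_0+W_1})\bm x_1$; in the same way $\hat\alpha_1^M = \langle\bm{\hat e}_1,\bm y\rangle/\|\bm{\hat e}_1\|^2$ with $\bm{\hat e}_1 = (\bm I - \bm P_{W_0})\bm x_1$, which is exactly the residual vector in the statement.

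It then suffices to show $\bm{\hat r}_1 = \bm{\hat e}_1$, and for that it is enough to show $\bm{\hat e}_1\perp W_1$ (orthogonality to $W_0$ holds by construction). Indeed, writing $\bm x_1 = \bm{\hat e}_1 + \bm P_{W_0}\bm x_1$ and projecting onto $W_0+W_1$ kills the first term (since $\bm{\hat e}_1\perp W_0+W_1$) and fixes the second (since $\bm P_{W_0}\bm x_1\in W_0\subseteq W_0+W_1$), giving $\bm P_{W_0+W_1}\bm x_1 = \bm P_{W_0}\bm x_1$ and hence $\bm{\hat r}_1 = \bm x_1 - \bm P_{W_0}\bm x_1 = \bm{\hat e}_1$. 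This also shows $\bm x_1\notin W_0+W_1$ as long as $\bm{\hat e}_1\neq\bm 0$, so the cat-modified $\hat\alpha_1$ is well defined; non-collinearity of $\bm x_1$ with the remaining main-only regressors is the only regularity I would assume.

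To establish $\bm{\hat e}_1\perp W_1$, I would fix any $\bm v$ with $\sum_\ell\hat\pi_{1,\ell}v_\ell = 0$ and compute
\begin{equation*}
\langle\bm{\hat e}_1,(x_{i1}v_{r_i})_i\rangle = \sum_{\ell=1}^{L_R} v_\ell\sum_{i:\,r_i=\ell}\hat e_{1,i}x_{i1} = \sum_{\ell=1}^{L_R} v_\ell\, n_\ell\,\widehat{\mbox{Cov}}_\ell(\bm x_1,\bm{\hat e}_1) = n\sum_{\ell=1}^{L_R}\hat\pi_{1,\ell}v_\ell\,\widehat{\mbox{Cov}}_\ell(\bm x_1,\bm{\hat e}_1),
\end{equation*}
where the middle equality uses that $\bm{\hat e}_1$ is orthogonal to $\bm 1$ and to the group-$\ell$ indicator, so $\sum_{i:\,r_i=\ell}\hat e_{1,i}=0$ and the group-$\ell$ inner product collapses to $n_\ell\widehat{\mbox{Cov}}_\ell(\bm x_1,\bm{\hat e}_1)$, and the last uses $n_\ell = n\hat\pi_{1,\ell}$. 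Under the hypothesis $\widehat{\mbox{Cov}}_\ell(\bm x_1,\bm{\hat e}_1)\equiv c$, so this equals $nc\sum_\ell\hat\pi_{1,\ell}v_\ell = 0$. Hence $\bm{\hat e}_1\perp W_1$, and therefore $\hat\alpha_1 = \hat\alpha_1^M$.

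The step I expect to demand the most care is the first: pinning down the column space $W_1$ of the ABC-constrained interaction block so that $\bm x_1$ genuinely stays a free column whose FWL coefficient is $\hat\alpha_1$, and confirming that the constraint-based identification leaves the relevant nuisance column span unchanged, so that the theorem's $\bm{\hat e}_1$ really is the FWL residual. Everything after that is essentially a one-line computation. As a sanity check, setting $p=1$ reduces $W_0$ to the span of the group indicators, so $\bm{\hat e}_1$ becomes the within-group-centered $\bm x_1$ and $\widehat{\mbox{Cov}}_r(\bm x_1,\bm{\hat e}_1)$ collapses to the within-group variance $\hat\sigma^2_{x[r]}$, recovering Theorem~\ref{thm-cts}.
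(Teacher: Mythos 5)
Your proof is correct and follows essentially the same route as the paper's: both reduce the claim via Frisch--Waugh--Lovell to showing that the main-only residual $\bm{\hat e}_1$ is orthogonal to the ABC-constrained interaction block, and both establish this by the identity $\sum_{r_i=r}x_{i1}\hat e_{i1}=n_r\widehat{\mbox{Cov}}_r(\bm x_1,\bm{\hat e}_1)$ (using $\sum_{r_i=r}\hat e_{i1}=0$) combined with the equal-covariance hypothesis and the constraint $\sum_\ell\hat\pi_\ell v_\ell=0$. Your subspace/projection framing of the $\bm Q_{\bm{\hat\pi}}$ reparametrization is just a cleaner packaging of the paper's matrix computation, and the closing sanity check recovering Theorem~\ref{thm-cts} at $p=1$ is accurate.
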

To understand this modified equal-variance condition, we can equivalently express $\widehat{\mbox{Cov}}_r(\bm{\hat e}_1, \bm{x}_1) = n_r^{-1}\sum_{r_i = r} (x_{i1}^2 - x_{i1}\hat x_{i1}) = \hat \sigma_{x_1[r]}^2 - \widehat{\mbox{Cov}}_r(\bm{\hat x}_1, \bm{x}_1)$, where $\bm{\hat x}_1$ are the fitted values from \texttt{x$_1$ $\sim$ x$_2$ + \ldots + x$_p$ + c$_1$}. Theorem~\ref{thm-int-local} requires that the variability in $x_1$ explained by the remaining (continuous and categorical) covariates is the same within each group. When this condition is violated, a one-unit change in $x_1$ holding \emph{all else equal} among $x_2,\ldots,x_p$ is no longer comparable across groups $r$. As with Theorem~\ref{thm-cts}, the main-only model $x$-effect $\alpha_1^M$ is no longer appropriate; group-specific $x_1$-effects $\mu_{x_1}'(r) = \alpha_1 + \gamma_{1,r}$ are preferred; and ABCs offer a substitute for the global slope parameter via the group-averaged $x_1$-effect, $\alpha_1 = \mathbb{E}_{\bm{\hat \pi}}\{\mu_{x_1}'(R)\}$.

\subsection{Powerful inference with ABCs}
A primary reason for the unpopularity of cat-modifiers is the loss of statistical power for the main effects. With RGE, cat-modifiers relegate the main effects to a single reference group, which shrinks the effective sample size and often attenuates global effects. Thus,  quantitative modelers may be reluctant to include cat-modifiers for fear of larger $p$-values, wider confidence intervals, and less power to identify important effects. Consequently, potential race-, sex-, or other group-specific effects may remain hidden.

ABCs directly and uniquely address this challenge. With the addition of cat-modifier effects, we show that ABCs may actually \emph{reduce} SEs for the main effects. The magnitude of this reduction increases with the effect size of the cat-modifier. Crucially, when the cat-modifier effect is unnecessary, then the main effect SEs match, but do not inflate, those for a (correct) main-only model.

Consider two nested models, a main-only model and a cat-modified model. Our general result is that the cat-modified model with ABCs has smaller SEs for the main effects whenever the estimated \emph{residual} variance is smaller for the cat-modified model,
\begin{equation} \label{rv}
    \hat S^2 \le \hat S_M^2.
\end{equation}
For the maximum likelihood estimators  $\hat S^2 =  \Vert \bm{\hat e} \Vert^2/n$ and $\hat S_M^2 =  \Vert \bm{\hat e}_M \Vert^2/n $, where $\bm{\hat e}$ and $\bm{\hat e}_M$ are the residuals from the cat-modified and main-only models,  respectively, \eqref{rv} is guaranteed: $\Vert \bm{\hat e} \Vert^2 \le \Vert \bm{\hat e}_M \Vert^2$, typically with strict inequality. More commonly, the unbiased estimators  $\hat S^2 =  \Vert \bm{\hat e} \Vert^2/(n - d_M - d)$ and $\hat S_M^2 =  \Vert \bm{\hat e}_M \Vert^2/(n - d_M) $ are used, where $d_M+d$ and $d_M$ are the number of identified parameters for the cat-modified and main-only models, respectively. In that case, \eqref{rv} 
requires that the \emph{adjusted-$R^2$} for the cat-modified model exceeds that for the main-only model, or equivalently,  
$ (\Vert \bm{\hat e}_M \Vert^2  - \Vert \bm{\hat e} \Vert^2) /\Vert \bm{\hat e}_M \Vert^2 \ge  d/(n-d_M)$, 
so that the (guaranteed) reduction in sum-squared-residuals from main-only to cat-modified must be  large enough to justify the addition of $d$ parameters. This requirement is modest:  adjusted-$R^2$ is well-known to prefer overparametrized models, and thus \eqref{rv} is likely to hold even when the cat-modifiers are extraneous (see Section~\ref{sec-sims-est}). When cat-modifiers are indeed necessary, the reduction from $\hat S_M^2$ to $\hat S^2$ can be substantial.

We revisit each case from Section~\ref{sec-est-inv}, beginning with a two-way ANOVA (Example~\ref{ex:cat}). 
\begin{theorem}\label{thm-cat-cat-se}
    Under ABCs \eqref{abcs-gen} and \eqref{abcs-cat-cat-gen} and \eqref{rv}, the OLS SEs of \emph{all} main effects under the cat-modified model  \eqref{reg-cm-cat} are less than or equal to those under the main-only model \eqref{reg-main-cat}: $\mbox{SE}(\hat \alpha_0) \le \mbox{SE}(\hat \alpha_0^M)$, $\mbox{SE}(\hat \beta_{1,r}) \le \mbox{SE}(\hat \beta_{1,r}^M)$ for all $r=1,\ldots,L_R$, and $\mbox{SE}(\hat \beta_{2,s}) \le \mbox{SE}(\hat \beta_{2,s}^M)$ for all $s=1,\ldots,L_S$.
\end{theorem}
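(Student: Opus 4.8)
The plan is to reduce the standard-error comparison to two ingredients: (i) the estimation invariance of Theorem~\ref{thm-cat-cat}, which fixes the \emph{unit-variance} part of each main-effect standard error, and (ii) the residual-variance inequality \eqref{rv}, which supplies the remaining factor. Write $\bm X_Q$ and $\bm X_Q^M$ for the ABC-reparametrized designs of the cat-modified model \eqref{reg-cm-cat} and the main-only model \eqref{reg-main-cat}, with constraint matrices $\bm Q_{\bm{\hat\pi}}$ and $\bm Q_{\bm{\hat\pi}}^M$ as in Section~\ref{sec-est}. Every coefficient estimate is a linear functional of $\bm y$: for a main-effect coordinate $\theta_j \in \{\alpha_0\}\cup\{\beta_{1,r}\}_{r=1}^{L_R}\cup\{\beta_{2,s}\}_{s=1}^{L_S}$ we have $\hat\theta_j = \bm\ell_j^\top \bm y$ with $\bm\ell_j = \bm X_Q(\bm X_Q^\top \bm X_Q)^{-1}\bm Q_{\bm{\hat\pi}}^\top \bm e_j$ (and analogously $\hat\theta_j^M = (\bm\ell_j^M)^\top\bm y$ with $\bm\ell_j^M$ built from $\bm X_Q^M$, $\bm Q_{\bm{\hat\pi}}^M$), where $\bm e_j$ is the coordinate selector. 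A direct computation gives $\Vert\bm\ell_j\Vert^2 = \bm e_j^\top\bm Q_{\bm{\hat\pi}}(\bm X_Q^\top\bm X_Q)^{-1}\bm Q_{\bm{\hat\pi}}^\top\bm e_j$, i.e. $\Vert\bm\ell_j\Vert^2$ is exactly the relevant diagonal entry of the unit-variance covariance matrix from Section~\ref{sec-est}; consequently, under the Gaussian OLS model, $\mbox{SE}(\hat\theta_j) = \hat S\,\Vert\bm\ell_j\Vert$ and $\mbox{SE}(\hat\theta_j^M) = \hat S_M\,\Vert\bm\ell_j^M\Vert$.

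The crux is to show $\Vert\bm\ell_j\Vert = \Vert\bm\ell_j^M\Vert$ for each main-effect coordinate. Theorem~\ref{thm-cat-cat} gives $\hat\theta_j = \hat\theta_j^M$; this invariance reflects the orthogonalization of main and interaction categorical effects induced by \eqref{abcs-cat-cat-gen} (noted after Lemma~\ref{lemma-cat-cat}), which is a property of $\bm X$ and $\bm{\hat\pi}$ alone, so the equality $\hat\theta_j(\bm y) = \hat\theta_j^M(\bm y)$ holds for \emph{every} response vector $\bm y\in\mathbb{R}^n$, not merely the observed one. Two linear functionals agreeing on all of $\mathbb{R}^n$ coincide, hence $\bm\ell_j = \bm\ell_j^M$ and in particular $\Vert\bm\ell_j\Vert = \Vert\bm\ell_j^M\Vert$. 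For the intercept this is immediate: Theorem~\ref{thm-intercept} gives $\hat\alpha_0 = \hat\alpha_0^M = \bar y$, so $\bm\ell_0 = \bm\ell_0^M = \bm 1/n$ and $\Vert\bm\ell_0\Vert^2 = 1/n$ in both models.

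Combining the ingredients closes the argument: for each main-effect coordinate $\theta_j$,
\begin{equation*}
\mbox{SE}(\hat\theta_j)^2 = \hat S^2\,\Vert\bm\ell_j\Vert^2 \le \hat S_M^2\,\Vert\bm\ell_j^M\Vert^2 = \mbox{SE}(\hat\theta_j^M)^2 ,
\end{equation*}
where the inequality is precisely \eqref{rv} and the outer equalities use $\Vert\bm\ell_j\Vert = \Vert\bm\ell_j^M\Vert$. Taking square roots yields $\mbox{SE}(\hat\alpha_0)\le\mbox{SE}(\hat\alpha_0^M)$, $\mbox{SE}(\hat\beta_{1,r})\le\mbox{SE}(\hat\beta_{1,r}^M)$ for all $r$, and $\mbox{SE}(\hat\beta_{2,s})\le\mbox{SE}(\hat\beta_{2,s}^M)$ for all $s$, which is the claim.

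I expect the main obstacle to be the middle step—making rigorous that the estimator equality in Theorem~\ref{thm-cat-cat} is an identity in $\bm y$ (equivalently, that the ABC-reparametrized cat-modified design splits orthogonally into a main-effect block spanning exactly $\mathrm{col}(\bm X^M)$ and an interaction block orthogonal to it). If one prefers not to rely on the ``for all $\bm y$'' reading of Theorem~\ref{thm-cat-cat}, the alternative is to establish this block-orthogonality of $\bm X_Q$ directly: then $(\bm X_Q^\top\bm X_Q)^{-1}$ is block diagonal, its main-effect block coincides with $((\bm X_Q^M)^\top\bm X_Q^M)^{-1}$, and the same conclusion follows. Everything after that point is routine bookkeeping with the OLS variance formula and the definitions of $\hat S^2$ and $\hat S_M^2$.
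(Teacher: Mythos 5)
Your proof is correct and follows essentially the same route as the paper's: the paper likewise reduces the claim to the block-orthogonality of the ABC-constrained interaction columns established in the proof of Theorem~\ref{thm-cat-cat}, which fixes the design-dependent factor of each main-effect SE, and then applies \eqref{rv} exactly as in the proof of Theorem~\ref{thm-se}. Your ``identity in $\bm y$'' reading of Theorem~\ref{thm-cat-cat} is simply a compact repackaging of that same orthogonality (your own suggested alternative of verifying the block-orthogonality of $\bm X_Q$ directly is precisely what the paper does), so there is no gap.
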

Remarkably, Theorems~\ref{thm-cat-cat}~and~\ref{thm-cat-cat-se} confirm that ABCs deliver the best possible result: adding cat-modifiers to the main-only model \eqref{reg-main-cat} does not change the main effect estimates, but potentially decreases their SEs. Thus, analysts may include cat-modifiers ``for free"---with no negative consequences for the main effects---while acquiring the ability to infer possibly heterogeneous, group-specific effects. The same occurs for categorical-continuous interactions, again with the equal-variance condition: 
\begin{theorem} \label{thm-se}
     Under ABCs \eqref{abcs-gen}, equal-variance \eqref{eq-v}, and \eqref{rv}, the OLS SE for the main $x$-effect under the cat-modified model \eqref{reg-cm-x} is less than or equal to that under the main-only model \eqref{reg-main-x}:  $\mbox{SE}(\hat \alpha_1) \le  \mbox{SE}(\hat \alpha_1^M)$. 
\end{theorem}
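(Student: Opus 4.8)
The plan is to factor each squared standard error into a residual-variance piece controlled by \eqref{rv} and a design-only ``variance factor'' that the equal-variance condition \eqref{eq-v} forces to coincide across the two models. Concretely, starting from the finite-sample law $\bm{\hat\theta}\sim N_P\{\bm\theta,\sigma^2\bm Q_{\bm{\hat\pi}}(\bm X_Q^\top\bm X_Q)^{-1}\bm Q_{\bm{\hat\pi}}^\top\}$ from Section~\ref{sec-est} (with $\bm{\hat\pi}$ treated as fixed), I would write $\mbox{SE}(\hat\alpha_1)^2 = \hat S^2\,\kappa$ and $\mbox{SE}(\hat\alpha_1^M)^2 = \hat S_M^2\,\kappa_M$, where $\kappa$ and $\kappa_M$ are the entries of $\bm Q_{\bm{\hat\pi}}(\bm X_Q^\top\bm X_Q)^{-1}\bm Q_{\bm{\hat\pi}}^\top$ at the coordinate of $\alpha_1$ for the cat-modified model \eqref{reg-cm-x} and the main-only model \eqref{reg-main-x}, respectively. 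Since \eqref{rv} gives $\hat S^2\le\hat S_M^2$, it suffices to show $\kappa\le\kappa_M$; in fact I expect equality, $\kappa=\kappa_M$.

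To obtain $\kappa=\kappa_M$, the cleanest route uses that constrained OLS under the linear ABCs is linear in $\bm y$: $\hat\alpha_1=\bm\ell^\top\bm y$ and $\hat\alpha_1^M=\bm\ell_M^\top\bm y$ for vectors $\bm\ell,\bm\ell_M$ determined entirely by the design, and moreover $\kappa=\Vert\bm\ell\Vert^2$, $\kappa_M=\Vert\bm\ell_M\Vert^2$ (since $\bm\ell^\top=\bm e^\top\bm Q_{\bm{\hat\pi}}(\bm X_Q^\top\bm X_Q)^{-1}\bm X_Q^\top$ for the coordinate vector $\bm e$ of $\alpha_1$, whose squared norm is exactly the stated diagonal entry). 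Now Theorem~\ref{thm-cts} asserts $\hat\alpha_1=\hat\alpha_1^M$ whenever the equal-variance condition \eqref{eq-v} holds; since \eqref{eq-v} constrains only $\{x_i\}_{i=1}^n$ and the grouping, not $\bm y$, this is an algebraic identity in $\bm y$, so $\bm\ell=\bm\ell_M$ and hence $\kappa=\kappa_M$. Combining, $\mbox{SE}(\hat\alpha_1)^2=\hat S^2\kappa\le\hat S_M^2\kappa=\hat S_M^2\kappa_M=\mbox{SE}(\hat\alpha_1^M)^2$, which is the claim.

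For a self-contained check of $\kappa=\kappa_M$ that also exposes the role of \eqref{eq-v}, I would compute both estimators directly: the cat-modified OLS fit of \eqref{reg-cm-x} splits into $L_R$ within-group simple regressions, so $\hat\mu_x'(r)=\widehat{\mbox{Cov}}_r(\bm x,\bm y)/\hat\sigma_{x[r]}^2$ and, by \eqref{joint-slopes-avg}, $\hat\alpha_1=\sum_{r=1}^{L_R}\hat\pi_r\hat\mu_x'(r)$. Under \eqref{eq-v} every denominator equals $\hat\sigma_{x[1]}^2$, and using $\hat\pi_r/n_r=1/n$ this telescopes to $\hat\alpha_1=(n\hat\sigma_{x[1]}^2)^{-1}\sum_{i=1}^n\tilde x_i y_i$ with $\tilde x_i\coloneqq x_i-\bar x_{r_i}$, which is precisely the within-group-centered simple regression coefficient that equals $\hat\alpha_1^M$ from \eqref{reg-main-x}. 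Reading off the coefficient vector yields $\kappa=\kappa_M=(n\hat\sigma_{x[1]}^2)^{-1}=\big(\sum_{i=1}^n\tilde x_i^2\big)^{-1}$, and \eqref{rv} finishes the argument.

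The main obstacle is exactly this transfer of Theorem~\ref{thm-cts}'s point-estimate invariance to the sampling-variance factor $\kappa$; the observation that makes it routine is that \eqref{eq-v} is a condition on the design alone, so the invariance holds identically in $\bm y$ and the linear representations—not merely the numerical estimates—agree. A secondary point to state carefully is that the $\mbox{SE}$ formula conditions on $\bm{\hat\pi}$ (per Section~\ref{sec-est}), so $\kappa$ and $\kappa_M$ are functions of the observed design and the inequality $\mbox{SE}(\hat\alpha_1)\le\mbox{SE}(\hat\alpha_1^M)$ is exact in that conditional sense, requiring no approximation beyond \eqref{eq-v} and \eqref{rv} themselves.
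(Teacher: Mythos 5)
Your proposal is correct and follows essentially the same route as the paper: both factor $\mbox{SE}^2$ into the residual-variance estimate times a design-only factor, show the design factors coincide under the equal-variance condition \eqref{eq-v}, and then conclude from $\hat S \le \hat S_M$ via \eqref{rv}. The only (minor) difference is how the design-factor equality is obtained --- the paper reuses the explicit FWL residual identity $\bm{\hat e}_{x\sim r} = \bm{\hat e}_{x\sim r + Z_{XQ}}$ from the proof of Theorem~\ref{thm-cts}, whereas you derive it from the statement of Theorem~\ref{thm-cts} by noting the estimate invariance is an identity in $\bm y$ (so the linear representations $\bm\ell=\bm\ell_M$ agree), a clean shortcut that you also back up with a correct direct computation of $\kappa=\kappa_M=\bigl(\sum_i \tilde x_i^2\bigr)^{-1}$.
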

This result applies in the context of Theorem~\ref{thm-cts}, but analogous extensions are available for Theorems~\ref{thm-int-full}~and~\ref{thm-int-local}; only the condition \eqref{rv} must be added.

These results make minimal assumptions about the true data-generating process and do \emph{not} require independence or uncorrelatedness among the covariates. However, the OLS SEs are defined as usual, which implicitly refers to uncorrelated and homoskedastic error assumptions for both the main-only and cat-modified models. Thus, while Theorems~\ref{thm-cat-cat-se}~and~\ref{thm-se} are direct statements about the SEs as statistics and do not require any assumptions on the error distributions, the utility of these results is clearly linked to these assumptions.

\section{Simulations}\label{sec-sims}


\subsection{Validating invariance for estimation and inference}
The first objective is to verify the theory of ABCs for estimation and inference invariance,  focusing on the conditions in Section~\ref{sec-inv}. We consider both categorical-categorical interactions (Section~\ref{sims-cat-cat}) and categorical-continuous interactions (Section~\ref{sims-cts-cat}). 


\subsubsection{Categorical-categorical interactions}\label{sims-cat-cat}
Given two categorical variables, say \texttt{race} and \texttt{sex}, what is the effect of including the \texttt{race:sex} interaction term on the estimates and SEs for the \texttt{race} and \texttt{sex} \emph{main} effects? The theory of ABCs (Section~\ref{sec-inv}) predicts that the estimates will be exactly the same, while the SEs may decrease if the interaction effect is sufficiently large. These results make no requirements on the data-generating process. Thus, we simulate data such that  1) \texttt{race} and \texttt{sex} are dependent, 2) the errors are non-Gaussian, and 3)  ABCs are not satisfied. 

Let \texttt{race} and \texttt{sex} be categorical variables with groups $\{\texttt{A}, \texttt{B}, \texttt{C}, \texttt{D}\}$ and $\{\texttt{uu}, \texttt{vv}\}$, respectively; we use arbitrary labeling here to remain agnostic about particular race- or sex-specific effects in our synthetic data-generating process. For each of $n=500$ observations, we  draw each \texttt{race} assignment with $( \pi_a,  \pi_b,  \pi_c,  \pi_d) = (0.4, 0.3, 0.2, 0.1)$, and then draw the \texttt{sex} assignment conditional on \texttt{race} with 
 $( \pi_{uu},  \pi_{vv})_{\cdot \mid r = \texttt{A}} = (0.4, 0.6)$, 
 $( \pi_{uu},  \pi_{vv})_{\cdot \mid r = \texttt{B}} = (0.6, 0.4)$, 
 $( \pi_{uu},  \pi_{vv})_{\cdot \mid r = \texttt{C}} = (0.7, 0.3)$,  and 
 $( \pi_{uu},  \pi_{vv})_{\cdot \mid r = \texttt{D}} = (0.2, 0.8)$. Thus, \texttt{race} and \texttt{sex} are dependent, and marginally $\pi_{uu} = \pi_{vv} = 0.5$.  The response variable $y$ is simulated with expectation \eqref{reg-cm-cat} with $\alpha_0 = 1$, $\beta_c = -1$,  $\gamma_{b,vv} = \gamma$, and all other coefficients zero, or equivalently, 
$
\mu(r,s) = 1   - \mathbb{I}\{r = \texttt{C}\} +  \gamma \mathbb{I}\{r = \texttt{B}, s = \texttt{vv}\} 
$ 
plus $t_4(0,1)$-distributed errors. 
Crucially, $\gamma$ controls the  magnitude of the  \texttt{race:sex} effect: we consider $\gamma = 0$ (no interaction effect), $\gamma = 0.5$ (moderate interaction effect; see the supplementary material),  and $\gamma = 1.5$ (large interaction effect).  
We repeat this process to create 500 synthetic datasets.

For each simulated dataset, we fit the main-only model \eqref{reg-main-cat} 
and the cat-modified model \eqref{reg-cm-cat} 
and compare the estimates and SEs for each main effect between the two models. These models are fit using ABCs, RGE (references  $r = \texttt{A}$, $s = \texttt{uu}$), and STZ. This setting is favorable for RGE: the data-generating process satisfies RGE ($\beta_a = 0, \beta_{uu} = 0, \gamma_{a,uu} = 0$), but not ABCs, and the reference groups are the most abundant groups for both \texttt{race} and \texttt{sex}. To aid comparisons, we omit the main effects from the RGE reference groups, resulting in four main effects ($\beta_b, \beta_c, \beta_d, \beta_{vv}$) to compare between the main-only and cat-modified models for ABCs, RGE, and STZ. 

The estimates are in Figure~\ref{fig:sim-race-sex-est}. Under ABCs, \emph{all} \texttt{race} and \texttt{sex} main effects are \emph{exactly} identical between the models that do and do not include the \texttt{race:sex} interaction, confirming Theorem~\ref{thm-cat-cat}. This result persists regardless of the true data-generating process, including the magnitude of the interaction. No such invariance occurs for RGE or STZ: the inclusion of the interaction completely changes the estimates (and the interpretations) of the main effects. 

\begin{figure}
\centering
\includegraphics[width=.49\linewidth]{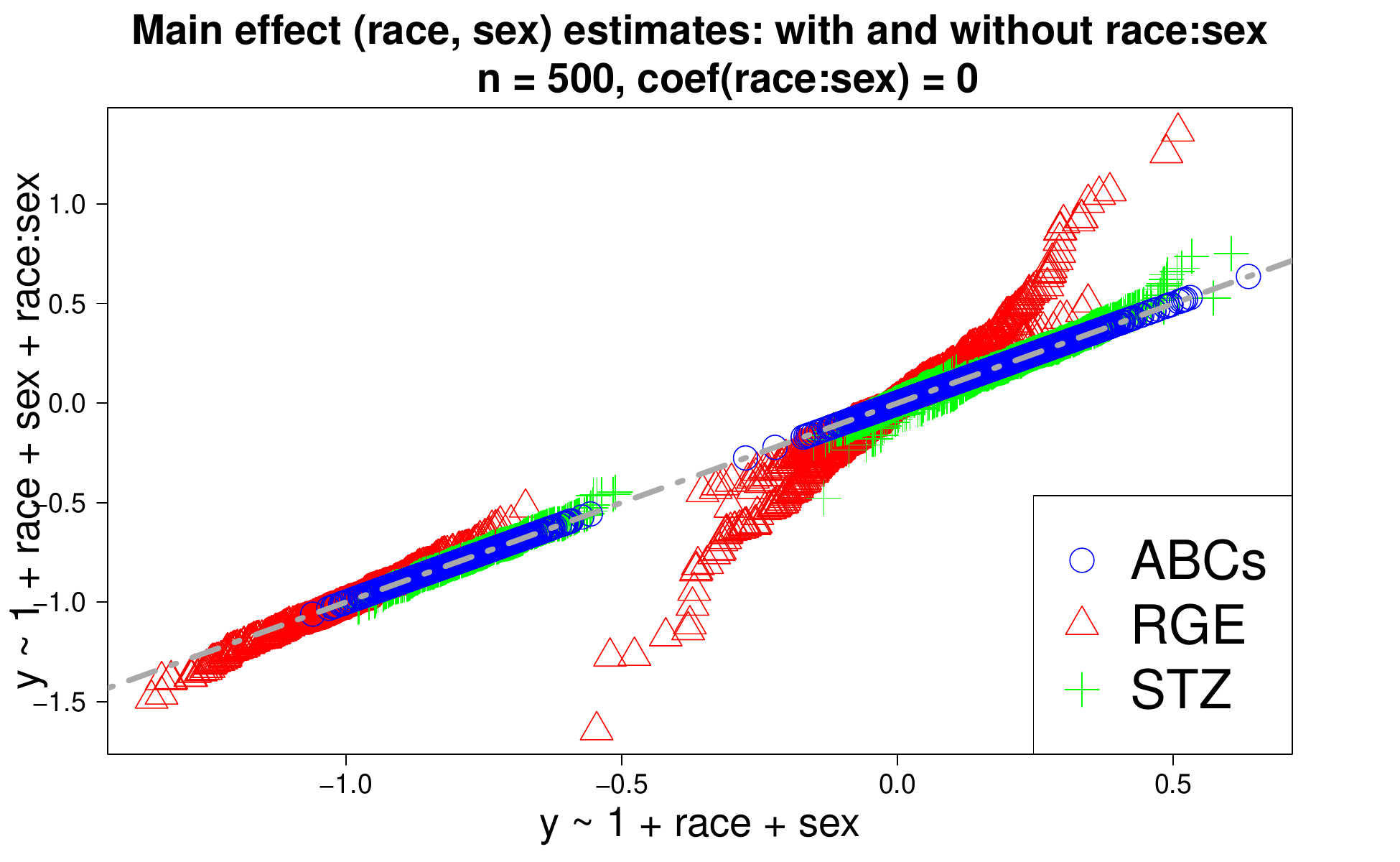}
\includegraphics[width=.49\linewidth]{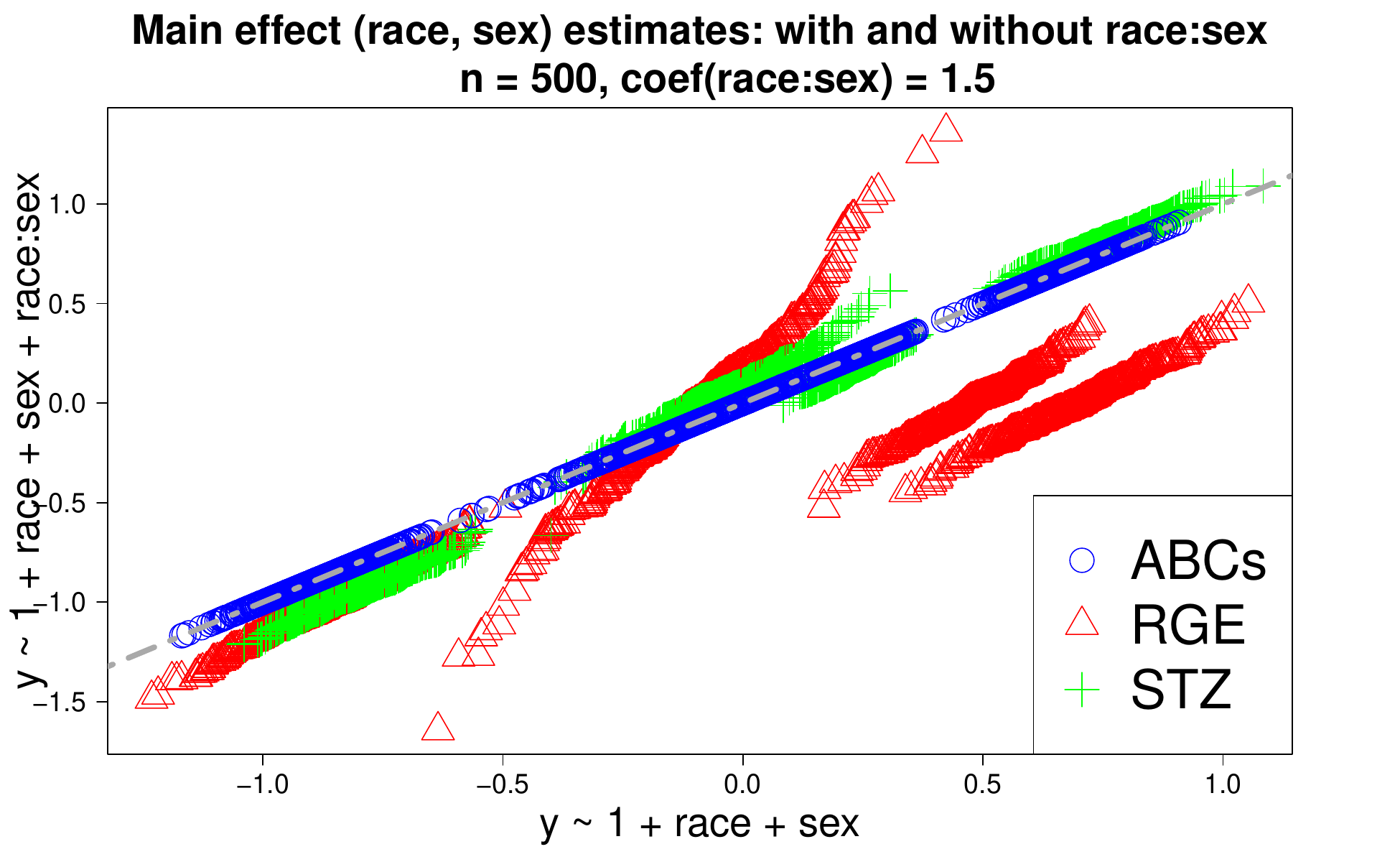}
\caption{\small Estimates for all \texttt{race} and \texttt{sex} main effects 
for models that do (y-axis) and do not (x-axis) include the \texttt{race:sex} interaction across 500 simulated datasets. Under ABCs, all main effect estimates are \emph{exactly} identical between the two models ($45^\circ$ line), 
regardless of whether the interaction effect is zero ($\gamma = 0$, left) or large ($\gamma = 1.5$, right). Such invariance does not hold for other identifications (RGE or STZ). 
}
\label{fig:sim-race-sex-est}
\end{figure}

The SEs are in 
Figure~\ref{fig:sim-race-sex-se}. 
Under ABCs,  the addition of the \texttt{race:sex} interaction  has virtually no impact on the main effect SEs. For larger $\gamma$, the main effect SEs are slightly smaller (about a 5\% reduction) for cat-modified model, as expected. More substantial SE reductions occur for larger interactions ($\gamma \ge 5$), although such large interaction effects are not usually expected in practice. 
Again, no such results occur for RGE: the SEs are much larger for the model that includes  the \texttt{race:sex} interaction, regardless of $\gamma$.

These results must be interpreted carefully: the ``main effects" under ABCs, RGE, or STZ target different functionals of $\mu(r,s)$. In fact, the OLS fitted values for $\hat \mu(r,s)$ are identical under each identification (this is not the case for regularized regression). However, each identification puts forth ``main effects" in both the main-only and cat-modified model. We argue that the main effects under ABCs are superior: the estimates are \emph{exactly} invariant to the inclusion of (\texttt{race:sex}) interactions and the SEs may decrease slightly. Uniquely, ABCs circumvent the traditional roadblocks to including interactions: the interpretations remain simple (and equitable) and there is no loss of statistical power for the main effects.

\begin{figure}
\centering
\includegraphics[width=.49\linewidth]{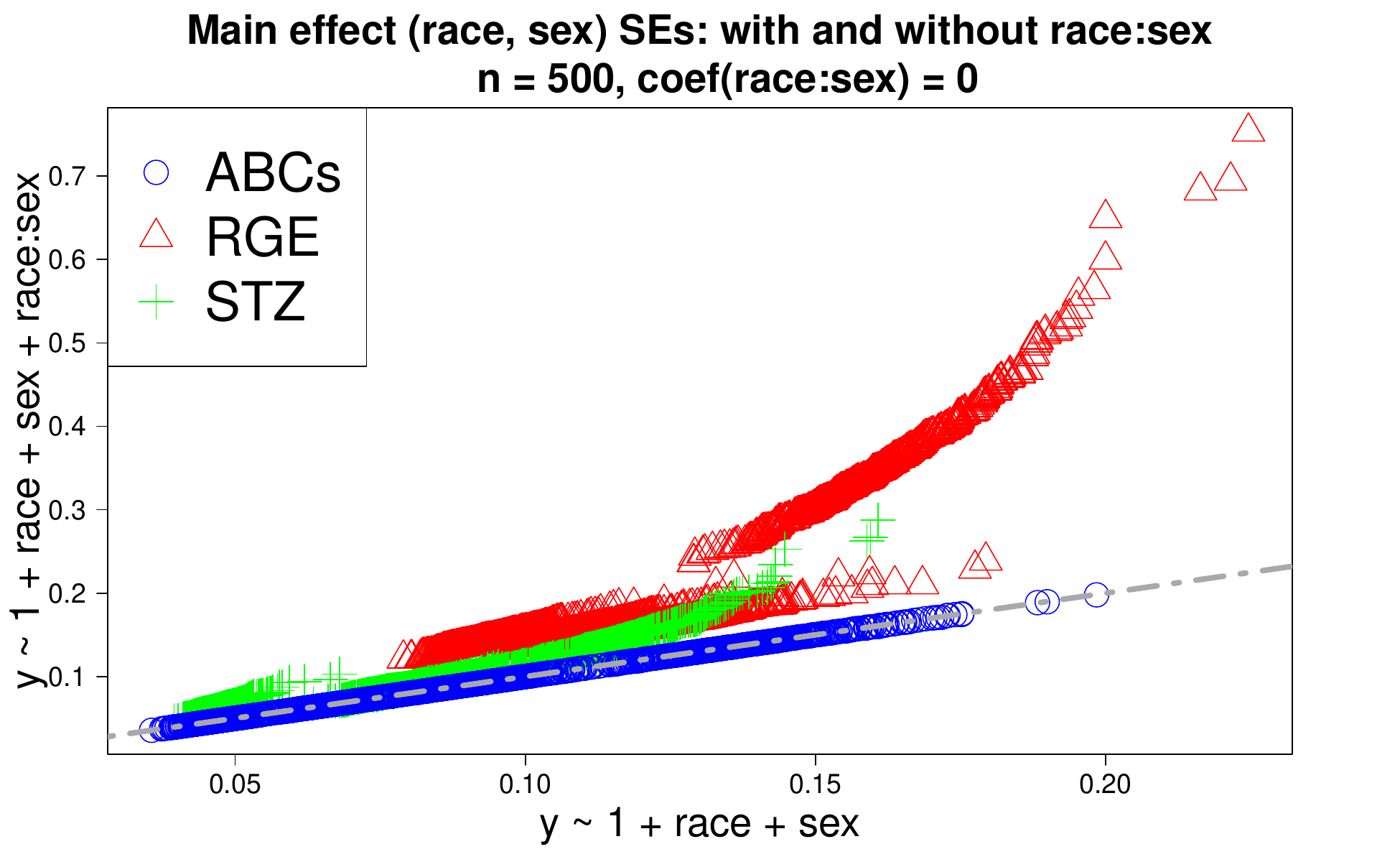}
\includegraphics[width=.49\linewidth]{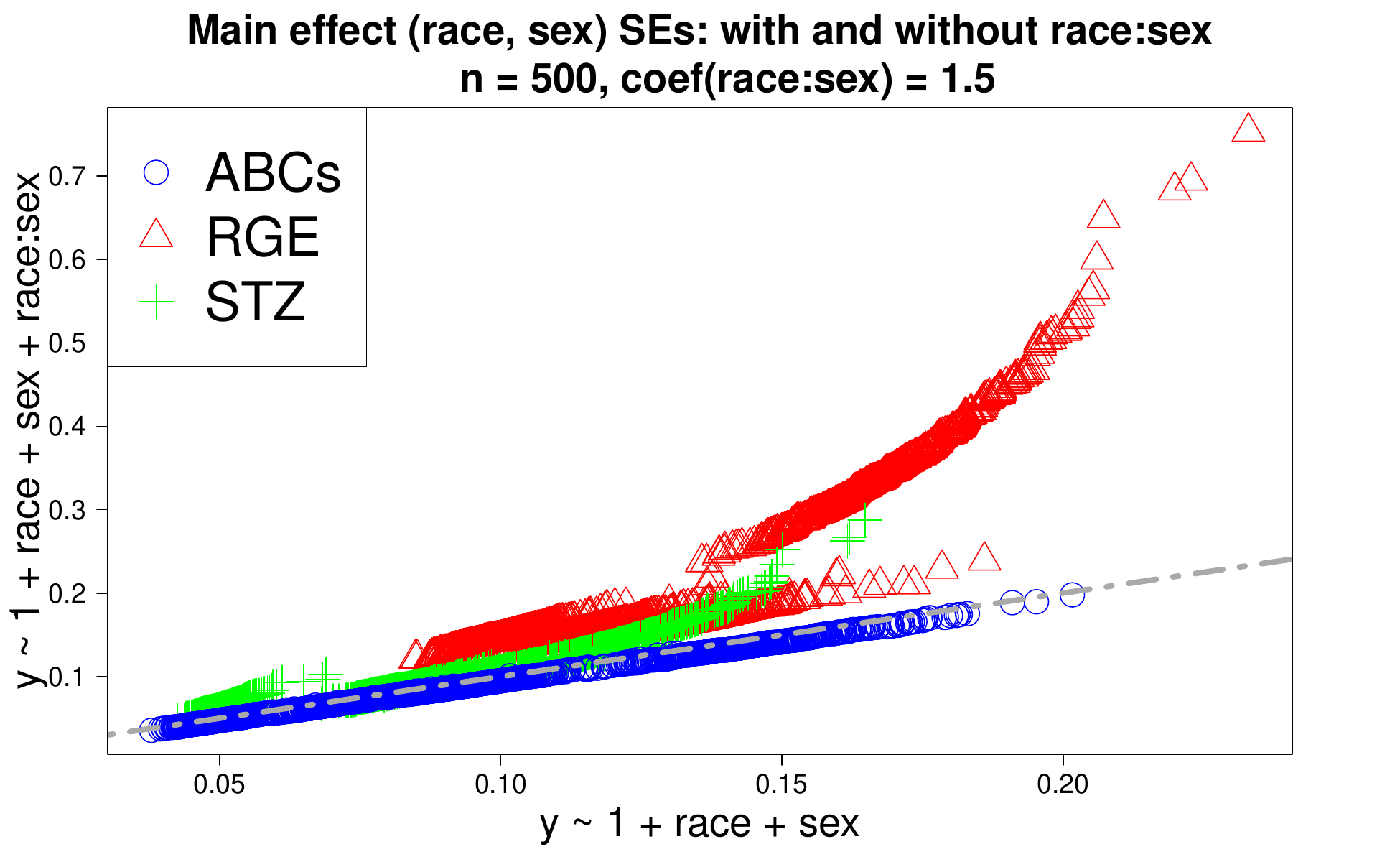}
\caption{\small Standard errors (SEs) for all \texttt{race} and \texttt{sex} main effects 
for models that do (y-axis) and do not (x-axis) include the \texttt{race:sex} interaction across 500 simulated datasets. Under ABCs, the SEs are nearly identical between the two models ($45^\circ$ line) when the interaction effect is 
zero ($\gamma = 0$, left) and slightly less (about a 5\% reduction) for the cat-modified model when the interaction effect is larger  ($\gamma = 1.5$, right). The RGE and STZ main effect SEs increase substantially when the interaction term is included in the model (above $45^\circ$ line) regardless of $\gamma$.  
}
\label{fig:sim-race-sex-se}
\end{figure}

\subsubsection{Categorical-continuous interactions}\label{sims-cts-cat}
We now revise this analysis for categorical-continuous interactions: given categorical \texttt{race} and continuous \texttt{x}, what is the effect of including the \texttt{x:race} interaction on the main $x$-effect? The theory of ABCs (Section~\ref{sec-inv}) predicts that invariance for estimation and inference is contingent on  the equal-variance condition \eqref{eq-v}. We investigate the sensitivity to this condition as well as to the magnitude of the interaction effect. 

To incorporate dependencies between \texttt{race} and \texttt{x}, we simulate \texttt{race} as in Section~\ref{sims-cat-cat} and then simulate \texttt{x} conditional on \texttt{race}: 
$[\texttt{x} \mid \texttt{race} = \texttt{A}] \sim 5  + \sigma_{ac} N(0,1)$, 
$[\texttt{x} \mid \texttt{race} = \texttt{B}] \sim \sqrt{12}\ \mbox{Uniform}(0,1)$, 
$[\texttt{x} \mid \texttt{race} = \texttt{C}] \sim -5 + \sigma_{ac}t_8(0,1)$, and 
$[\texttt{x} \mid \texttt{race} = \texttt{D}] \sim  \mbox{Gamma}(1,1)$. 
Each \texttt{race} group features a unique distribution with varying means, so \texttt{x} and \texttt{race} are strongly dependent and highly correlated. 
Here, $\sigma_{ac}$ controls the degree to which the equal-variance condition \eqref{eq-v} is violated: $\sigma_{ac} = 1$ is a mild violation (the race-specific \emph{population} variances are identical, but the sample quantities $\hat \sigma_{x[r]}^2$ are not) while $\sigma_{ac} = 1.5$ is a strong violation. The response variable $y$ is simulated with expectation \eqref{reg-cm-x} with $\alpha_0 = \alpha_1 = 1$, $\beta_c = -1$, and $\gamma_b = \gamma$, and all other coefficients zero, or equivalently, 
$ 
\mu(x, r) = 1 + x  - \mathbb{I}\{r = \texttt{C}\} + \gamma x \mathbb{I}\{r = \texttt{B}\}
$ 
plus $t_4(0,1)$-distributed errors. 
This data-generating process satisfies RGE ($\beta_a = 0$), but not ABCs, and includes non-Gaussian errors. Again, $\gamma \in\{0,0.5,1.5\}$ determines the magnitude of the interaction effect.  We repeat this process to create 500 synthetic datasets. 

For each simulated dataset, we fit the main-only model \eqref{reg-main-x}  
and the cat-modified model \eqref{reg-cm-x} 
and compare the estimates and SEs for main $x$-effect $\alpha_1$ between the two models under ABCs, RGE, and STZ. 
The estimates are in Figure~\ref{fig:sim-x-race-est}. Even with mild deviations from the equal-variance condition \eqref{eq-v}, the $x$-effect estimates under ABCs are nearly identical between models that do and do not include the \texttt{x:race} interaction. Crucially, this invariance persists regardless of the true interaction effect magnitude $\gamma$. Under strong violations of \eqref{eq-v} \emph{and} a strong interaction effect, Theorem~\ref{thm-cts} no longer applies. However, as argued in Section~\ref{sec-inv}, this behavior is appropriate: when \eqref{eq-v}  is strongly violated,  a one-unit change in $x$ is not comparable for different \texttt{race} groups, so only the model that includes race-specific $x$-effects (via the \texttt{x:race} interaction) is appropriate. Finally, we note the absence of invariance for estimation with RGE or STZ. These estimators change dramatically when $\gamma$ is moderate to large. Even when $\gamma = 0$---when classical consistency results for OLS should provide asymptotic invariance in this case---they do not match the invariance of ABCs.

\begin{figure}
\centering
\includegraphics[width=.49\linewidth]{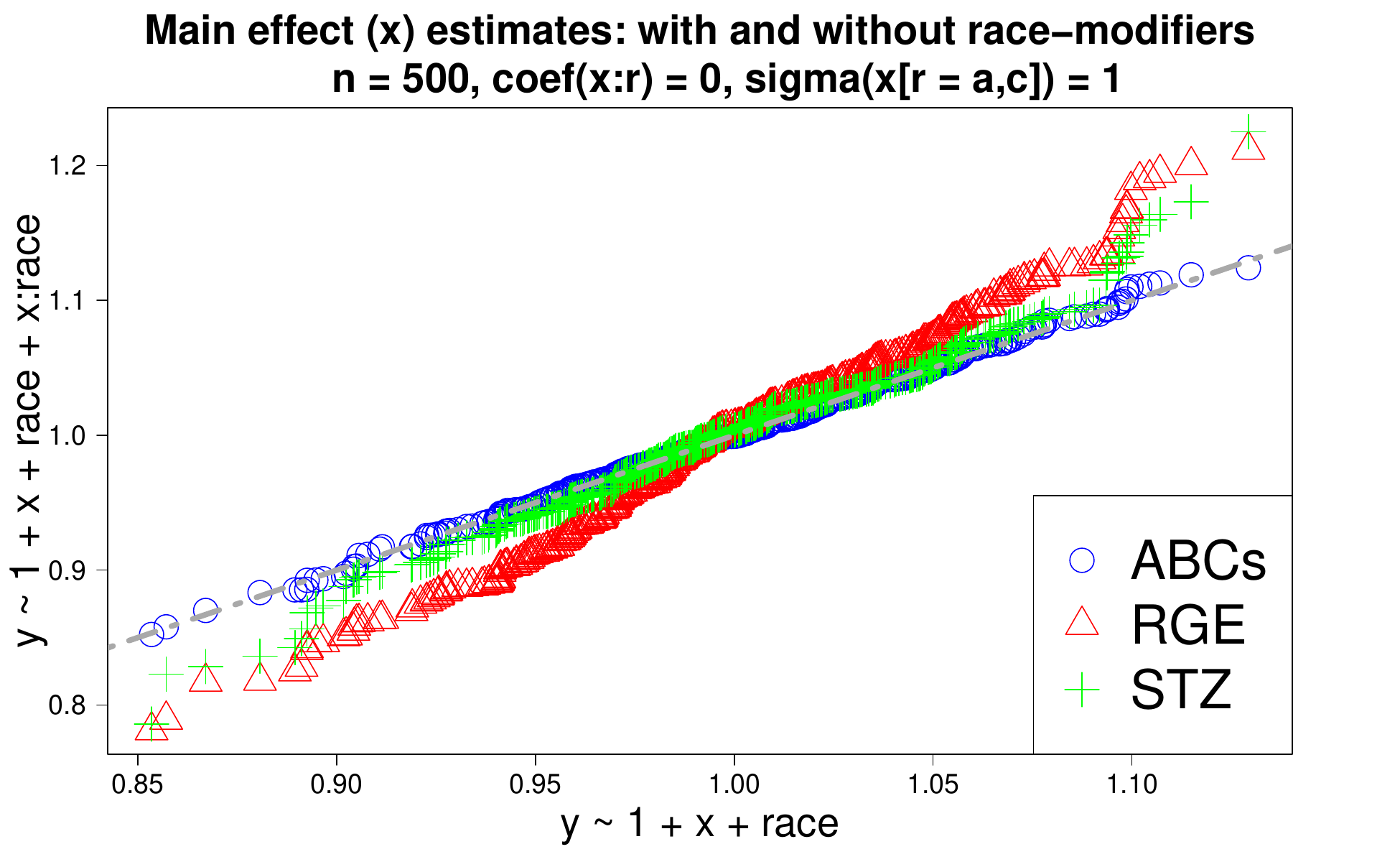}
\includegraphics[width=.49\linewidth]
{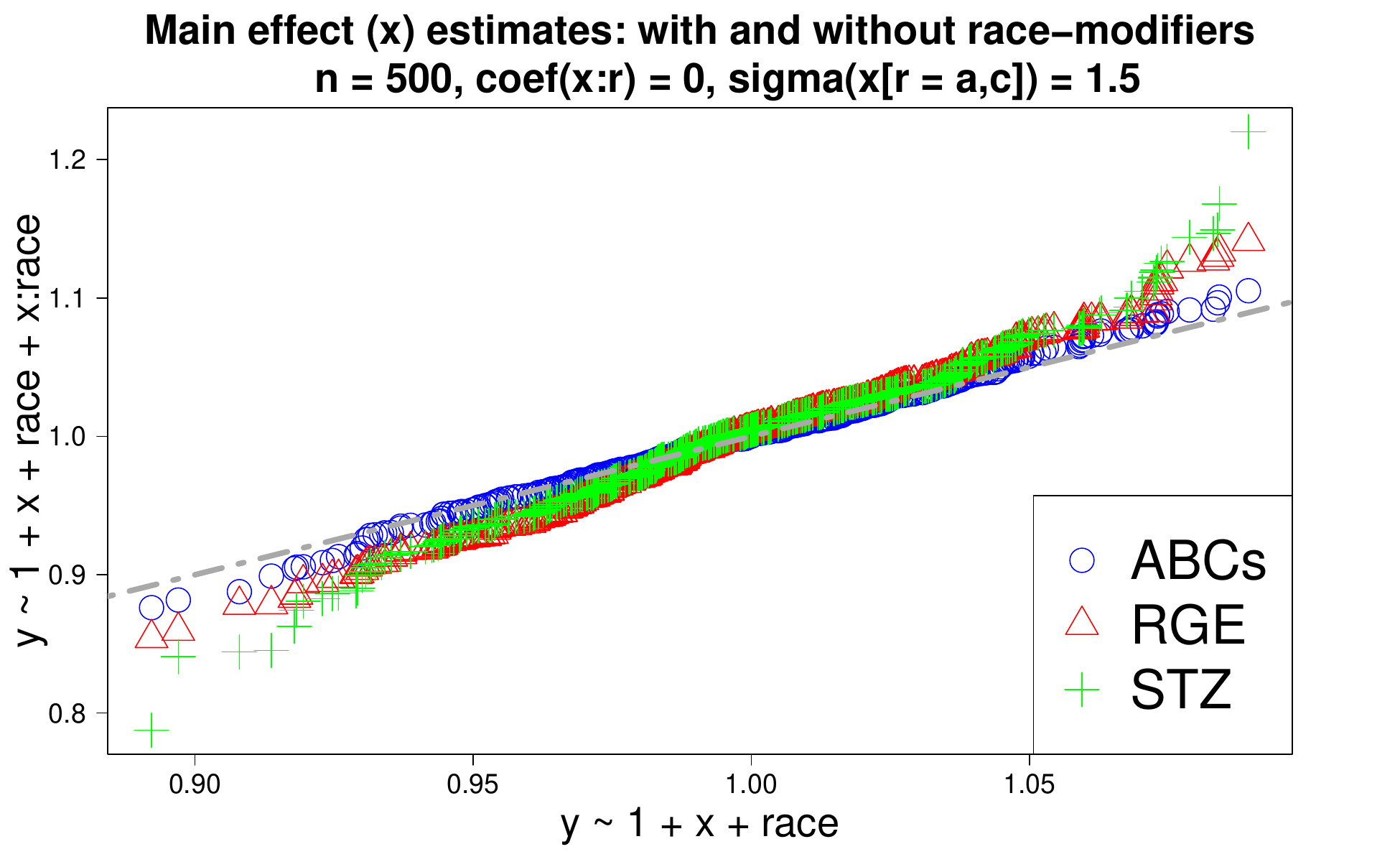}

\includegraphics[width=.49\linewidth]{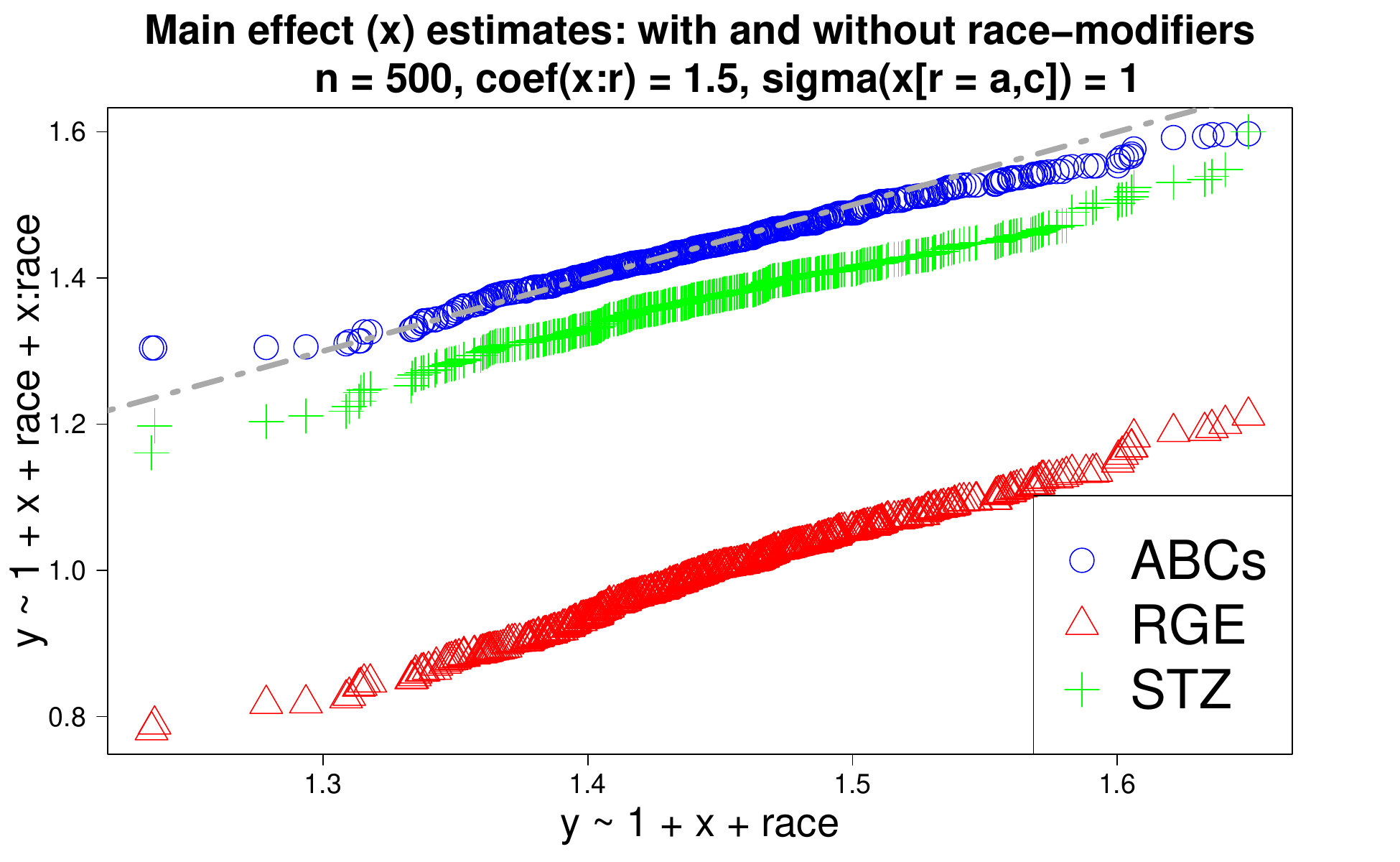}
\includegraphics[width=.49\linewidth]{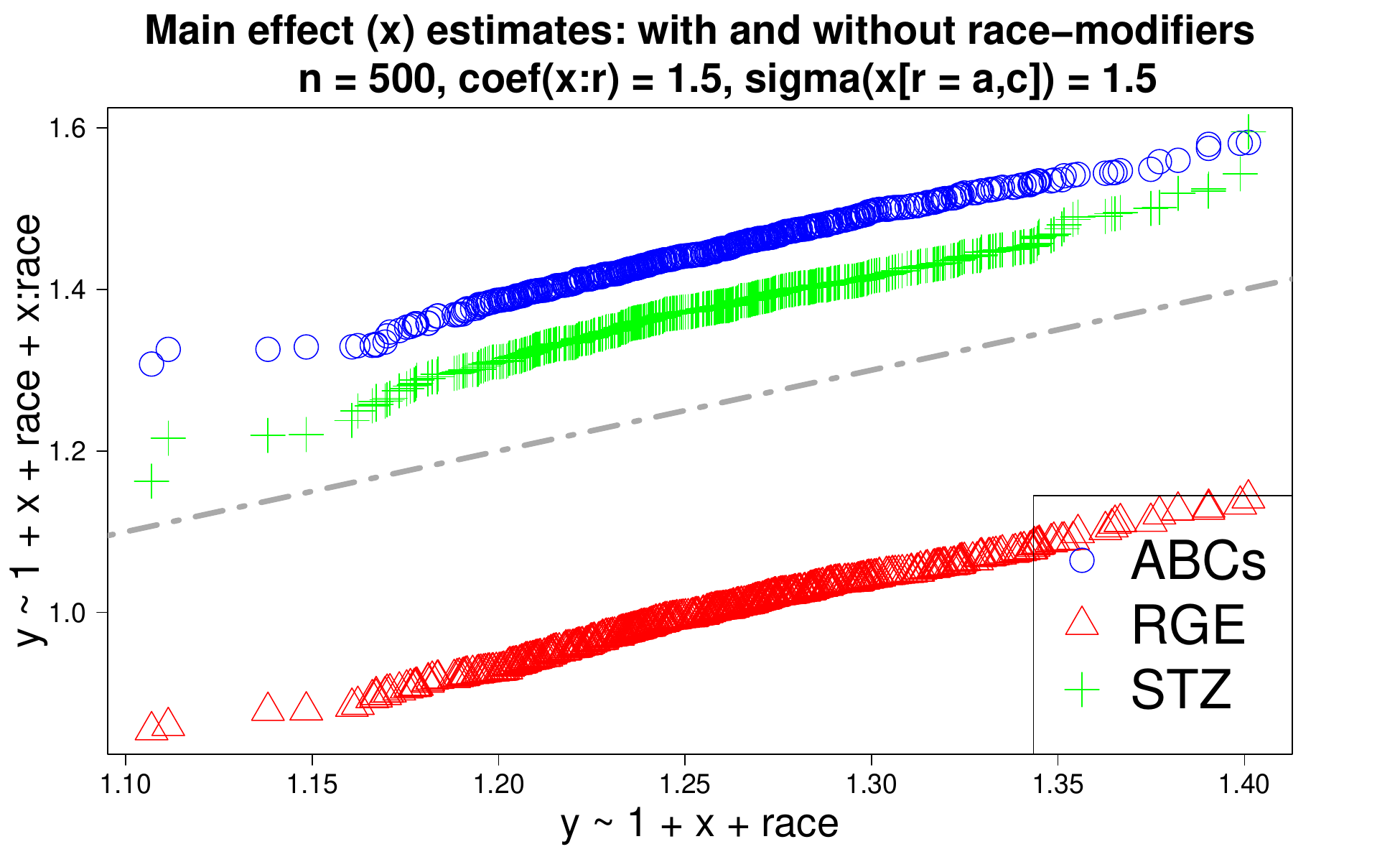}
\caption{\small Estimates for the main $x$-effect 
for models that do (y-axis) and do not (x-axis) include the \texttt{x:race} interaction across 500 simulated datasets. Under ABCs, the estimates are nearly invariant ($45^\circ$ line) as long as the deviations from equal-variance \eqref{eq-v} are  mild ($\sigma_{ac} = 1$, left), regardless of whether the true interaction effect is zero ($\gamma = 0$, top) or large ($\gamma = 1.5$, bottom). When $\gamma$ is large \emph{and} \eqref{eq-v} is strongly violated (bottom right), ABCs no longer offer invariance under Theorem~\ref{thm-cts}. RGE and STZ offer no such invariance and depend critically on $\gamma$.  
}
\label{fig:sim-x-race-est}
\end{figure}

The SEs are in Figure~\ref{fig:sim-x-race-se}. As long as the violations of \eqref{eq-v} are mild ($\sigma_{ac} = 1$), the SEs of the $x$-effect, under ABCs, are 1) nearly identical between the main-only and cat-modified models when the true interaction effect is small  and 2) smaller for the cat-modified model  when the true interaction effect is large. Critically, including the \texttt{x:race} interaction \emph{under ABCs} does not sacrifice any statistical power for the main $x$-effect, and in some cases enhances it. This is decisively not the case for RGE or STZ: when the true interaction effect is zero, adding the \texttt{x:race} interaction decreases statistical power for the main $x$-effect. 

The same caveats apply as in Section~\ref{sims-cat-cat}: RGE, STZ, and ABCs are targeting different functionals of   $\mu(x, r)$, but again we argue that the estimation and inference properties of the ``main effects" are most ideal under ABCs. 




\begin{figure}
\centering
\includegraphics[width=.49\linewidth]{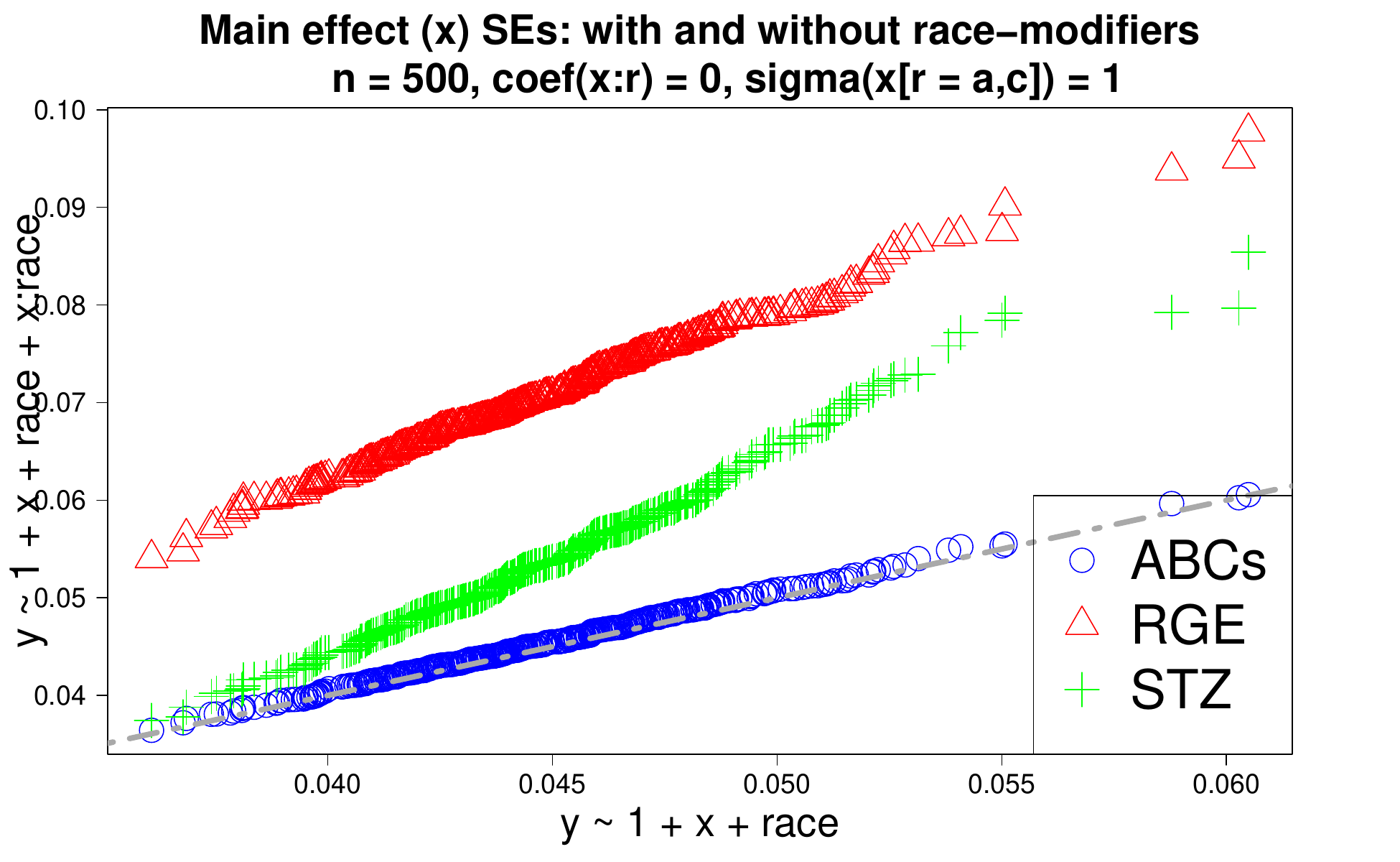}
\includegraphics[width=.49\linewidth]
{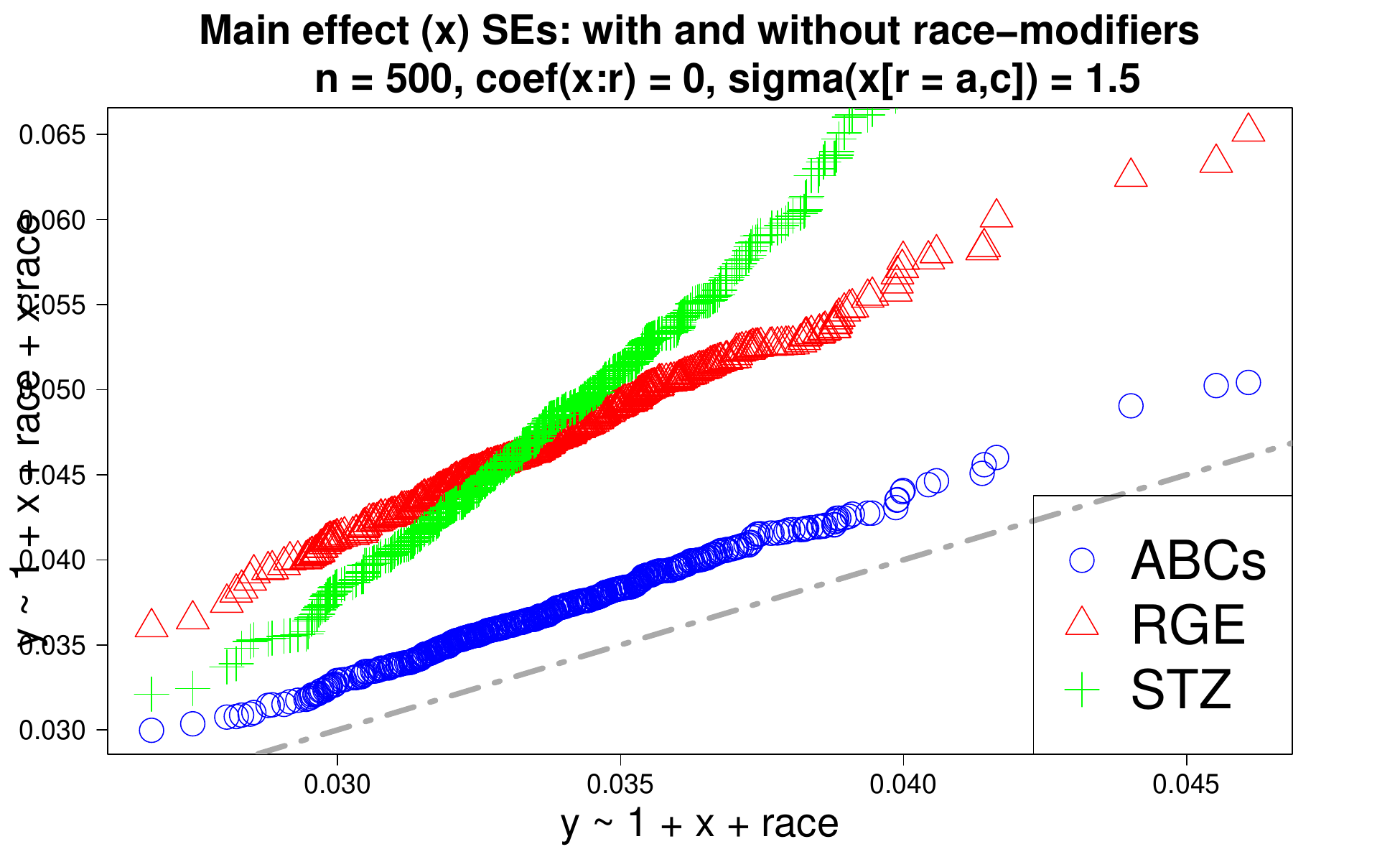}

\includegraphics[width=.49\linewidth]{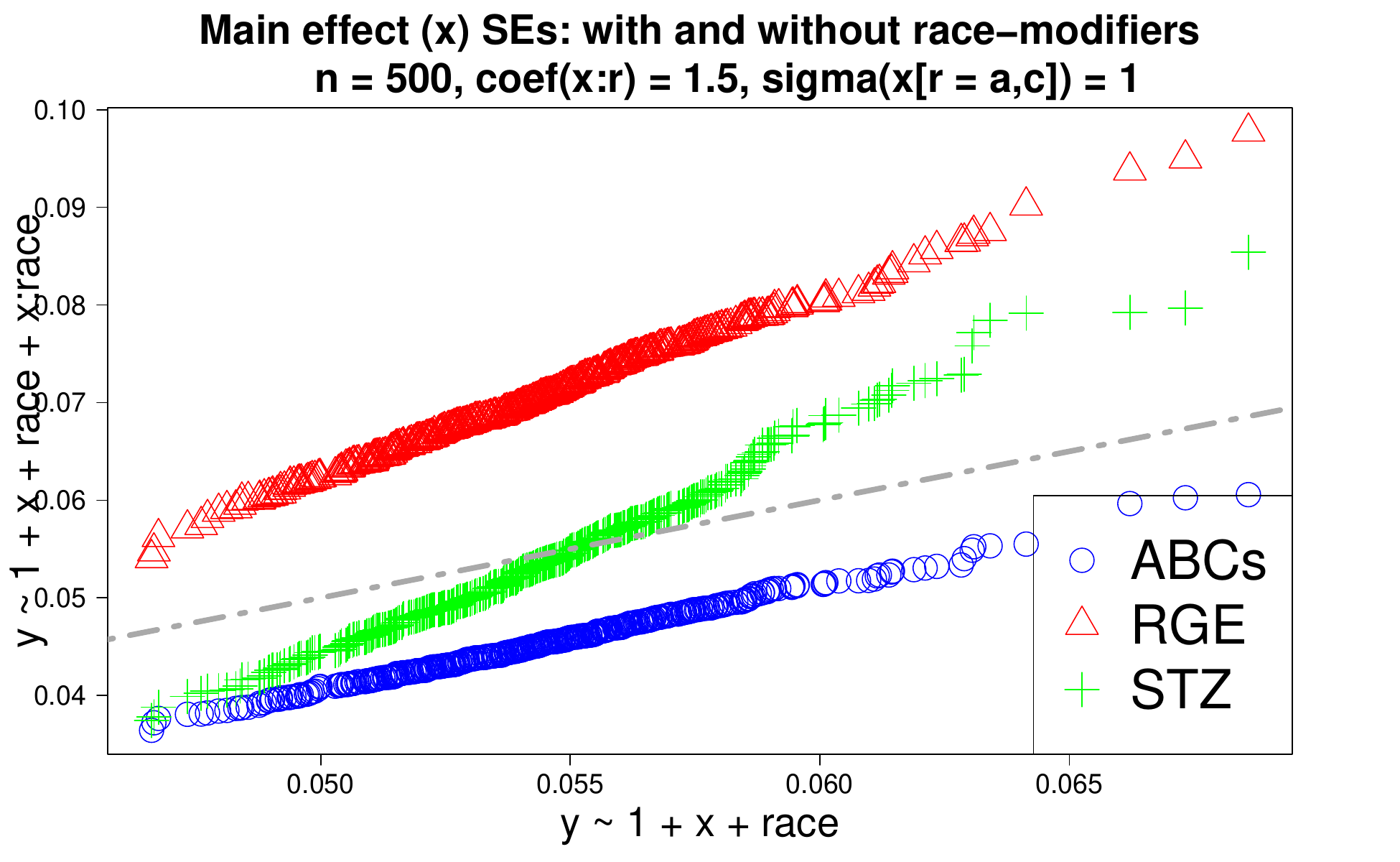}
\includegraphics[width=.49\linewidth]{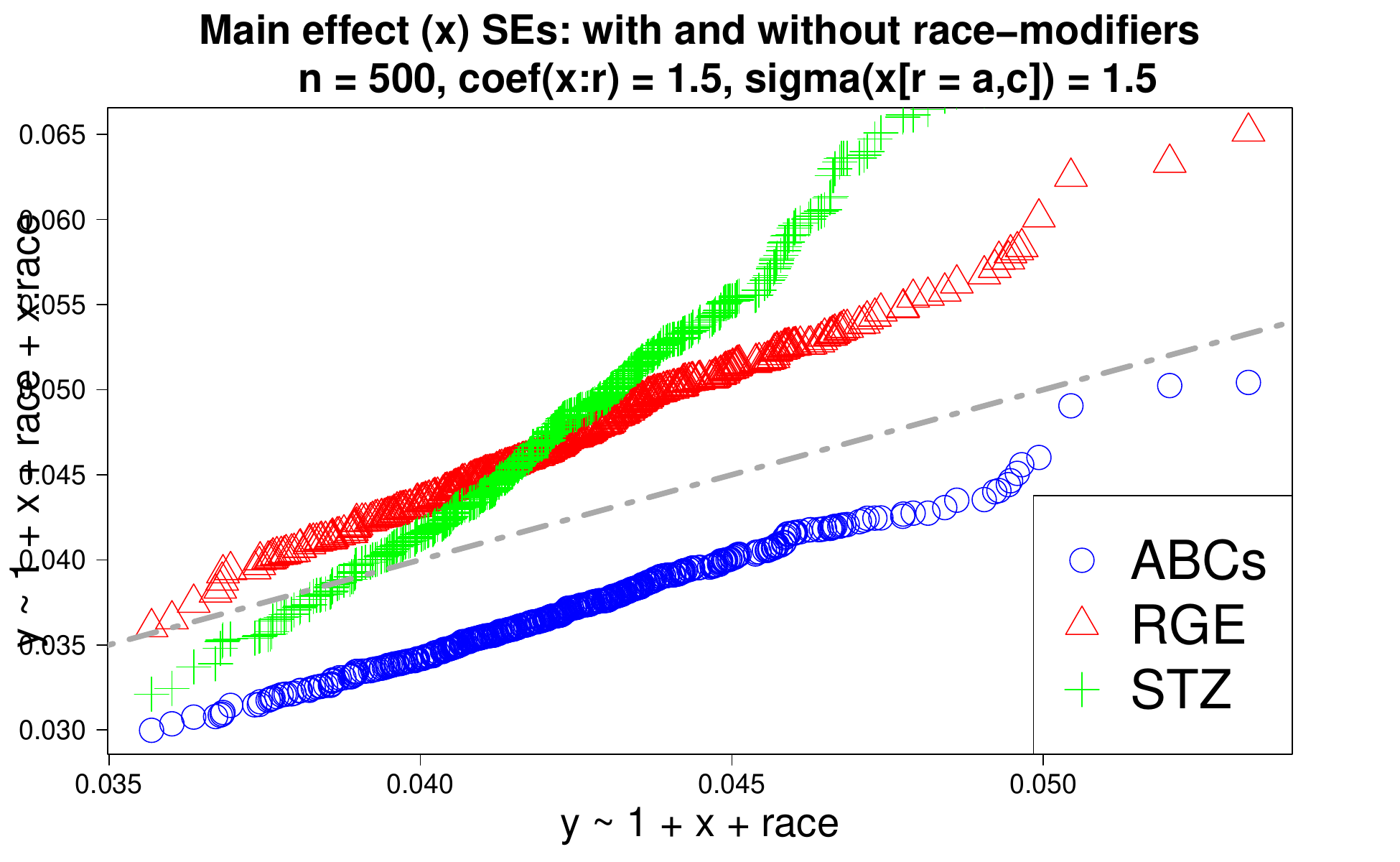}
\caption{\small Standard errors (SEs) for the main $x$-effect for models that do (y-axis) and do not (x-axis) include the \texttt{x:race} interaction across 500 simulated datasets. Under ABCs, the SEs are nearly identical between the two models ($45^\circ$ line) when the  true interaction effect is zero \emph{and} deviations from equal-variance \eqref{eq-v} are  mild   ($\gamma = 0$, $\sigma_{ac} = 1$, top left). If instead the interaction effect is large ($\gamma =1.5$, $\sigma_{ac} = 1$, bottom left), the SEs under ABCs reduce substantially (by about 15\%) for the model that includes the \texttt{x:race} interaction. These effects are not assured when \eqref{eq-v} is strongly violated ($\sigma_{ac} = 1.5$, right). All results are consistent with Theorem~\ref{thm-se}. Similar properties do \emph{not} occur for RGE or STZ, regardless of $\gamma$ and $\sigma_{ac}$.
}
\label{fig:sim-x-race-se}
\end{figure}


\subsection{Evaluating estimation and inference with cat-modifiers}\label{sec-sims-est}
We evaluate the practical impacts of the estimation invariance and enhanced power of ABCs. The goal is quantify the extent to which ABCs 1) maintain accurate estimates and precise uncertainty quantification when \emph{extraneous} cat-modifiers are included and 2) improve estimation and reduce uncertainty when \emph{necessary} cat-modifiers are included. The simulation design has three main features, described below.

First, we generate multiple, dependent categorical and continuous covariates. Dependent categorical variables \texttt{race} and \texttt{sex} are generated as in  Section~\ref{sims-cat-cat}, while $p=10$ dependent continuous variables 
are generated as follows: $x_j$ is drawn as in Section~\ref{sims-cts-cat} with $\sigma_{ac}=1$ for $j=1,3,5,7,9$ and  $N(0,1)$ for $j=2,4,6,8,10$. Some $x$-variables are correlated with \texttt{race}, which induces correlations among those $x$-variables with each other and with \texttt{sex}, while others are uncorrelated.  

Second, the regression coefficients are constructed  to satisfy both RGE and ABCs. 
These include an intercept $\alpha_0=1$,  active main $x$-effects $\alpha_j = 1$ for $j=1,\ldots,5$, \texttt{race} main effects $\beta_b = 1$ and $\beta_c = \beta_d = -1$, and cat-modifiers $\gamma_{b,j} = \gamma$ and $\gamma_{c,j} = \gamma_{d,j} = -\gamma$ for $j=1,\ldots,5$; the remaining coefficients are all zero. RGE is enforced because all reference group coefficients are zero  ($\beta_a  = 0$, $\beta_{uu} = 0$, $\gamma_{a,uu} = 0$, and $\gamma_{a, j} = 0 $ for all $j=1,\ldots,p$), while ABCs are satisfied for the \emph{population} proportions $( \pi_a,  \pi_b,  \pi_c,  \pi_d) = (0.4, 0.3, 0.2, 0.1)$. Thus, it is meaningful to compare coefficient estimates and inference between RGE and ABCs (STZ is not satisfied and thus excluded). ABCs actually use the \emph{sample} proportions and are at a slight disadvantage. All \texttt{sex} main and interaction effects are zero, since this is the only way to satisfy both RGE and ABCs for a variable with two groups.

Third, a parameter $\gamma$ controls the magnitude of the cat-modifier (\texttt{x:race}) effect. We consider $\gamma = 0$ for \emph{extraneous} cat-modifiers and $\gamma = 1.5$ for \emph{necessary} cat-modifiers.  

Using these covariates and coefficient values, the response variable $y$ is simulated with expectation \eqref{reg-cm}  plus Gaussian errors and a signal-to-noise ratio of one. We vary the sample size $n \in \{200, 500, 1000\}$  and repeat this process to create 500 synthetic datasets. 

For each synthetic dataset, we fit the main-only  model \texttt{y $\sim$ x$_1$ + \ldots + x$_p$ + sex + race} and the cat-modified model \texttt{y $\sim$ (x$_1$ + \ldots + x$_p$ + sex)*race} that includes \texttt{race} interactions with all continuous covariates and \texttt{sex}, both under ABCs and RGE. The main-only model is favored when $\gamma = 0$, while the opposite is true for $\gamma > 0$. Either way, the true data-generating process is sparse, so both models include many extraneous parameters. The main-only model includes 15 identifiable parameters (9 true signals)  while the cat-modified model includes 48 identifiable parameters. When $\gamma = 0$, the cat-modified model estimates 33 extraneous (identified) parameters; even when $\gamma > 0$, only 15 of those cat-modifier effects are nonzero.

Evaluation primarily focuses on the  main $x$-effects  ($\alpha_1,\ldots,\alpha_{10}$), which isolates the impacts of including  extraneous ($\gamma = 0$) or necessary ($\gamma > 0$) cat-modifiers on estimation and inference for the main effects. For benchmarking, we also include evaluations for all (main and cat-modifier) coefficients. Note that for the main-only model, RGE and ABCs produce main $x$-effect estimates and SEs that are identical and nearly identical, respectively. 

Estimation accuracy is evaluated by root mean squared error (RMSE) for the regression coefficients (Figure~\ref{fig:sim-est}). The cat-modified model \emph{with ABCs} preserves estimation accuracy of the main $x$-effects, even when (all 33) cat-modifier effects are included unnecessarily (Figure~\ref{fig:sim-est}, top left). When \emph{some} cat-modifiers are necessary, the cat-modified model with ABCs delivers slightly more accurate main $x$-effect estimates than the main-only models (Figure~\ref{fig:sim-est}, bottom left). Neither result holds for RGE. By comparison, estimation accuracy across all coefficients 
(Figure~\ref{fig:sim-est}, right) overwhelmingly favors the correctly-specified model (main-only for $\gamma=0$, cat-modified for $\gamma = 1.5$), regardless of RGE or ABCs. This result is not surprising, but rather serves as a contrast to emphasize the extraordinary robustness of \emph{main} effect estimation accuracy for cat-modified models---but only under ABCs.  

\begin{figure}[h]
\centering
\includegraphics[width=.49\linewidth]{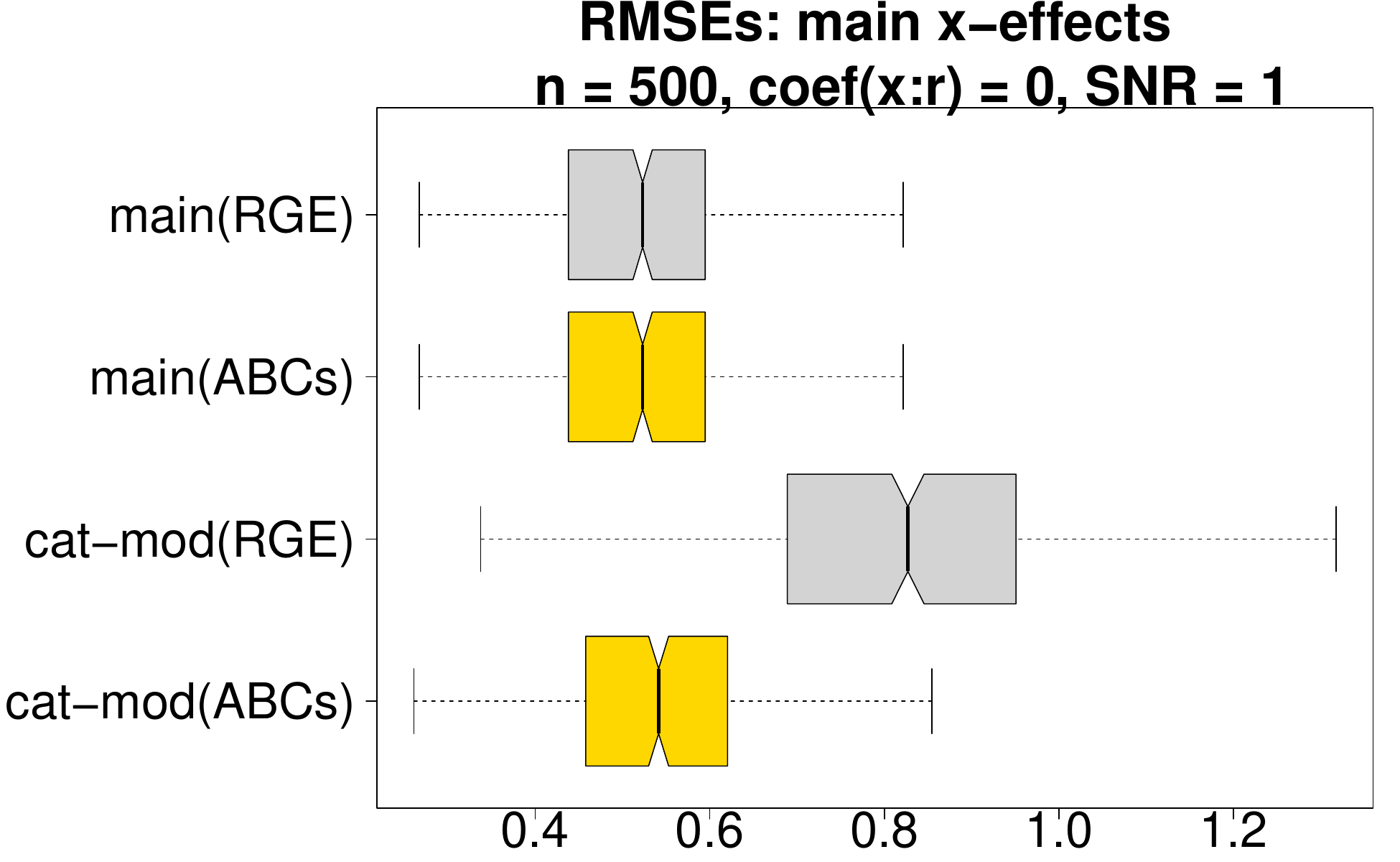}
\includegraphics[width=.49\linewidth]{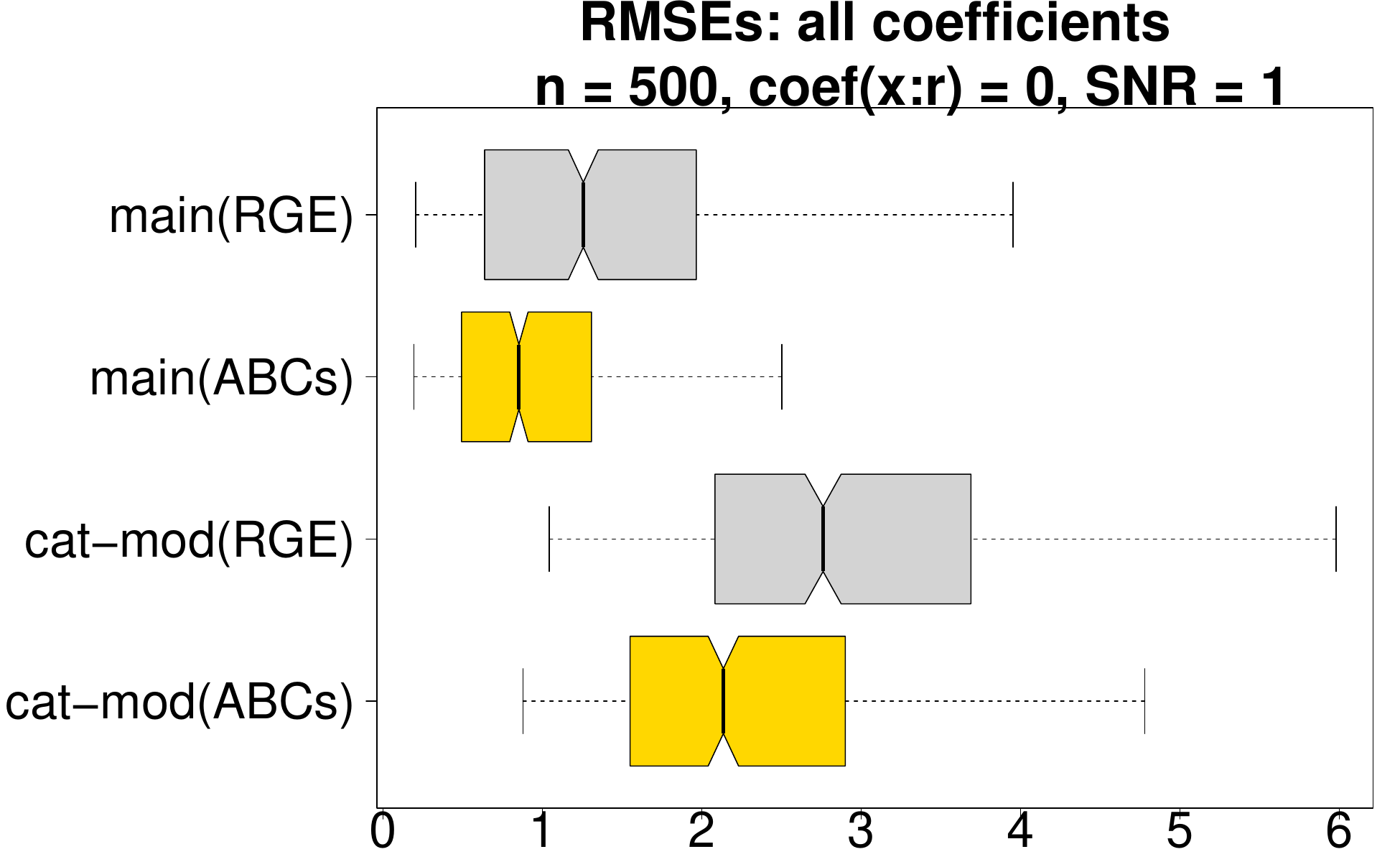}

\includegraphics[width=.49\linewidth]{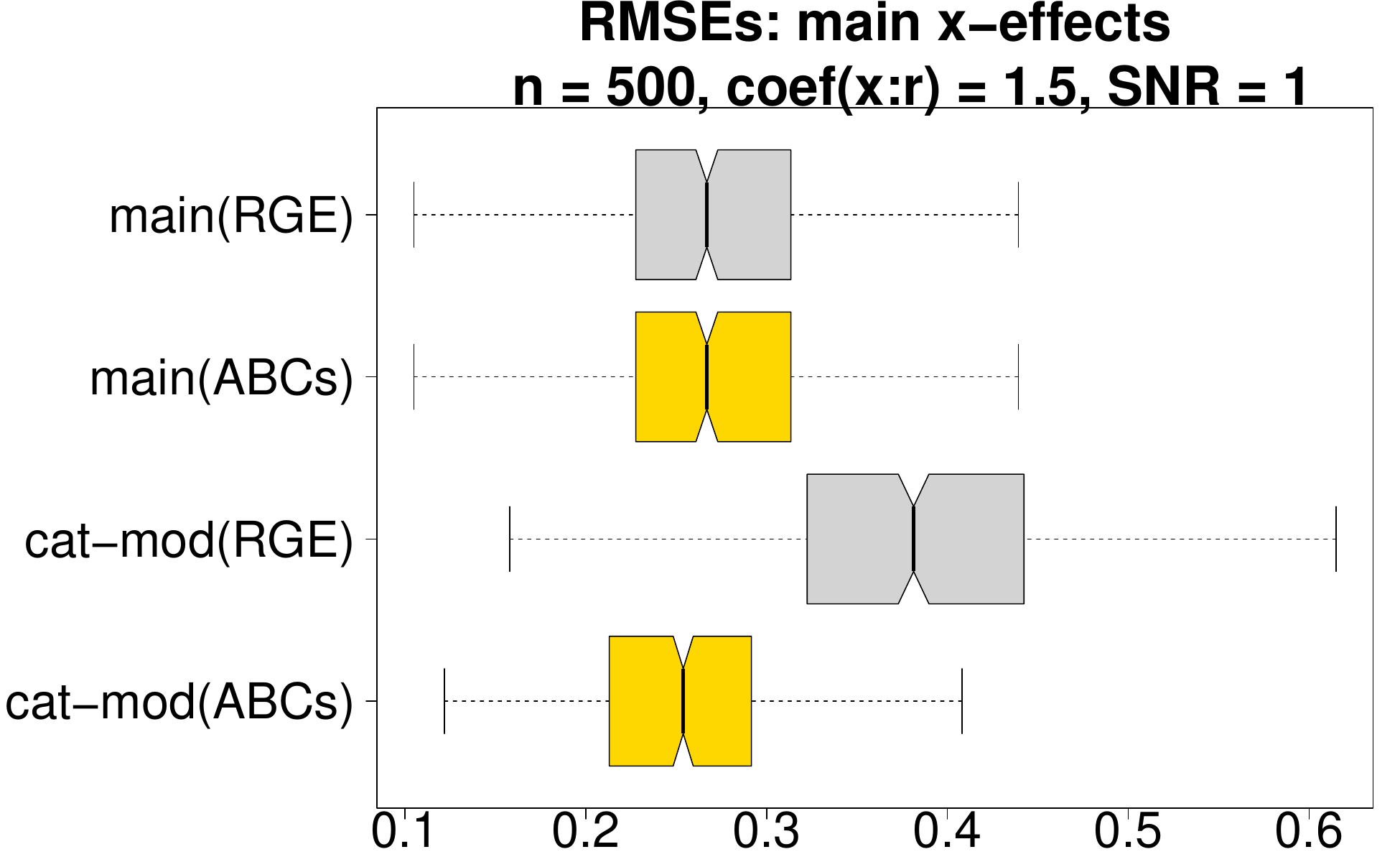}
\includegraphics[width=.49\linewidth]{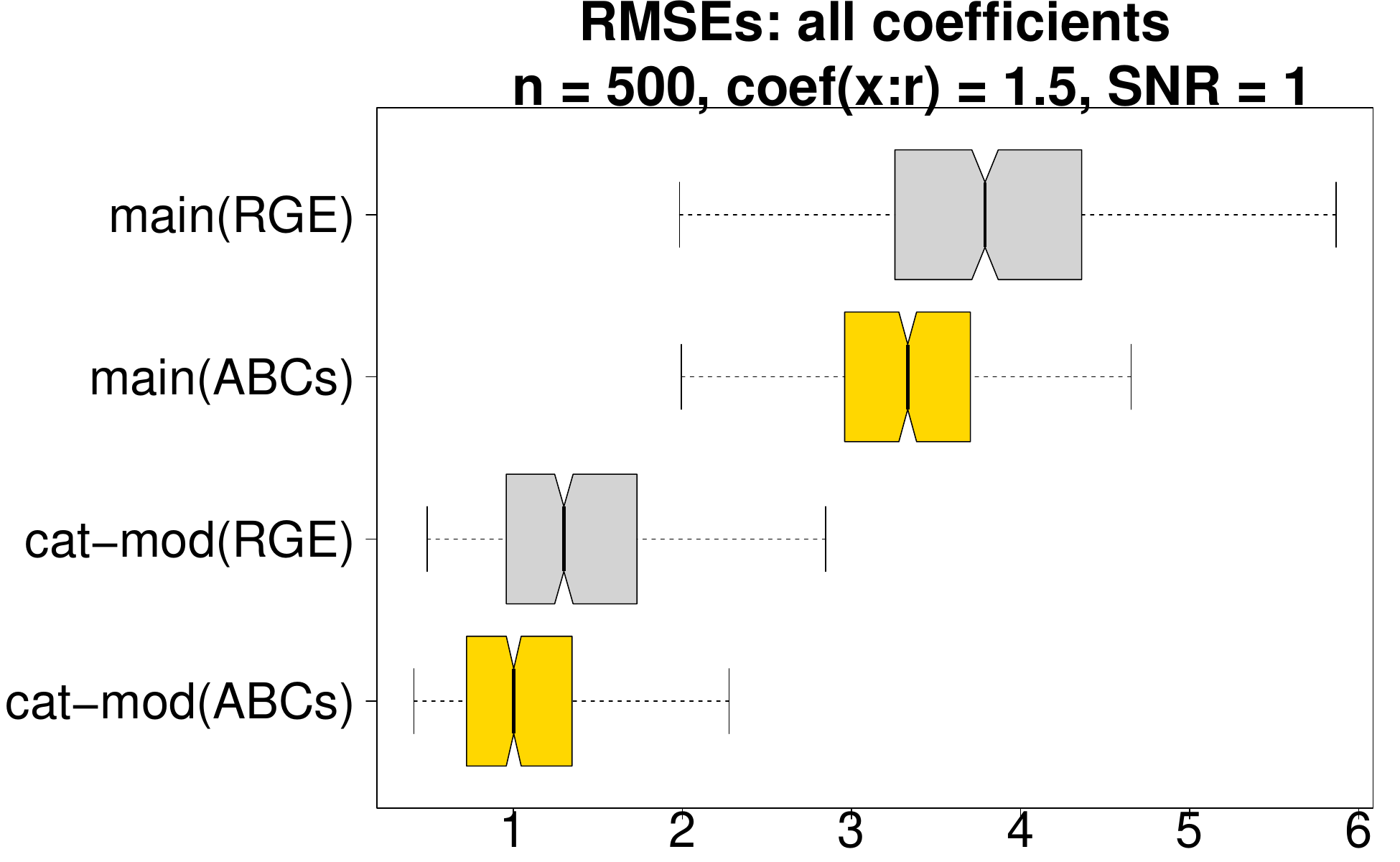}

\caption{\small RMSEs for the main $x$-effects (left) and all coefficients (right) under main-only and cat-modified models with ABCs (gold)  and RGE (gray). Boxplots are across 500 simulations; nonoverlapping notches indicate a difference in medians. Under ABCs, the cat-modified model main $x$-effect estimates are just as accurate as the main-only ones, even when the cat-modifiers are extraneous (top left), with slight gains when the cat-modifiers are necessary (bottom left). Neither result holds for RGE. For comparison, the accuracy across all coefficients  is primarily determined by whether the correct model (main-only, top right; cat-modified, bottom right) is used. 
}
\label{fig:sim-est}
\end{figure}

Inference is evaluated by mean interval widths and empirical coverage for 95\% confidence intervals for the regression coefficients (Figure~\ref{fig:sim-int}); narrow intervals are preferred, subject to nominal coverage.  For ABCs, the cat-modified model offers nearly the same statistical power for the main $x$-effects as does the main-only model, even when (all 33) cat-modifier effects are included extraneously (Figure~\ref{fig:sim-int}, top left). Compare that to inference for all coefficients (Figure~\ref{fig:sim-int}, top right): here, the inclusion of extraneous cat-modifiers inflates interval widths by more than 300\%. Clearly, this inferential robustness against extraneous cat-modifiers is a special property for 1) main effects and 2) ABCs.  When \emph{some} cat-modifiers are necessary, the cat-modified model with ABCs  improves statistical power for the main $x$-effects compared to the main-only models. Again, no such results hold for RGE, for which the cat-modified model consistently sacrifices statistical power. Finally, as expected, main-only models fail to provide coverage for active cat-modified parameters (Figure~\ref{fig:sim-int}, bottom right).

\begin{figure}[h]
\centering
\includegraphics[width=.49\linewidth]{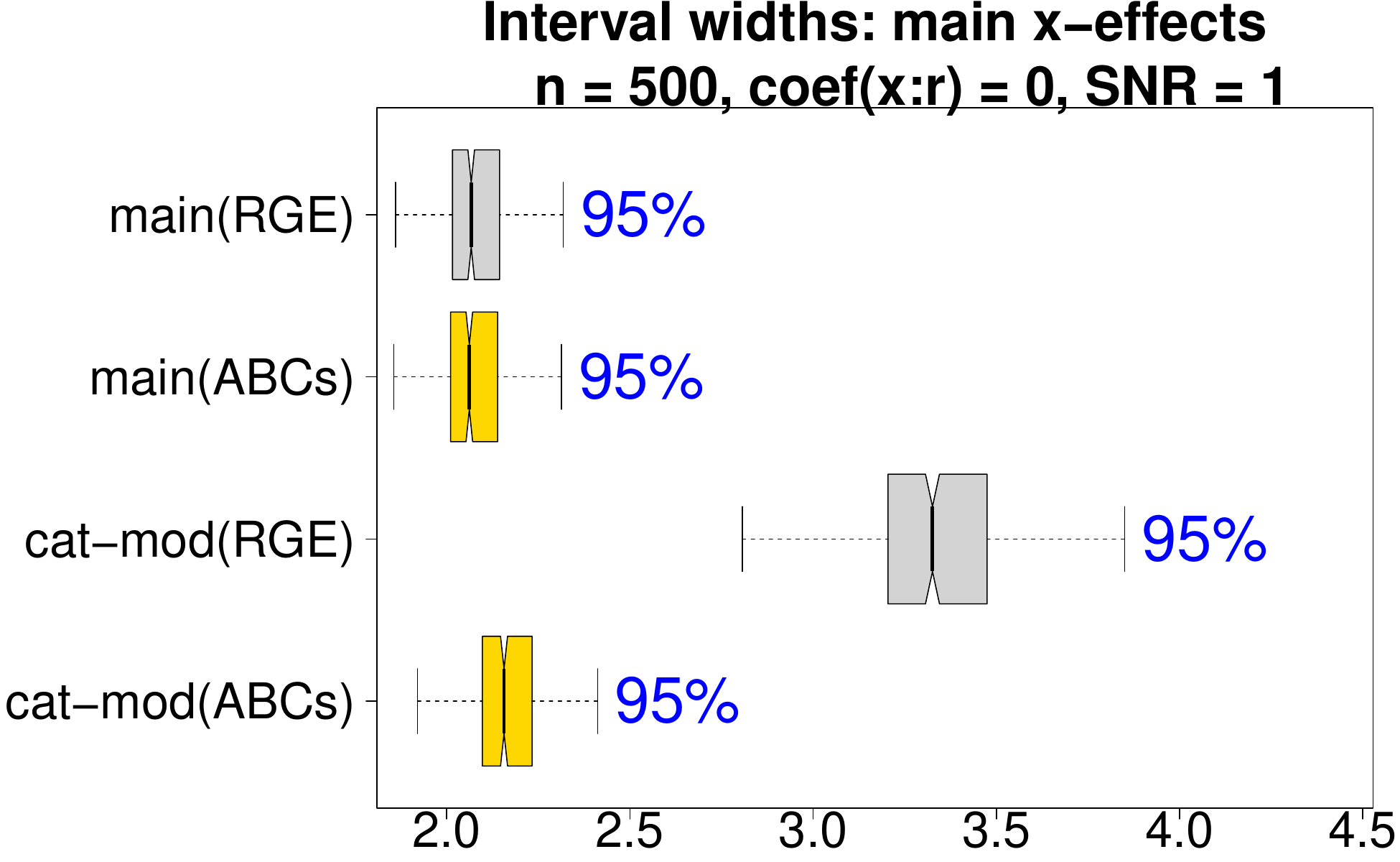}
\includegraphics[width=.49\linewidth]{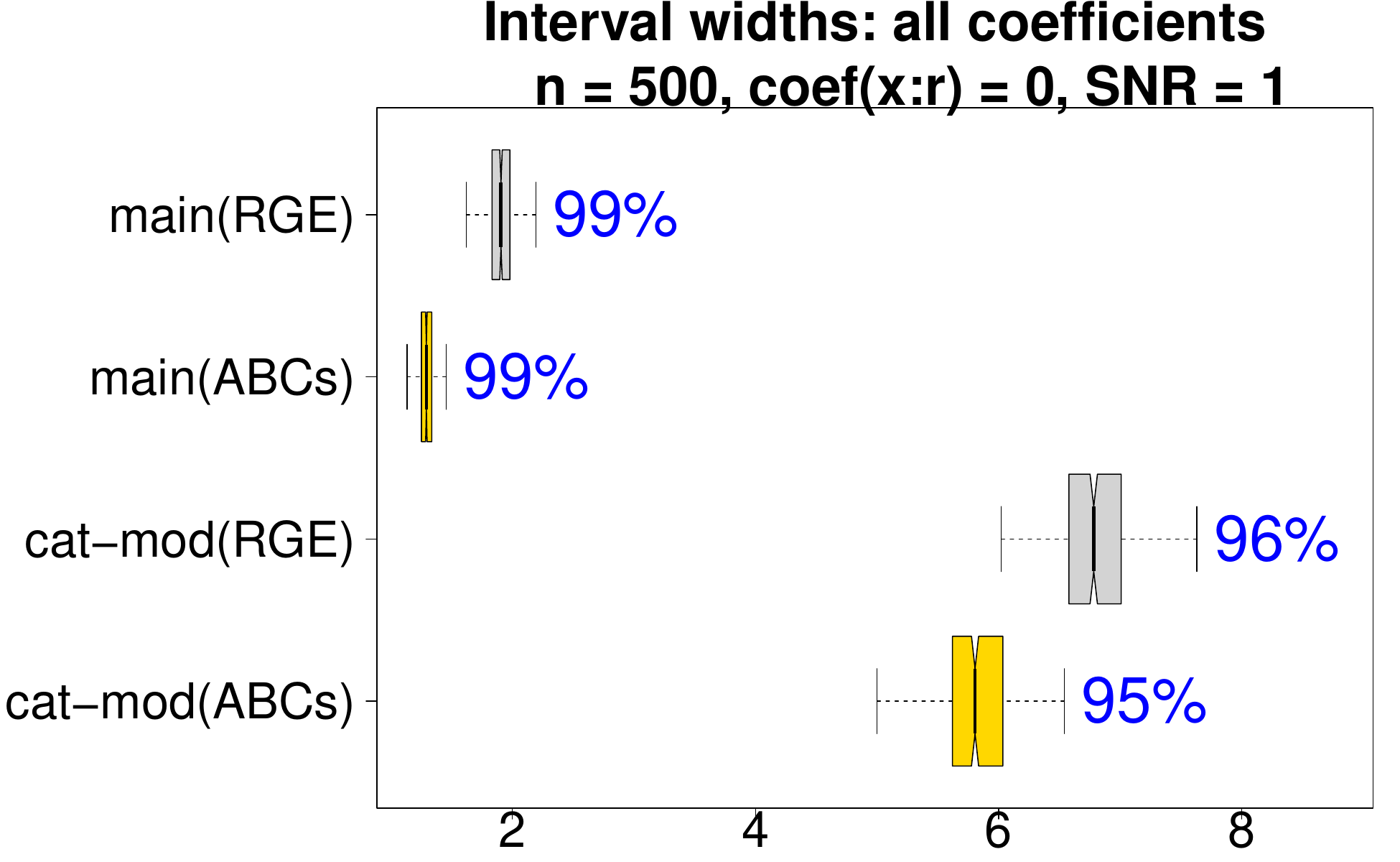}

\includegraphics[width=.49\linewidth]{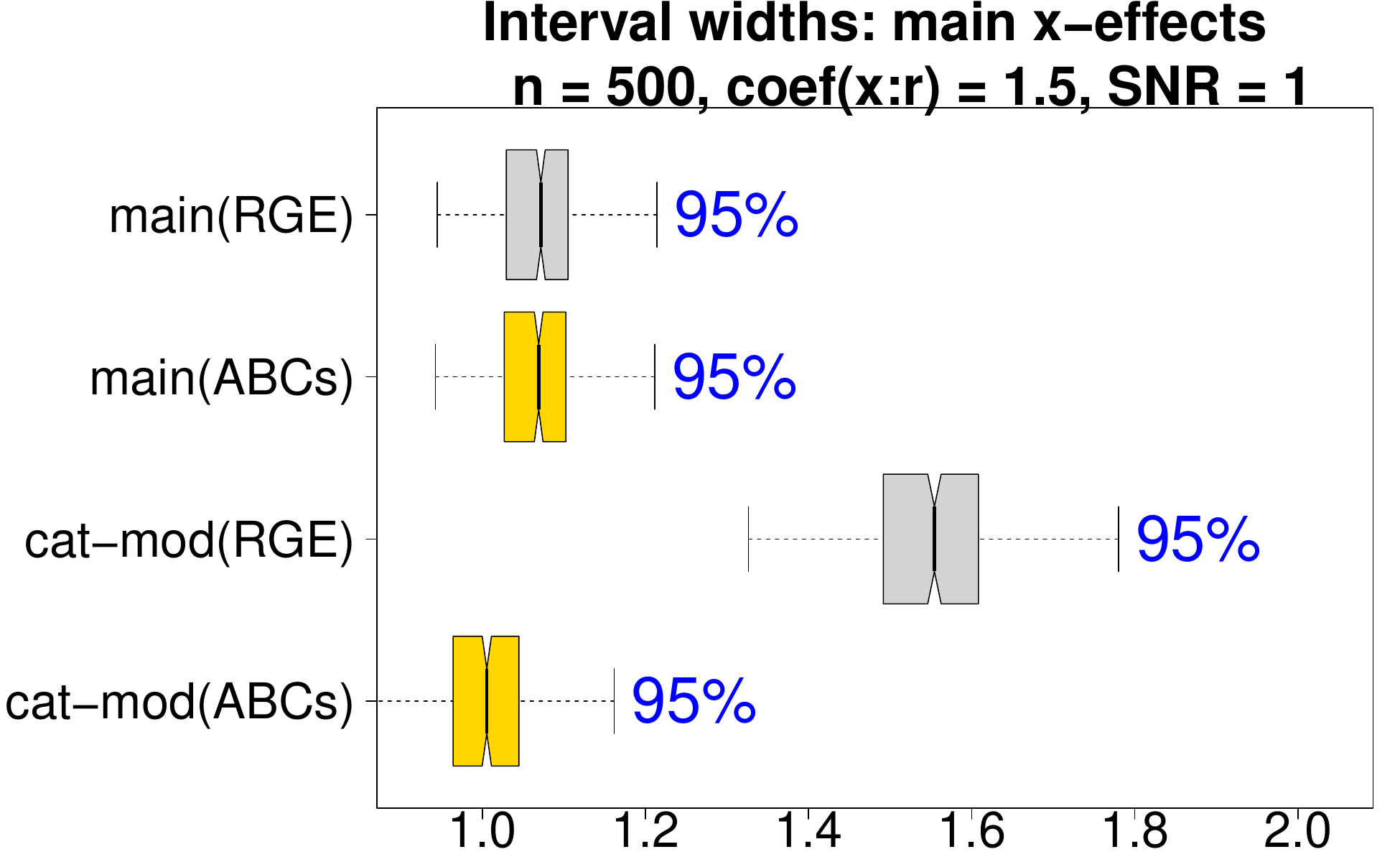}
\includegraphics[width=.49\linewidth]{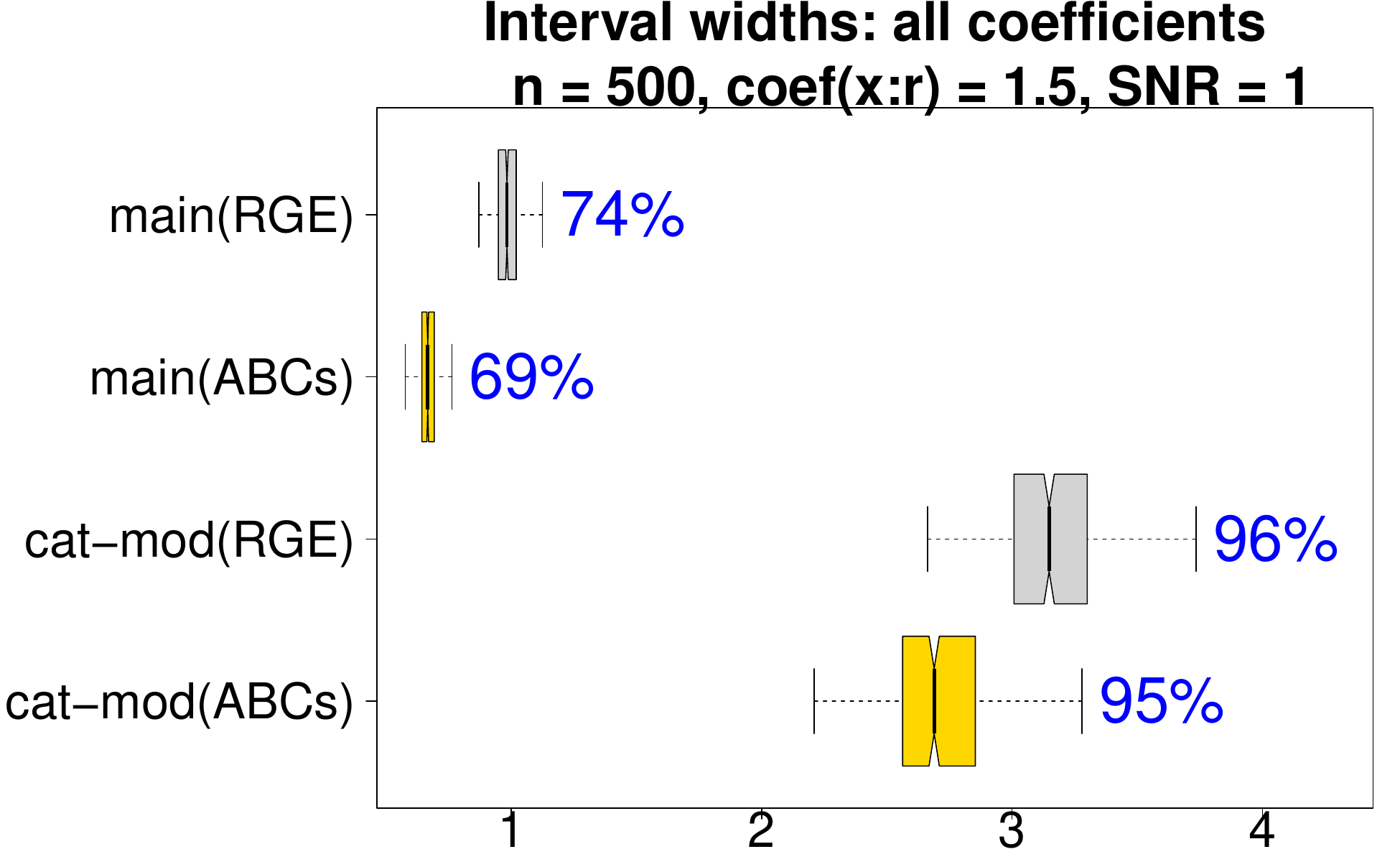}

\caption{\small Interval widths (boxplots) and empirical coverage (annotations) for 95\% confidence intervals for the main $x$-effects (left) and all coefficients (right) under main-only and cat-modified models with ABCs (gold) and RGE (gray). Under ABCs, inference for the main $x$-effects is nearly as powerful for the cat-modified model, even when the cat-modifiers are extraneous (top left), with greater power  when the cat-modifiers are necessary (bottom left). Neither result holds for RGE. For comparison, extraneous cat-modifiers increase interval widths overall (top right), while the omission of necessary cat-modifiers sacrifices coverage for the main-only models (bottom right). 
}
\label{fig:sim-int}
\end{figure}

The supplementary material includes additional results for smaller ($n=200$) and larger ($n=1000$) sample sizes; predictive evaluations based on RMSEs for $\mu(\bm x, \bm c)$; comparisons between ABCs and RGE for lasso and ridge regression, also including an ``overparametrized" version that does not impose any constraints; and modifications for $\sigma_{ac}=1.5$ that strongly violate the equal-variance condition \eqref{eq-v}, with similar results as for $\sigma_{ac}=1$.

\section{Application}\label{sec-app}
We apply cat-modified regression to assess heterogeneity among factors linked to STEM educational outcomes. Our dataset\footnote{Data management, access, and analysis are governed by data use
agreements and an Institutional Review Board–approved research protocol at the
University of Illinois Chicago.} links three administrative datasets to provide individual-level data for $n=27,638$ children in North Carolina (NC):   NC Detailed Birth Records, NC Blood Lead  Surveillance, and NC Standardized Testing Data; details are provided elsewhere \citep{ChildrensEnvironmentalHealthInitiative2020,KowalPRIME2020,Bravo2022}. The STEM educational outcome variable $y_i$ is the end-of-4th-grade standardized math score for student $i$, centered and scaled by year of test (2010, 2011, or 2012). These math scores are linked with a rich collection of  demographic, social, and environmental exposure variables. The continuous covariates are  racial (residential) isolation (\texttt{RI}), which is a measure of structural racism based on neighborhood information; blood lead level (\texttt{BLL}), which measures lead exposure; 
birthweight percentile (\texttt{BWTpct}); mother's age at time of child's birth (\texttt{mAge}); and exposure to the air pollutant $\mbox{PM}_{2.5}$  during the year prior to the exam (\texttt{PM2.5}). The continuous covariates are centered and scaled. The categorical covariates are mother's race (\texttt{race}), child's sex (\texttt{sex}), mother's education level (\texttt{mEdu}), and an indicator of economically disadvantaged (\texttt{EconDisadv}) determined by participation in the National Lunch Program; see Table~\ref{tab:results} for categorical levels and proportions. 

Our linear regression analysis spans from main-only models to a variety of cat-modified models, expanding significantly upon the simple models from  Tables~\ref{tab:ex-x}~and~\ref{tab:ex-cat}. First, we establish a \emph{main-only} model that includes each of these covariates (\texttt{RI}, \texttt{BLL}, \texttt{BWTpct}, \texttt{mAge}, \texttt{PM2.5}, \texttt{race}, \texttt{sex}, \texttt{mEdu}, and \texttt{EconDisadv}) but no interactions. 
The main-only model features a variety of interesting demographic, socio-economic, maternal, and environmental exposure variables, with  16 regression parameters (12 identified). Next, the \emph{race-modified} model adds an interaction between \texttt{race} and every other covariate. 
This expansion allows for heterogeneous effects of each variable by race, thus providing insights into the myriad impacts of race on each child's life course and educational outcomes,  with  52 regression parameters (30 identified). Finally, the \emph{cat-modified} model adds all pairwise categorical-continuous and categorical-categorical interactions. This instance of \eqref{reg-cm} allows the fullest (pairwise) extent of heterogeneous effects across the rich collection of demographic and socio-economic variables (\texttt{race}, \texttt{sex}, \texttt{mEdu}, and \texttt{EconDisadv}), with 103 regression parameters (55 identified). We fit each of these models under ABCs and RGE (references \texttt{White}, \texttt{Male}, lowest \texttt{mEdu} (\texttt{mEdu<HS}), and not \texttt{EconDisadv}). 

While each model offers potential for insight, a critical limitation of popular identification approaches, especially RGE, is that the estimates, inference, and interpretations of the main effects are highly sensitive to the choice of cat-modifiers. To see this, we present the main effect OLS estimates and 95\% confidences intervals across these models in Figure~\ref{fig:nc}. With RGE (right), the main effects shift and the intervals widen considerably---with increases from 160\% to 230\% in interval widths---upon adding race- (blue) and other (red) cat-modifiers. These main effects and accompanying interaction effects (not shown) are anchored at the reference groups and refer to different functionals of $\mu(\bm x, \bm c)$ under each model---even though the statistical output for the ``main effects" is typically presented identically, regardless of any cat-modifiers. Thus, while cat-modified models are essential for heterogeneous effects, there is a cost incurred under RGE: each additional  cat-modifier requires careful re-consideration of the main  and interaction effects, which impedes statistical analysis and undermines interpretability. 


\begin{figure}[h]
\centering
\includegraphics[width=.49\linewidth]{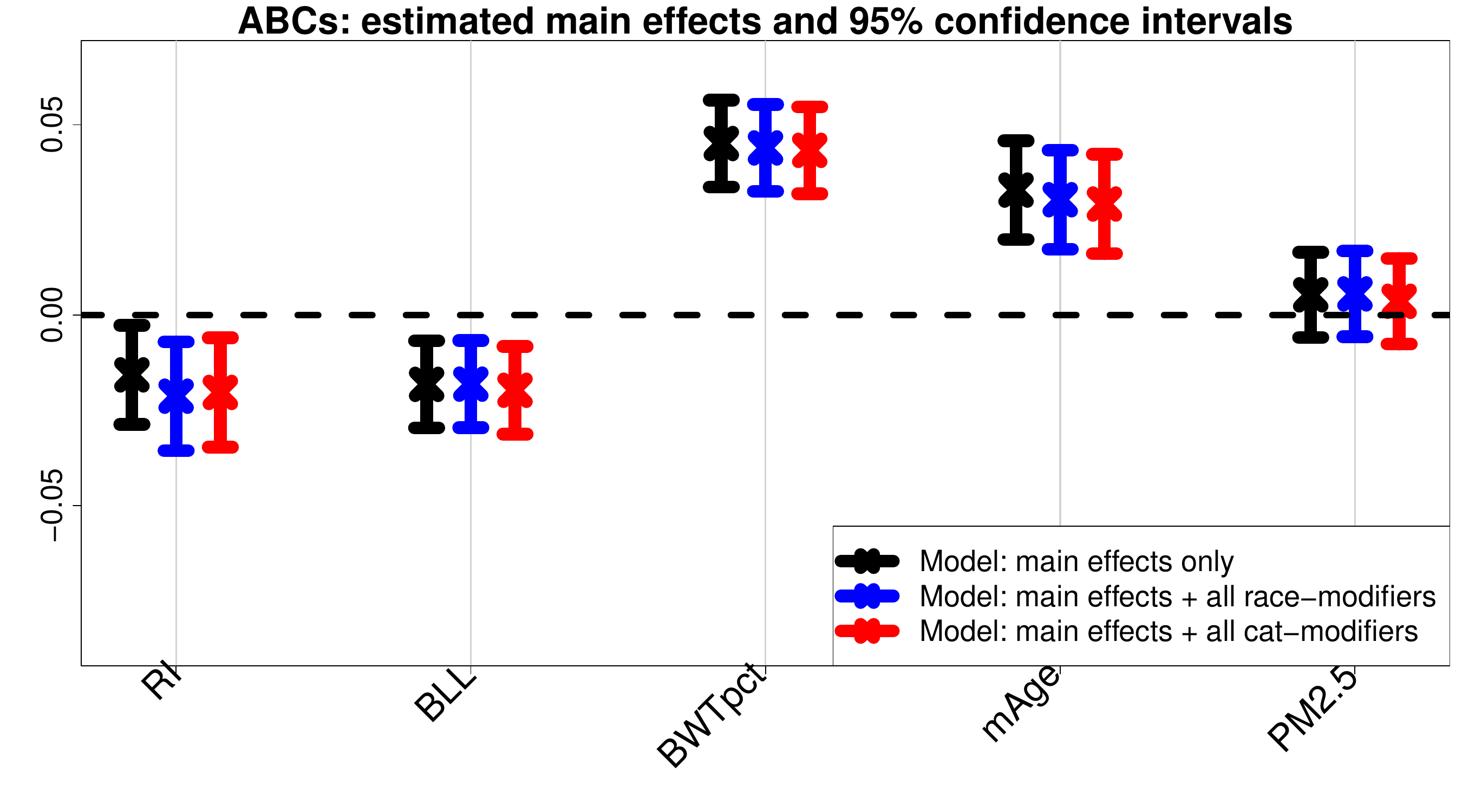}
\includegraphics[width=.49\linewidth]{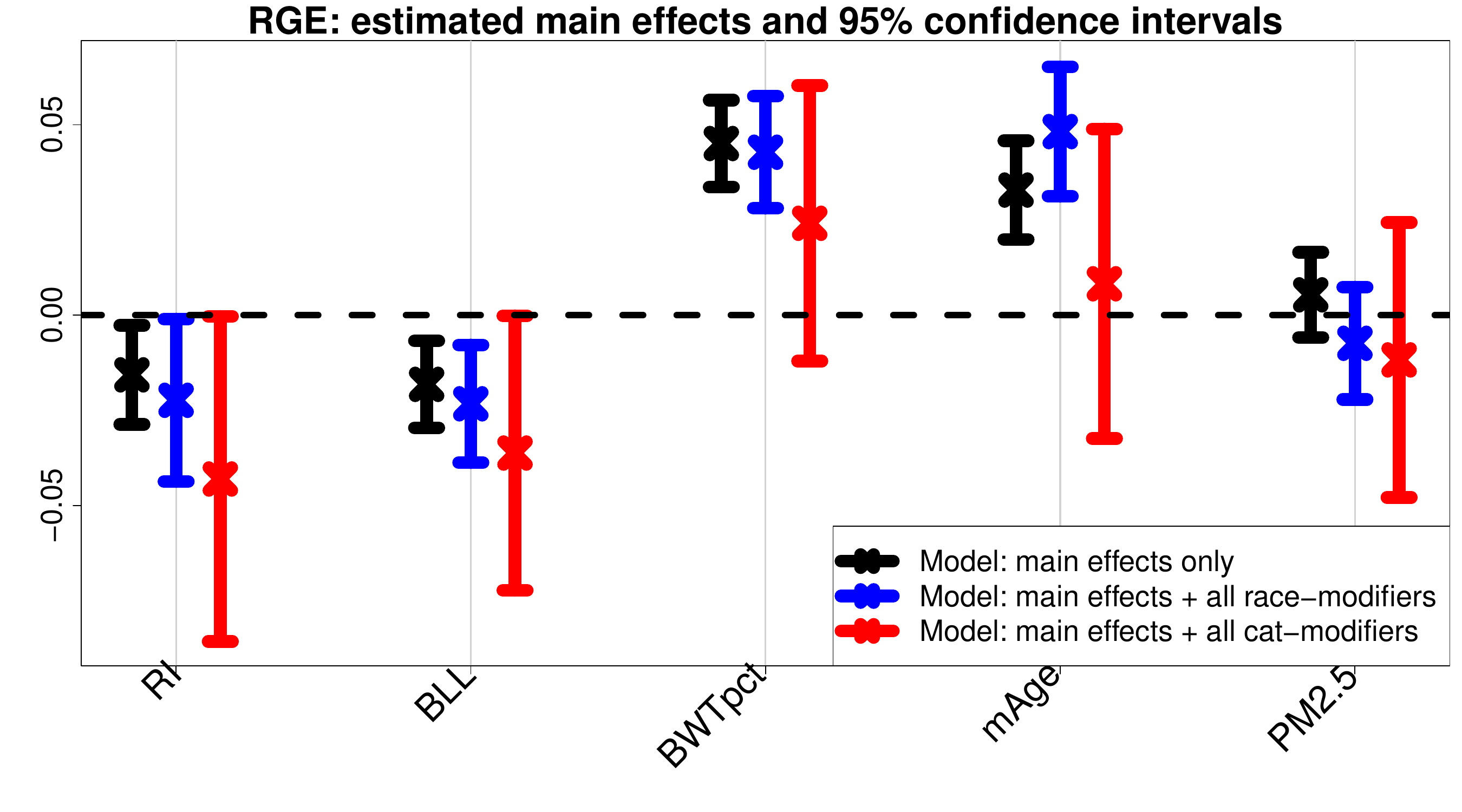}
\includegraphics[width=.49\linewidth]{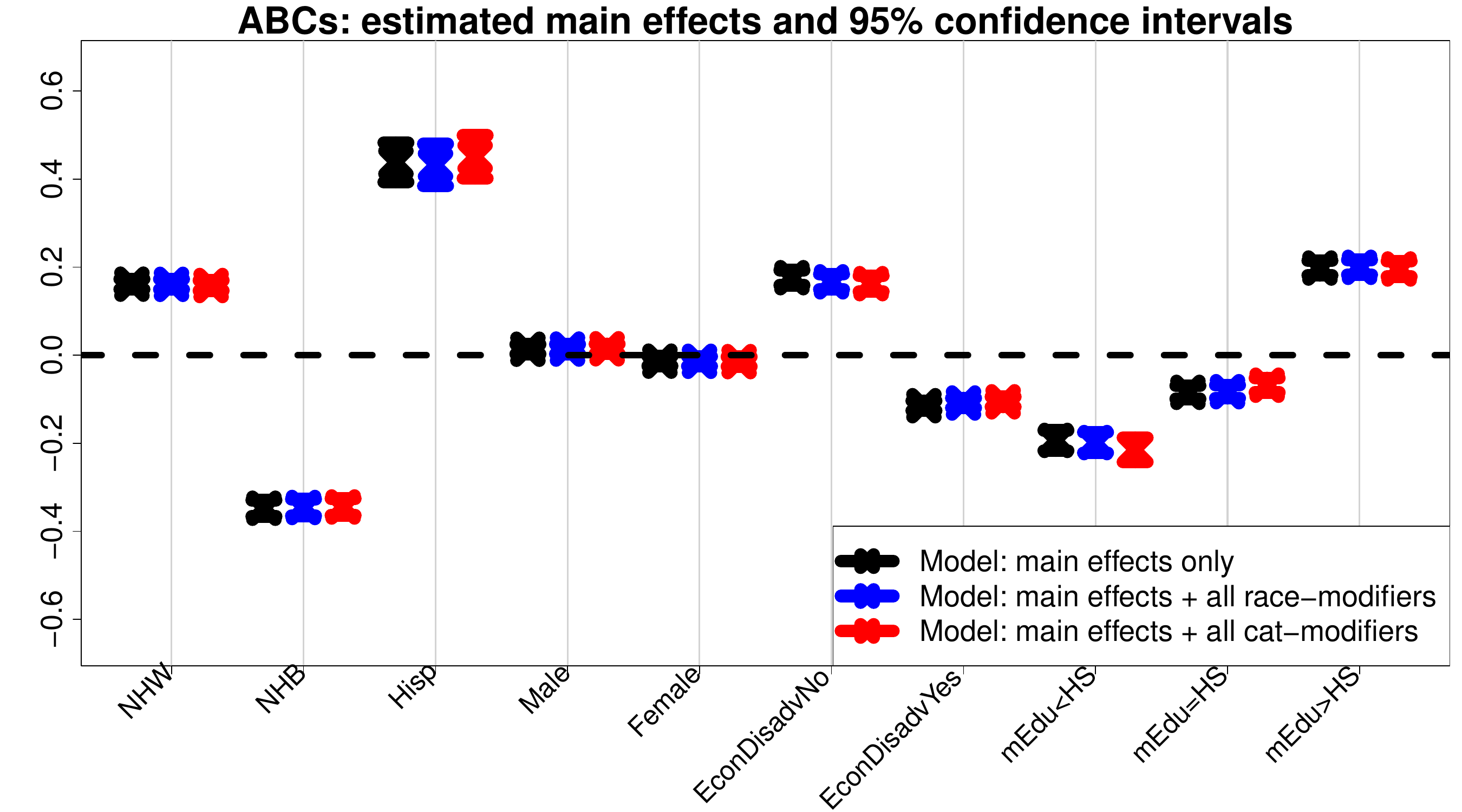}
\includegraphics[width=.49\linewidth]{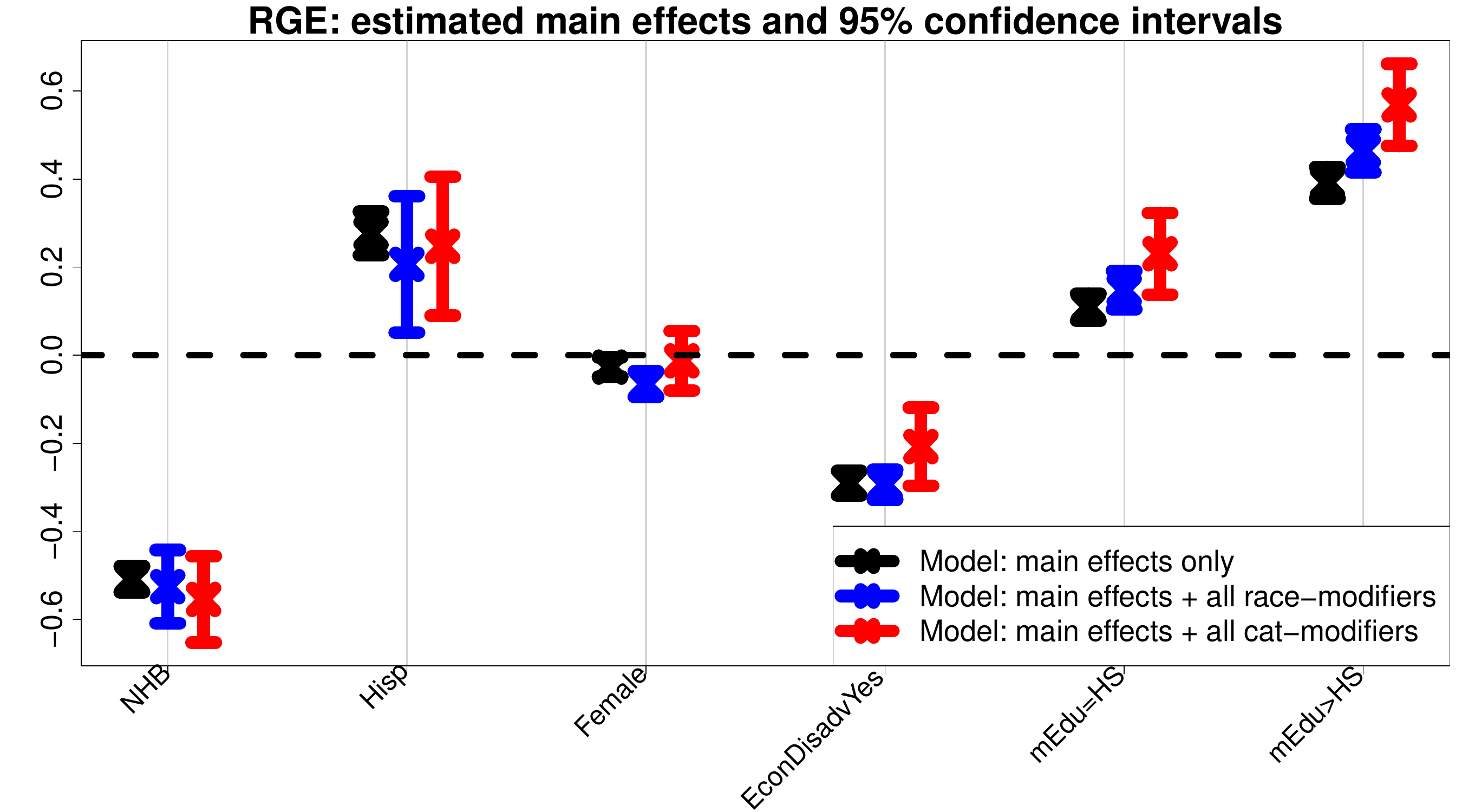}
\caption{\small OLS estimates and 95\% confidence intervals for continuous (top) and categorical (bottom) main effects under ABCs (left) and RGE (right) for three linear models: the \emph{main-only model} (black) includes \texttt{RI}, \texttt{BLL}, \texttt{BWTpct}, \texttt{mAge}, \texttt{PM2.5}, \texttt{race}, \texttt{sex}, \texttt{mEdu}, and \texttt{EconDisadv}; 
the \emph{race-modified} model (blue) adds interactions between \texttt{race} and every other covariate; and  the \emph{cat-modified} model (red) adds all pairwise categorical-continuous and categorical-categorical interactions. With ABCs, main effect inference is invariant to the cat-modifiers: all point and interval estimates are nearly identical across these substantially different models.  
With RGE, the main effect estimates shift and the intervals expand considerably as more cat-modifiers are added. 
}
\label{fig:nc}
\end{figure}

The invariance of ABCs resolves these limitations: estimation and inference for the main effects (Figure~\ref{fig:nc}, left) are nearly identical across these substantially different models. This occurs despite strong dependencies among the covariates (and interactions) with both continuous and categorical variables. ABCs effectively decouple the main effects from the cat-modifiers: even adding 87 parameters (43 identified) from the main-only model to obtain the cat-modified model does not lessen, and in some cases \emph{increases} the statistical power for the main effects. With ABCs, the statistical analyst may consider these or other cat-modified models without compromising or complicating inferences for the main effects.

The full regression output from the cat-modified model with ABCs is in Tables~\ref{tab:results}~and~\ref{tab:results-app}. 
Lower math scores are strongly ($p < 0.01$) associated with racial (residential) isolation, lead exposure, lower (mother's) education levels, and occur for non-Hispanic Black and economically disadvantaged students; higher math scores are strongly associated with birthweight percentile, mother's age, and the opposing categories from above. ABCs provide output for all levels of all categorical variables, thus eliminating the presentation bias of RGE that presents all output relative to the reference groups (White, Male, etc.). For categorical variables with ABCs, the estimates and SEs are directly related to the abundances: for instance, categorical variables with equal proportions such as \texttt{sex}, the main and \texttt{sex}-continuous interaction estimates are equal and opposite with identical SEs for  Male and Female. 
The regression output strongly supports heterogeneous effects, most notably via mother's education level and with intersectionality of race and sex (e.g., \citealp{Bauer2014}).

    


\begin{table}[h] \scriptsize
\centering
\begin{tabular}{lrr}
Variable & Estimate (SE)  & $p$-value  \\
\hline
Intercept & -0.026  (0.008) & 0.001 \\
Racial isolation (\texttt{RI}) & -0.020  (0.007) & 0.006 \\
Blood lead level (\texttt{BLL}) & -0.020  (0.006) & 0.001 \\
Birthweight percentile (\texttt{BWTpct}) & 0.043  (0.006) & $<$0.001 \\
Mother's age (\texttt{mAge})  & 0.029 (0.007) & $<$0.001 \\
$\mbox{PM}_{2.5}$ exposure (\texttt{PM2.5}) & 0.004  (0.006) & 0.527 \\
Mother's race (\texttt{race}) & \\
\quad \texttt{White} (58.7\%) & 0.158  (0.006) & $<$0.001 \\
\quad \texttt{Black} (35.1\%) & -0.345  (0.010) & $<$0.001 \\
\quad \texttt{Hispanic} (6.2\%) & 0.451  (0.025) & $<$0.001 \\
Child's sex (\texttt{sex})  & \\
\quad \texttt{Male} (49.9\%) & 0.015  (0.006) & 0.010 \\
\quad \texttt{Female} (50.1\%) & -0.015  (0.006) & 0.010 \\
Mother's education level (\texttt{mEdu}) & \\
\quad Did not complete high school \\ \quad (\texttt{<HS}; 24.0\%) & -0.215  (0.014) & $<$0.001 \\
\quad Completed high school   (\texttt{=HS}; 36.8\%)& -0.068  (0.008) & $<$0.001 \\
\quad At least some postsecondary \\ \quad (\texttt{>HS}; 39.2\%)  & 0.196  (0.009) & $<$0.001 \\
\hline
\texttt{White:Male} & 0.023  (0.006) & $<$0.001 \\
        \texttt{Black:Male} & -0.049  (0.010) & $<$0.001 \\
        \texttt{Hisp:Male} & 0.056  (0.024) & 0.019 \\
        \texttt{White:Female} & -0.023  (0.006) & $<$0.001 \\
        \texttt{Black:Female} & 0.048  (0.009) & $<$0.001 \\
        \texttt{Hisp:Female} & -0.051  (0.022) & 0.019 \\
        \texttt{White:mEdu<HS} & -0.042  (0.014) & 0.003 \\
        \texttt{Black:mEdu<HS} & 0.023  (0.017) & 0.166 \\
        \texttt{Hisp:mEdu<HS} & 0.062  (0.018) & 0.001 \\
        \texttt{White:mEdu=HS} & 0.000  (0.008) & 0.971 \\
        \texttt{Black:mEdu=HS} & 0.008  (0.011) & 0.462 \\
        \texttt{Hisp:mEdu=HS} & -0.071  (0.038) & 0.059 \\
        \texttt{White:mEdu>HS} & 0.017  (0.007) & 0.012 \\
        \texttt{Black:mEdu>HS} & -0.031  (0.016) & 0.064\\
        \texttt{Hisp:mEdu>HS} & -0.172  (0.066) & 0.009 \\
        \texttt{Male:mEdu<HS} & -0.018 (0.012) & 0.131 \\
        \texttt{Female:mEdu<HS} & 0.017 (0.011) & 0.131 \\
        \texttt{Male:mEdu=HS} & 0.007 (0.008) & 0.390 \\
        \texttt{Female:mEdu=HS} & -0.006 (0.007) & 0.390 \\
        \texttt{Male:mEdu>HS} & 0.005 (0.008) & 0.570 \\
        \texttt{Female:mEdu>HS} & -0.005 (0.009) & 0.570 \\
        \hline 
        \vspace{-5mm}
\end{tabular}
\begin{tabular}{lrr} \scriptsize
        Variable (continued) & Estimate (SE)  & $p$-value  \\
        \hline
        \texttt{RI:White} & -0.002 (0.006) & 0.795 \\
        \texttt{RI:Black} & -0.005  (0.009) & 0.565 \\
        \texttt{RI:Hisp} & 0.046  (0.025) & 0.063 \\
        \texttt{BLL:White} & -0.003  (0.005) & 0.582 \\
        \texttt{BLL:Black} & -0.004  (0.008) & 0.620 \\
        \texttt{BLL:Hisp} & 0.050  (0.023) & 0.033 \\
        \texttt{BWTpct:White} & -0.002  (0.005) & 0.731 \\
        \texttt{BWTpct:Black} & 0.006  (0.008) & 0.512 \\
        \texttt{BWTpct:Hisp}  & -0.014 (0.023) & 0.548 \\
\texttt{mAge:White} & 0.009  (0.006) & 0.120 \\
\texttt{mAge:Black} & -0.017  (0.009) & 0.071 \\
\texttt{mAge:Hisp} & 0.009  (0.027) & 0.733 \\
\texttt{PM2.5:White} & -0.019  (0.005) & $<$0.001 \\
\texttt{PM2.5:Black} & 0.024  (0.008) & 0.004 \\
\texttt{PM2.5:Hisp} & 0.037  (0.024) & 0.123 \\
\texttt{RI:Male} & 0.001  (0.007) & 0.835 \\
\texttt{RI:Female} & -0.001  (0.007) & 0.835 \\
\texttt{BLL:Male} & 0.002  (0.006) & 0.793 \\
\texttt{BLL:Female} & -0.002  (0.006) & 0.793 \\
\texttt{BWTpct:Male} & 0.001  (0.006) & 0.908 \\
\texttt{BWTpct:Female} & -0.001  (0.006) & 0.908 \\
\texttt{mAge:Male} & -0.007  (0.007) & 0.290 \\
\texttt{mAge:Female} & 0.007  (0.007) & 0.290 \\
\texttt{PM2.5:Male} & -0.005  (0.006) & 0.370 \\
\texttt{PM2.5:Female} & 0.005  (0.006) & 0.370 \\
\texttt{RI:mEdu<HS} & -0.015  (0.012) & 0.201 \\
\texttt{RI:mEdu=HS} & -0.007  (0.009) & 0.442 \\
\texttt{RI:mEdu>HS} & 0.016  (0.010) & 0.104 \\
\texttt{BLL:mEdu<HS} & -0.004  (0.011) & 0.682 \\
\texttt{BLL:mEdu=HS} & 0.011  (0.008) & 0.145 \\
\texttt{BLL:mEdu>HS} & -0.008  (0.008) & 0.350 \\
\texttt{BWTpct:mEdu<HS} & -0.018  (0.011) & 0.110 \\
\texttt{BWTpct:mEdu=HS} & 0.011  (0.008) & 0.156 \\
\texttt{BWTpct:mEdu>HS} & 0.001  (0.008) & 0.912 \\
\texttt{mAge:mEdu<HS} & -0.039  (0.013) & 0.003 \\
\texttt{mAge:mEdu=HS} & -0.022  (0.009) & 0.011 \\
\texttt{mAge:mEdu>HS} & 0.045  (0.009) & $<$0.001 \\
\texttt{PM2.5:mEdu<HS} & -0.002  (0.011) & 0.849 \\
\texttt{PM2.5:mEdu=HS} & -0.013  (0.008) & 0.091 \\
\texttt{PM2.5:mEdu>HS} & 0.013  (0.008) & 0.096 \\
    \hline
\end{tabular}
\vspace{-2mm} \caption{\footnotesize Cat-modified model output under ABCs for NC STEM education outcomes with all pairwise categorical-continuous and categorical-categorical interactions (see Table~\ref{tab:results-app} for \texttt{EconDisadv} effects). Categorical variable proportions are also  indicated. Data are restricted to individuals with 37-42 weeks gestation, $\texttt{mAge} \in [15, 44]$  years, $\texttt{BLL} \le 80 \mu g/dL$ (and capped at $10\mu g/dL$), birth order $\le 4$, 
no current English language learners,
and residence in NC at the time of birth and time of 4th end-of-grade test. 
\label{tab:results}}
\end{table}

Finally, we simplify the heavily-parametrized cat-modified model by fitting a lasso regression under ABCs; $\lambda$ is selected using 10-fold cross-validation and the one-standard-error rule \citep{hastie2009elements}. The selected main effects  (\texttt{RI}, \texttt{BLL}, \texttt{BWTpct}, \texttt{mAge}, \texttt{race}, \texttt{mEdu}, and \texttt{EconDisadv}) match the conclusions from Figure~\ref{fig:nc}. Among interactions,  coefficients from \texttt{race:mEdu}, \texttt{race:mAge}, \texttt{race:EconDisadv}, \texttt{mEdu:mAge}, and \texttt{EconDisadv:mAge} are selected. The accompanying coefficients of these cat-modifiers suggest that some positive effects are not as beneficial for minoritized groups: the positive effect of  mother's education (\texttt{mEdu>HS}) are attenuated for  Black and Hispanic students, while the benefits of mother's age are less so for lower mother's education, Black, or economically disadvantaged students.


\section{Conclusion}
\label{sec-conc}
To encourage and enable statistical analysis of heterogeneous effects, we analyzed and advocated ABCs---an alternative parametrization and estimation strategy for cat-modified models that include categorical-continuous or categorical-categorical interactions. Unlike default methods, ABCs allow the inclusion of cat-modifiers ``for free": there is virtually no impact on the main effect estimates, while main effect inference is stable or more powerful. We rigorously proved these estimation and inference invariance properties and validated them empirically with extensive simulation studies. We also provided strategies for estimation and inference, including both generalized and regularized regression. Finally, we applied these tools to analyze STEM educational outcomes and showed how ABCs facilitate  identification and estimation of (demographic) heterogeneous effects without incurring any costs---in estimation, inference, or interpretation---for the main effects.  


Despite these many advantages, we note several caveats. First, ABCs may increase susceptibility to $p$-hacking. Because ABCs facilitate the inclusion of interactions, and with a large enumeration of potential interactions,  there is a heightened potential for both discovery \emph{and} false discovery. Proper statistical analyses require careful consideration of hypothesis tests with  multiple testing corrections
as appropriate.  Second, ABCs cannot guarantee that cat-modifiers will be (practically or statistically) significant. Detection of heterogeneous effects often requires well-designed studies or large sample sizes. 
Third, our invariance results apply for least squares estimation, but not more general loss functions. Finally, many categorical variables, especially race, sex, and other protected groups, are susceptible to misinterpetation, inaccurate labelings, and exclusions of small groups.







 
\section*{Acknowledgements}
We thank Virginia Baskin,  Caleb Fikes, Prayag Gordy, and Jai Uparkar for helpful discussions and their contributions to software development.

\bibliographystyle{chicago}
\bibliography{refs}

\begin{thebibliography}{}

\bibitem[\protect\citeauthoryear{Bauer}{Bauer}{2014}]{Bauer2014}
Bauer, G.~R. (2014).
\newblock Incorporating intersectionality theory into population health
  research methodology: challenges and the potential to advance health equity.
\newblock {\em Social Science \& Medicine\/}~{\em 110}, 10--17.

\bibitem[\protect\citeauthoryear{Bien, Taylor, and Tibshirani}{Bien
  et~al.}{2013}]{Bien2013}
Bien, J., J.~Taylor, and R.~Tibshirani (2013).
\newblock A lasso for hierarchical interactions.
\newblock {\em The Annals of Statistics\/}~{\em 41}, 1111.

\bibitem[\protect\citeauthoryear{Bravo, Zephyr, Kowal, Ensor, and
  Miranda}{Bravo et~al.}{2022}]{Bravo2022}
Bravo, M., D.~Zephyr, D.~R. Kowal, K.~B. Ensor, and M.~L. Miranda (2022).
\newblock Racial residential segregation shapes relationships between early
  childhood lead exposure and 4th grade standardized test scores.
\newblock {\em Proceedings of the National Academy of Sciences\/}~{\em 119},
  e2117868119.

\bibitem[\protect\citeauthoryear{Brehm and Alday}{Brehm and
  Alday}{2022}]{Brehm2022}
Brehm, L. and P.~M. Alday (2022).
\newblock Contrast coding choices in a decade of mixed models.
\newblock {\em Journal of Memory and Language\/}~{\em 125}, 104334.

\bibitem[\protect\citeauthoryear{Chen, Pierson, Rose, Joshi, Ferryman, and
  Ghassemi}{Chen et~al.}{2021}]{Chen2021}
Chen, I.~Y., E.~Pierson, S.~Rose, S.~Joshi, K.~Ferryman, and M.~Ghassemi
  (2021).
\newblock Ethical machine learning in healthcare.
\newblock {\em Annual Review of Biomedical Data Science\/}~{\em 4}, 123--144.

\bibitem[\protect\citeauthoryear{Chestnut and Markman}{Chestnut and
  Markman}{2018}]{Chestnut2018}
Chestnut, E.~K. and E.~M. Markman (2018).
\newblock “girls are as good as boys at math” implies that boys are
  probably better: A study of expressions of gender equality.
\newblock {\em Cognitive science\/}~{\em 42}, 2229--2249.

\bibitem[\protect\citeauthoryear{Fujikoshi}{Fujikoshi}{1993}]{Fujikoshi1993}
Fujikoshi, Y. (1993).
\newblock Two-way anova models with unbalanced data.
\newblock {\em Discrete Mathematics\/}~{\em 116}, 315--334.

\bibitem[\protect\citeauthoryear{Grotenhuis, Pelzer, Eisinga, Nieuwenhuis,
  Schmidt-Catran, and Konig}{Grotenhuis et~al.}{2017a}]{TeGrotenhuis2017a}
Grotenhuis, M.~T., B.~Pelzer, R.~Eisinga, R.~Nieuwenhuis, A.~Schmidt-Catran,
  and R.~Konig (2017a).
\newblock A novel method for modelling interaction between categorical
  variables.
\newblock {\em International Journal of Public Health\/}~{\em 62}, 427--431.

\bibitem[\protect\citeauthoryear{Grotenhuis, Pelzer, Eisinga, Nieuwenhuis,
  Schmidt-Catran, and Konig}{Grotenhuis et~al.}{2017b}]{TeGrotenhuis2017}
Grotenhuis, M.~T., B.~Pelzer, R.~Eisinga, R.~Nieuwenhuis, A.~Schmidt-Catran,
  and R.~Konig (2017b).
\newblock When size matters: advantages of weighted effect coding in
  observational studies.
\newblock {\em International Journal of Public Health\/}~{\em 62}, 163--167.

\bibitem[\protect\citeauthoryear{Hastie, Tibshirani, and Friedman}{Hastie
  et~al.}{2009}]{hastie2009elements}
Hastie, T., R.~Tibshirani, and J.~Friedman (2009).
\newblock {\em The Elements of Statistical Learning}, Volume~2.
\newblock Springer.

\bibitem[\protect\citeauthoryear{Initiative}{Initiative}{2020}]{ChildrensEnvironmentalHealthInitiative2020}
Initiative, C. E.~H. (2020).
\newblock Linked births, lead surveillance, grade 4 end-of-grade (eog) scores
  [data set].

\bibitem[\protect\citeauthoryear{Johfre and Freese}{Johfre and
  Freese}{2021}]{Johfre2021}
Johfre, S.~S. and J.~Freese (2021).
\newblock Reconsidering the reference category.
\newblock {\em Sociological Methodology\/}~{\em 51}, 253--269.

\bibitem[\protect\citeauthoryear{Knol, Egger, Scott, Geerlings, and
  Vandenbroucke}{Knol et~al.}{2009}]{Knol2009}
Knol, M.~J., M.~Egger, P.~Scott, M.~I. Geerlings, and J.~P. Vandenbroucke
  (2009).
\newblock When one depends on the other: Reporting of interaction in
  case-control and cohort studies.
\newblock {\em Epidemiology\/}~{\em 20}.

\bibitem[\protect\citeauthoryear{Kowal}{Kowal}{2024}]{Kowal2024}
Kowal, D.~R. (2024).
\newblock Regression with race-modifiers: towards equity and interpretability.
\newblock {\em medRxiv\/}, 2021--2024.

\bibitem[\protect\citeauthoryear{Kowal, Bravo, Leong, Griffin, Ensor, and
  Miranda}{Kowal et~al.}{2021}]{KowalPRIME2020}
Kowal, D.~R., M.~Bravo, H.~Leong, R.~J. Griffin, K.~B. Ensor, and M.~L. Miranda
  (2021).
\newblock Bayesian variable selection for understanding mixtures in
  environmental exposures.
\newblock {\em Statistics in Medicine\/}~{\em 40}, 4850--4871.

\bibitem[\protect\citeauthoryear{Krefeld-Schwalb, Sugerman, and
  Johnson}{Krefeld-Schwalb et~al.}{2024}]{Krefeld2024}
Krefeld-Schwalb, A., E.~R. Sugerman, and E.~J. Johnson (2024).
\newblock Exposing omitted moderators: Explaining why effect sizes differ in
  the social sciences.
\newblock {\em Proceedings of the National Academy of Sciences\/}~{\em 121},
  e2306281121.

\bibitem[\protect\citeauthoryear{Lim and Hastie}{Lim and
  Hastie}{2015}]{Lim2015}
Lim, M. and T.~Hastie (2015).
\newblock Learning interactions via hierarchical group-lasso regularization.
\newblock {\em Journal of Computational and Graphical Statistics\/}~{\em 24},
  627--654.

\bibitem[\protect\citeauthoryear{Miao, Wu, and Lu}{Miao
  et~al.}{2024}]{Miao2024}
Miao, J., Y.~Wu, and Q.~Lu (2024).
\newblock Statistical methods for gene–environment interaction analysis.
\newblock {\em Wiley Interdisciplinary Reviews: Computational
  Statistics\/}~{\em 16}, e1635.

\bibitem[\protect\citeauthoryear{Park, Petkova, Tarpey, and Ogden}{Park
  et~al.}{2021}]{Park2021}
Park, H., E.~Petkova, T.~Tarpey, and R.~T. Ogden (2021).
\newblock A constrained single‐index regression for estimating interactions
  between a treatment and covariates.
\newblock {\em Biometrics\/}~{\em 77}, 506--518.

\bibitem[\protect\citeauthoryear{Park, Petkova, Tarpey, and Ogden}{Park
  et~al.}{2023}]{Park2023}
Park, H., E.~Petkova, T.~Tarpey, and R.~T. Ogden (2023).
\newblock Functional additive models for optimizing individualized treatment
  rules.
\newblock {\em Biometrics\/}~{\em 79}, 113--126.

\bibitem[\protect\citeauthoryear{Pocock, Collier, Dandreo, de~Stavola, Goldman,
  Kalish, Kasten, and McCormack}{Pocock et~al.}{2004}]{Pocock2004}
Pocock, S.~J., T.~J. Collier, K.~J. Dandreo, B.~L. de~Stavola, M.~B. Goldman,
  L.~A. Kalish, L.~E. Kasten, and V.~A. McCormack (2004).
\newblock Issues in the reporting of epidemiological studies: a survey of
  recent practice.
\newblock {\em BmJ\/}~{\em 329}, 883.

\bibitem[\protect\citeauthoryear{Scheffe}{Scheffe}{1999}]{Scheffe1999}
Scheffe, H. (1999).
\newblock {\em The analysis of variance}, Volume~72.
\newblock John Wiley \& Sons.

\bibitem[\protect\citeauthoryear{Schoendorf, Hogue, Kleinman, and
  Rowley}{Schoendorf et~al.}{1992}]{Schoendorf1992}
Schoendorf, K.~C., C.~J.~R. Hogue, J.~C. Kleinman, and D.~Rowley (1992).
\newblock Mortality among infants of black as compared with white
  college-educated parents.
\newblock {\em New England journal of medicine\/}~{\em 326}, 1522--1526.

\bibitem[\protect\citeauthoryear{Searle, Speed, and Milliken}{Searle
  et~al.}{1980}]{Searle1980}
Searle, S.~R., F.~M. Speed, and G.~A. Milliken (1980).
\newblock Population marginal means in the linear model: an alternative to
  least squares means.
\newblock {\em The American Statistician\/}~{\em 34}, 216--221.

\bibitem[\protect\citeauthoryear{Simpson}{Simpson}{1951}]{Simpson1951}
Simpson, E.~H. (1951).
\newblock The interpretation of interaction in contingency tables.
\newblock {\em Journal of the Royal Statistical Society: Series B
  (Methodological)\/}~{\em 13}, 238--241.

\bibitem[\protect\citeauthoryear{Sweeney and Ulveling}{Sweeney and
  Ulveling}{1972}]{Sweeney1972}
Sweeney, R.~E. and E.~F. Ulveling (1972).
\newblock A transformation for simplifying the interpretation of coefficients
  of binary variables in regression analysis.
\newblock {\em The American Statistician\/}~{\em 26}, 30--32.

\bibitem[\protect\citeauthoryear{Wang and Lin}{Wang and Lin}{2024}]{Wang2024}
Wang, T. and C.-W. Lin (2024).
\newblock Using a centered general linear model for detection of interactions
  among biomarkers.
\newblock {\em Statistical Methods in Medical Research\/}, 09622802231224639.

\bibitem[\protect\citeauthoryear{Williams, Lawrence, and Davis}{Williams
  et~al.}{2019}]{Williams2019}
Williams, D.~R., J.~A. Lawrence, and B.~A. Davis (2019).
\newblock Racism and health: evidence and needed research.
\newblock {\em Annual Review of Public Health\/}~{\em 40}, 105--125.

\end{thebibliography}
\doublespacing

\clearpage

\setcounter{page}{1}
\counterwithin{figure}{section}
\counterwithin{table}{section}
\counterwithin{equation}{section}

\appendix

\begin{center} \onehalfspacing
    \LARGE {\bf 
    Supplement to \\
    ``Facilitating heterogeneous effect estimation via statistically efficient categorical modifiers"}
      \if0\blind
    \large{\\ \vspace{5mm} Daniel R. Kowal}
    \fi
\end{center}

    This supplementary file includes proofs of all results (Section~\ref{sec-a-proofs}), details for generalized linear models (Section~\ref{sec-a-glm}), additional simulation results (Section~\ref{sec-a-sims}), and additional details and analyses of the North Carolina education data (Section~\ref{sec-a-app}).

\section{Proofs} \label{sec-a-proofs}

We first provide a sketch of the general proof technique. Our results require only basic linear algebra, but the notation can be cumbersome. Here, the goal is to provide clear intuition for our results and to put forth a blueprint to analyze similar invariance properties in other settings. 

Consider two generic but nested models:
\begin{quote}
   \texttt{y $\sim$ X$_*$ + X$_0$}  \\
   \texttt{y $\sim$ X$_*$ + X$_0$ + X$_1$} 
\end{quote}
The task is to establish conditions under which the OLS estimates of the coefficients on $\bm X_*$ are unchanged by the addition of $\bm X_1$, with $\bm X_0$ also present in both models.  In our typical setting,   $\bm X_*$ is a matrix of (continuous) covariates, $\bm X_0$ is a matrix of categorical indicator variables, and  $\bm X_1$  contains cat-modifiers. Crucially, for \emph{identifiable} estimation and inference, these matrices involving categorical covariates or cat-modifiers must already be parametrized to enforce the identifiable constraints, such as omitting certain columns for RGE or applying the QR reparametrization from Section~\ref{sec-est} for ABCs. 

The most relevant classical result is due to Frisch and Waugh (1933) and Lovell (1963): 

\vspace{2mm} \noindent{\bf Frisch-Waugh-Lovell (FWL) Theorem:}
    \label{thm-resid}
    For a partition of the $n \times p$ covariate matrix $\bm X = (\bm X_0:\bm X_1)$ into $p_0$ and $p_1$ columns, the partition of the ordinary least squares estimator $\bm{\hat \beta} = (\bm{\hat \beta}_0^\top, \bm{\hat \beta}_1^\top)^\top$ satisfies $\bm{\hat \beta}_0 = (\bm X_0^\top\bm E_{01})^{-1} \bm E_{01}^\top\bm y = (\bm E_{01}^\top\bm E_{01})^{-1} \bm E_{01}^\top\bm y$, where $\bm E_{01} = (\bm I_n - \bm H_{X_1}) \bm X_0$ is the $n \times p_0$ matrix of residuals from regressing each column of $\bm X_0$ on $\bm X_1$, $\bm H_{X_1} = \bm X_1 (\bm X_1^\top\bm X_1)^{-1} \bm X_1^\top$ is the corresponding hat matrix for $\bm X_1$, and $\bm y = (y_1,\ldots,y_n)^\top$ is the vector of outcomes. 
\vspace{2mm} 

Applying the FWL Theorem, our target result occurs when \texttt{residuals(X$_*$ $\sim$  X$_0$)} = \texttt{residuals(X$_*$ $\sim$  X$_0$ + X$_1$)}, for which a sufficient condition is $\bm X_*^\top \bm E_{10} = 0$
where $\bm E_{10} = $ \texttt{residuals(X$_1$ $\sim$  X$_0$)}. More formally, let $\bm{H}_0 \coloneqq \bm X_0(\bm X_0^\top \bm X_0)^{-1}\bm X_0^\top$ be the hat matrix for the covariates $\bm X_0$ that are always included. Then the sufficient condition is 
\begin{equation}\label{cond-a}
    \bm X_*^\top (\bm X_1 - \bm H_0 \bm X_1) = \bm 0
\end{equation}
or equivalently, $\bm X_*^\top \bm X_1 - (\bm H_0\bm X_*)^\top \bm X_1 = \bm 0$, if we prefer to consider regressing $\bm X_0$ on $\bm X_*$ instead of $\bm X_1$. 
 In the simpler case without a common $\bm X_0$ term, the requirement simplifies to $\bm X_*^\top \bm X_1 = \bm 0$, where the role orthogonality is now abundantly clear. 

In the presence of ABCs (or other linear constraints), we apply the reparametrization from Section~\ref{sec-est} that replaces $\bm X_1$ with $\bm X_1 \bm Q_{\bm{\hat \pi}}$ to enforce the constraints. The main condition \eqref{cond-a} is now 
\begin{equation} \label{cond-q-a}
\bm X_*^\top (\bm X_1 - \bm H_0 \bm X_1)\bm Q_{\bm{\hat \pi}} = \bm 0.
\end{equation}
The key observation is that   $\bm A_{\bm{\hat \pi}} \bm Q_{\bm{\hat \pi}} = \bm 0$ by construction; this is true for the QR-based approach with \emph{any} constraints of the form $\bm A_{\bm{\hat \pi}} \bm \theta = \bm 0$, including but not limited to ABCs. Thus, the general  requirement is to show that $
\bm X_*^\top (\bm X_1 - \bm H_0 \bm X_1)$ is  row-wise proportional to $\bm A_{\bm{\hat \pi}}$, which produces the necessary zeros. 

We apply this strategy for Theorems~\ref{thm-cts}--\ref{thm-se}, but prove the main results in sequence.

\begin{proof}[Proof (Lemma~\ref{lemma-int})]
    For simplicity, we prove this result for the case of \eqref{reg-cm-cat}, but the same ideas apply more generally. 
    It is sufficient to show that  $\mathbb{E}_{\bm{\hat \pi}}(\beta_{1,R} + \beta_{2,S} + \gamma_{RS}) = 0$. Direct application of \eqref{abcs-gen} implies $\mathbb{E}_{\bm{\hat \pi}}(\beta_{1,R} + \beta_{2,S} + \gamma_{RS}) = \mathbb{E}_{\bm{\hat \pi}_R}(\beta_{1,R}) + \mathbb{E}_{\bm{\hat \pi}_S}(\beta_{2,S}) + \mathbb{E}_{\bm{\hat \pi}}(\gamma_{RS}) = \mathbb{E}_{\bm{\hat \pi}}(\gamma_{RS})$, and further simplifying, $\mathbb{E}_{\bm{\hat \pi}}(\gamma_{RS}) = \sum_{r=1}^{L_R}\sum_{s=1}^{L_S} \hat \pi_{rs} \gamma_{rs} = 0$ since the internal summation is zero for all $r$ by \eqref{abcs-cat-cat-gen}.
\end{proof}

\begin{proof}[Proof (Lemma~\ref{lemma-cat-cat})]
    We prove this result for the case of \eqref{reg-cm-cat} for simplicity. Applying \eqref{abcs-cat-cat-gen} to all but $r=1$, we have $\sum_{r=1}^{L_R} \hat \pi_{rs} \gamma_{rs} = 0$ for $s=1,\ldots,L_S$ and thus $\gamma_{1s} = -\hat\pi_{1s}^{-1} \sum_{r=2}^{L_R}\hat\pi_{rs} \gamma_{rs}$. The conditional expectation is then 
    $
        \mathbb{E}_{\bm{\hat \pi}_{S \mid R=1}}(\gamma_{RS}) = \sum_{s=1}^{L_S} \hat \pi_{1s} \gamma_{1s}
        = -\sum_{s=1}^{L_S} 
        \sum_{r=2}^{L_R}\hat\pi_{rs} \gamma_{rs}
        =\sum_{r=2}^{L_R} \sum_{s=1}^{L_S} 
        \hat\pi_{rs} \gamma_{rs}
        =0 
    $ 
    since the internal summation equals zero for all $r>1$. 
\end{proof}

\begin{proof}[Proof (Theorem~\ref{thm-intercept})]
    Under OLS, $\bar y$ equals the sample mean of the fitted values $\{\hat y_i\}_{i=1}^n$; this is true for ABCs, RGE, STZ, etc. Then we simplify: 
    \begin{align*}
    \bar y &= n^{-1} \sum_{i=1}^n \hat y_i = n^{-1} \sum_{i=1}^n(\hat \alpha_0 + \bm x_i^\top \bm{\hat \alpha} + \sum_{k=1}^K \hat\beta_{k, c_k} + \sum_{k=1}^{K-1} \sum_{k'=k+1}^K \hat\gamma_{k,k', c_k, c_{k'}} )\\
    &= \hat \alpha_0 + \bm{\bar x}^\top \bm{\hat \alpha} +  \sum_{k=1}^K \sum_{c_k =1}^{L_k} \hat \pi_{k, c_k} \hat \beta_{k, c_k} + \sum_{k=1}^{K-1} \sum_{k'=k+1}^K \sum_{c_k =1}^{L_k} \sum_{c_{k'} =1}^{L_{k'}}\hat \pi_{k,k', c_k, c_{k'}}  \hat \gamma_{k,k', c_k, c_{k'}}  \\
    &= \hat \alpha_0
    \end{align*}
    since the continuous covariates are centered ($\bm{\bar x} = \bm 0$) and the main categorical effects and categorical-categorical interactions satisfy ABCs, so the interior summations equal zero for all $k, k'$. 
\end{proof}

\begin{proof}[Proof (Theorem~\ref{thm-cat-cat})]
    Following the \texttt{race} and \texttt{sex} terminology from Example~\ref{ex:cat},  define the design matrix by letting $\bm 1$ be an $n$-dimensional vector of ones, $\bm Z_1$ the $n \times L_R$ matrix of \texttt{race}  indicators with  entries $[\bm Z_1]_{ir} = 1$ if $r_i = r$ and zero otherwise, and $\bm Z_2$ $n \times L_S$ matrix of  \texttt{sex} indicators  with entries $[\bm Z_2]_{is} = 1$ if $s_i = s$ and zero otherwise. Similarly, let $\bm Z_{12}$ be the $n \times L_RL_S$ matrix with indicators for the interaction terms. Consider the cross-produces of each main effect with the interaction matrix. 
    First, $\bm 1^{\top} \bm Z_{12}$ is the $1 \times L_RL_S$ matrix where each entry is the joint total by \texttt{race} and \texttt{sex}, i.e., $\sum_{i=1}^n \mathbb{I}(r_i = r, s_i = s)$ for each $r,s$ combination. 
    Next, $\bm Z_1^\top \bm Z_{12}$ is $L_R \times L_RL_S$, where each row $r$ includes the totals  $\sum_{i=1}^n \mathbb{I}(r_i = r, s_i = s)$ for all $s=1,\ldots,L_S$ but zeros for columns with other \texttt{race} groups, $r' \ne r$. Similarly, $\bm Z_2^\top \bm Z_{12}$ is $L_S \times L_RL_S$, where each row $s$ includes the totals  $\sum_{i=1}^n \mathbb{I}(r_i = r, s_i = s)$ for all $r=1,\ldots,L_R$ but  zeros for columns with other \texttt{sex} groups, $s' \ne s$. 

    Estimation invariance occurs when these cross-products are zero. However, we must also account for identifiability constraints. Following Section~\ref{sec-est},   $\bm Z_{12}$ is replaced by $\bm Z_{12}\bm Q_{\bm{\hat\pi}}$, where $\bm A_{\bm{\hat\pi}} \bm Q_{\bm{\hat\pi}} = \bm 0$ and  $\bm A_{\bm{\hat\pi}}$ encodes the constraints on the interaction coefficients. Thus, it suffices to show that $\bm 1^{\top} \bm Z_{12} \bm Q_{\bm{\hat\pi}} = \bm 0$, $\bm Z_1^{\top} \bm Z_{12} \bm Q_{\bm{\hat\pi}} = \bm 0$, and $\bm Z_2^{\top} \bm Z_{12} \bm Q_{\bm{\hat\pi}} = \bm 0$, with each zero of the appropriate dimension. 
    For ABCs, the latter two cross-products, when scaled by $n^{-1}$, exactly match the joint ABCs 
    \eqref{abcs-cat-cat-gen} in the form of $\bm A_{\bm{\hat\pi}}$, and thus are zero upon post-multiplication by $\bm Q_{\bm{\hat\pi}} $. Similarly, the first cross-product is also zero by applying the arguments from Lemma~\ref{lemma-int}.
\end{proof}

\begin{proof}[Proof (Theorem~\ref{thm-cts})]
    Let  $\bm y = (y_1,\ldots,y_n)^\top$, 
    $\bm x = (x_1,\ldots, x_n)^\top$, and $\bm Z$ be the matrix of categorical (\texttt{race}) indicators with entries $[\bm Z]_{ir} = 1$ if $r_i = r$ and zero otherwise. The cat-modifier term is $\bm Z_X = \bm D_X \bm Z$ and $\bm D_X = \mbox{diag}(\bm x)$. The goal is to show that, under the stated conditions, \eqref{cond-q-a} holds with $\bm x = \bm X_*$, $\bm X_1 = \bm Z_X$, and $\bm H_0  = \bm H_Z = \bm Z(\bm Z^\top\bm Z)^{-1} \bm Z^\top$ is the hat matrix for the categorical covariate. 
    
    For clarity, we provide more detailed results en route. Applying the FWL Theorem, the estimated coefficients under \eqref{reg-main-x} satisfy 
    $ 
        \hat \alpha_1^M = (\bm x^\top \bm {\hat e}_{x\sim r})^{-1}\bm {\hat e}_{x\sim r}^\top\bm y,
    $ 
    where  $\bm {\hat e}_{x\sim r}$ is the vector of residuals from regressing the continuous variable $\{x\}_{i=1}^n$ on the categorical variable $\{r_i\}_{i=1}^n$ (i.e., $\bm Z$). Similarly, the estimated coefficients under \eqref{reg-cm-x} satisfy 
    $ 
        \hat \alpha_1 = (\bm x^\top \bm {\hat e}_{x\sim r + Z_{XQ}})^{-1}\bm {\hat e}_{x\sim r + Z_{XQ}}^\top\bm y,
    $ 
     where $\bm {\hat e}_{x\sim r + Z_{XQ}}$ are the residuals from regressing the continuous variable $\{x\}_{i=1}^n$ on the categorical variable $\{r_i\}_{i=1}^n$ (i.e., $\bm Z$) \emph{and} the reparametrized interaction term that enforces  ABCs,  $\bm Z_{XQ} = \bm Z_X \bm Q_{-(1:m)}$ (see Section~\ref{sec-est}). Thus, it suffices to show that $\bm {\hat e}_{x\sim r} = \bm {\hat e}_{x\sim r + Z_{XQ}}$, which occurs when the additional (interaction) coefficients from the latter model, say $\bm{\hat b}_{Z_{XQ}}$ (corresponding to $\bm Z_{XQ}$), are identically zero. Again using  the FWL Theorem, these estimated coefficients are $\bm{\hat b}_{Z_{XQ}} =  \bm Q_{-(1:m)} (\bm Z_{XQ}^\top\bm E_{Z_{XQ}})^{-1} \bm E_{Z_{XQ}}^\top \bm x$, where $\bm E_{Z_{XQ}}$ is the matrix of residuals from regressing $\bm Z_{XQ}$ on $\bm Z$, i.e., $\bm E_{Z_{XQ}} = \bm Z_{XQ} - \bm H_Z \bm Z_{XQ}$. Thus, showing $\bm x^\top \bm E_{Z_{XQ}} = \bm 0$ is sufficient, and factoring $\bm x^\top \bm E_{Z_{XQ}} = (\bm x^\top \bm Z_X - \bm x^\top\bm H_Z \bm Z_X) \bm Q_{-(1:m)}$ shows the connection with \eqref{cond-q-a}.
    
     First, observe that  $\bm x^\top \bm Z_X = \bm x^\top \bm D_X \bm Z = (s_{x[1]}^2,\ldots, 
     s_{x[L_R]}^2)$ is the vector of $s_{x[r]}^2 = \sum_{r_i=r} x_i^2$ across groups. Next, observe that $(\bm Z^\top \bm Z)^{-1} \bm Z^\top \bm Z_X = \mbox{diag}(\{\bar x_r\}_r)$ contains the sample means of $\{x\}_{i=1}^n$ by each group $r$, and therefore $\bm x^\top\bm H_Z \bm Z_X = \bm x^\top \bm Z \mbox{diag}(\{\bar x_r\}_r) = (n_1 \bar x_1^2, \ldots, n_{L_R} \bar x_{L_R}^2)$ with $n_r = n \hat \pi_r$. Combining these results, we have $\bm x^\top \bm E_{Z_{XQ}} = \bm v^\top \bm Q_{-(1:m)}$, where $\bm v^\top = (s_1^2 - n_1 \bar x_1^2,\ldots, s_{L_R}^2 - n_{L_R} \bar x_{L_R}^2) = n ( \hat \pi_1 \hat \sigma_{x[1]}^2, \ldots,\hat \pi_{L_R} \hat \sigma_{x[L_R]}^2) = n \hat \sigma_{x[1]}^2 \bm{\hat \pi}^\top$ under the assumption that $\hat \sigma_{x[r]}^2 = \hat \sigma_{x[1]}^2$ is common for all $r$, which is precisely the equal-variance condition \eqref{eq-v}. Finally, the definition of $\bm Q_{-(1:m)}$ via  ABCs implies that $\bm{\hat \pi}^\top\bm Q_{-(1:m)} = \bm 0$, which proves the result.
\end{proof}

\begin{proof}[Proof (Theorem~\ref{thm-int-full})]
Let $\bm X$ denote the $n \times p$ matrix of continuous covariates, $\bm Z$  the matrix of categorical dummy variables with entries $[\bm Z]_{ir} = 1$ if $r_i = r$ and zero otherwise, and $\bm Z_{XQ} = (\bm Z_{X_1Q},\ldots, \bm Z_{X_pQ})$ with $\bm Z_{X_jQ} = \bm Z_{X_j} \bm Q_{-(1:m)}$, 
 $\bm Z_{X_j} = \bm D_{X_j} \bm Z$, and $\bm D_{X_j} = \mbox{diag}(\bm x_j)$. By the FWL Theorem, it suffices to show that $\bm E_M = \bm E$, where $\bm E_M = (\bm I_n - \bm H_Z)\bm X$ are the residuals from regressing each column of $\bm X$ on $\bm Z$ and $\bm E$ are similarly the residuals from regressing each column of $\bm X$ on $\bm Z$ \emph{and} $\bm Z_{XQ}$. Thus, it is sufficient to show that the coefficients associated with $\bm Z_{XQ}$ in the latter regression are identically zero. Again using the FWL Theorem, we see that this occurs whenever $\bm X^\top \bm E_{Z_{XQ}} = \bm 0$, where $\bm E_{Z_{XQ}} = \bm Z_{XQ} - \bm H_Z \bm Z_{XQ} = (\bm Z_{X_1Q} - \bm H_Z \bm Z_{X_1Q}, \ldots, \bm Z_{X_pQ} - \bm H_Z \bm Z_{X_pQ})$. Noticing that $\bm X^\top \bm E_{Z_{XQ}} = (\bm X^\top (\bm Z_{X_1Q} - \bm H_Z \bm Z_{X_1Q}), \ldots, \bm X^\top(\bm Z_{X_pQ} - \bm H_Z \bm Z_{X_pQ}))$, we consider the individual components $\bm x_h^\top(\bm Z_{X_jQ} - \bm H_Z \bm Z_{X_jQ}) = (\bm x_h^\top \bm D_{x_j} \bm Z - \bm x_h^\top \bm H_Z \bm D_{x_j} \bm Z) \bm Q_{-(1:m)}$, each of which must equal the zero vector with dimension equal to the number of categories. Noting that $\bm x_h^\top \bm D_{x_j} \bm Z = (\ldots, s_r(j,h),\ldots)$ with $s_r(j,h) = \sum_{r_i=r} x_{ij}x_{ih}$ and $\bm x_h^\top \bm H_Z \bm D_{x_j} \bm Z = (\ldots, n_r \bar x_r(j) \bar x_r(h), \ldots)$ with $\bar x_r(j) = n_r^{-1} \sum_{r_i=r} x_{ij}$, we apply the same arguments as in Theorem~\ref{thm-cts}. 
\end{proof}

\begin{proof}[Proof (Theorem~\ref{thm-int-local})]
    Let $\bm X_0$ be the matrix of covariates that includes $\bm X_{-1}$ (i.e., all covariates but $\bm x_1$) and the categorical (race) indicators $\bm Z$ with entries $[\bm Z]_{ir} = 1$ if $r_i = r$ and zero otherwise, and let $\bm{H}_0 \coloneqq \bm X_0(\bm X_0^\top \bm X_0)^{-1}\bm X_0^\top$ be its hat matrix. For the interaction terms, let $\bm Z_{x_1Q} = \bm Z_{x_1} \bm Q_{\bm{\hat\pi}}$ where $\bm Z_{x_1} = \bm D_{x_1} \bm Z$, $\bm D_{x_1} = \mbox{diag}(\bm x_1)$, and $\bm{\hat\pi}^\top\bm Q_{\bm{\hat\pi}} = \bm 0$ enforces the ABCs for the interaction terms (see Section~\ref{sec-est}). Now, it is sufficient to show that $(\bm x_1^\top \bm Z_{x_1} - \bm x_1^\top \bm H_0 \bm Z_{x_1})\bm Q_{\bm{\hat\pi}} = \bm 0$ as in \eqref{cond-q-a}. First, observe that $\bm x_1^\top \bm Z_{x_1} = (\cdots s_{x_1[r]}^2 \cdots )$. Second, $\bm x_1^\top \bm H_0 \bm Z_{x_1} = \bm{\hat x}_1^\top \bm D_{x_1} \bm Z = (\cdots \sum_{r_i = r} \hat x_{i1} x_{i1} \cdots )$ where $\bm{\hat x}_1 =  \bm H_0 \bm x$. Combining these results, we see that $\bm v^\top \coloneqq \bm x_1^\top \bm Z_{x_1} - \bm x_1^\top \bm H_0 \bm Z_{x_1} = (\cdots \sum_{r_i = r} (x_{i1}^2 - x_{i1} \hat x_{i1}) \cdots)$. Consider the interior terms for each $r$:  $\sum_{r_i = r} (x_{i1}^2 - x_{i1} \hat x_{i1}) =  \sum_{r_i = r} x_{i1} \hat e_{i1} = n_r\widehat{\mbox{Cov}}_r(\bm{\hat e}_1, \bm{x}_1)$, where the latter equality holds because $\sum_{r_i=r} \hat e_{i1} = 0$ for each $r$ due to the inclusion of $\bm Z$. Thus, $\bm v^\top = (\cdots n_r\widehat{\mbox{Cov}}_r(\bm{\hat e}_1, \bm{x}_1) \cdots ) = k(\cdots n_r \cdots)$ for some constant $k$ that does not depend on $r$, which implies that $\bm v^\top \bm Q_{\bm{\hat\pi}} = nk \bm{\hat \pi}^\top \bm Q_{\bm{\hat\pi}} = \bm 0$ under ABCs.
\end{proof}

\begin{proof}[Proof (Theorem~\ref{thm-cat-cat-se})]
    The proof of Theorem~\ref{thm-cat-cat} establishes orthogonality of ABCs-constrained interaction to the main effects under the same conditions. Given this orthogonality, the remainder of the proof follows the proof of Theorem~\ref{thm-se} and is omitted for brevity. 
\end{proof}

\begin{proof}[Proof (Theorem~\ref{thm-se})]
    Applying the same arguments from the proof of Theorem~\ref{thm-cts}, the variances satisfy 
        $\mbox{Var}(\hat\alpha_1^M) = \sigma_M^2 (\bm x^\top \bm {\hat e}_{x\sim r})^{-1}$ and 
        $\mbox{Var}(\hat\alpha_1) = \sigma^2 (\bm x^\top \bm {\hat e}_{x\sim r + Z_{XQ}})^{-1}$, 
    where $\sigma_M^2$ is the error variance from the main-only model and $\sigma^2$ is the error variance from the cat-modified model, assuming uncorrelated and homoskedastic errors in both models. These error assumptions are \emph{not} required to prove the result, but do motivate the definition of the SE. Under the equal-variance condition \eqref{eq-v}, we previously showed  that $\bm {\hat e}_{x\sim r} = \bm {\hat e}_{x\sim r + Z_{XQ}}$. Thus, the only difference in the variances of the estimators occurs because of the error variances, i.e., 
    $\mbox{Var}(\hat\alpha_1)/\mbox{Var}(\hat\alpha_1^M)  =  \sigma^2/\sigma_M^2$. The SEs  substitute point estimates for $\sigma_M$ and $\sigma$: $\mbox{SE}(\hat \alpha_1^M) = \hat S_M \sqrt{(\bm x^\top \bm {\hat e}_{x\sim r})^{-1}}$ and similarly,
        \begin{align*}
        \mbox{SE}(\hat \alpha_1) &= \hat S \sqrt{(\bm x^\top \bm {\hat e}_{x\sim r + Z_{XQ}})^{-1}}\\ 
        &= \hat S \sqrt{(\bm x^\top \bm {\hat e}_{x\sim r})^{-1}}\\ 
        &= \frac{\hat S}{\hat S_M} \mbox{SE}(\hat \alpha_1^M) \\
        & \le \mbox{SE}(\hat \alpha_1^M)
    \end{align*}
    since $\hat S \le \hat S_M$ under \eqref{rv}. 
\end{proof}

\section{Generalized linear models with ABCs} 
\label{sec-a-glm}
Generalized linear models (GLMs) are immensely useful for regression analysis with a variety of data types, including continuous, count, binary, and categorical data. Broadly, GLMs require a choice of data distribution (e.g., Gaussian, Poisson, Bernoulli, etc.) and a link function $g$ that replaces the expectation of $Y$, say $\mu(\bm x, \bm c)$  with a transformed version, say $g\{\mu(\bm x, \bm c)\}$, in the cat-modified model \eqref{reg-cm} (similarly for the main-only model \eqref{reg-main}). With categorical covariates and cat-modifiers, identification constraints are needed for the regression coefficients exactly as in the ordinary (untransformed) linear model. ABCs again provide a suitable identification strategy, with straightforward estimation and inference: the loss function $\mathcal{L}$ in Section~\ref{sec-est}  is specified to incorporate the appropriate negative log-likelihood and link function. 

More subtly, the presence of the link function $g$ implies that interpretations of the coefficients will be different from those in the  ordinary linear model (Section~\ref{sec-abcs}). For the main $x_j$-effects,  recall that $\alpha_j = \mathbb{E}_{\bm{\hat \pi}} (\alpha_j + \sum_{k=1}^K \gamma_{j, k, C_k})$ under ABCs \eqref{abcs-gen}, regardless of the data distribution or the link function. When the link $g$ is \emph{not} the identity,  it is no longer the case that the internal quantity in the expectation equals $\mu_{x_j}'(\bm C)$, and thus the previous interpretation from \eqref{joint-slopes-avg} requires modifications. Most generally, cat-modified GLMs satisfy
\begin{equation}\label{joint-slopes-glm-0}
g\{\mu(x_j+1, \bm x_{-j}, \bm c)\} - g\{\mu(x_j, \bm x_{-j}, \bm c)\} = \alpha_j + \sum_{k=1}^K \gamma_{j, k, c_k},
\end{equation}
so under ABCs \eqref{abcs-gen} the main $x_j$-effect is 
\begin{equation}\label{joint-slopes-glm}
    \alpha_j = \mathbb{E}_{\bm{\hat \pi}} [
    g\{\mu(x_j+1, \bm x_{-j}, \bm C)\} - g\{\mu(x_j, \bm x_{-j}, \bm C)\}].
\end{equation}
As with ordinary linear regression, ABCs identify each main effect as a group-averaged comparison between expectations at $(x_j+1, \bm x_{-j})$ and $(x_j, \bm x_{-j})$. The main differences for GLMs is the presence of the link function $g$ within this comparison. 

For clarity, we provide interpretations for logistic and Poisson regression. For binary data $Y \in\{0,1\}$, $\mu(\bm x, \bm c)$ equals the probability that $Y=1$ and logistic regression specifies $g$ as the logit link, $g(t) = \log\{t/(1-t)\}$. Now, \eqref{joint-slopes-glm-0} simplifies to the log-odds-ratio: 
\[
g\{\mu(x_j+1, \bm x_{-j}, \bm c)\} - g\{\mu(x_j, \bm x_{-j}, \bm c)\} = \log\left[\frac{\mbox{odds}\{\mu(x_j+1, \bm x_{-j}, \bm c)\}}{\mbox{odds}\{\mu(x_j, \bm x_{-j}, \bm c)\}}\right]
\]
where $\mbox{odds}\{\mu(\bm x, \bm c)\} = \mu(\bm x, \bm c)/\{1 - \mu(\bm x, \bm c)\}$. Thus, $\alpha_j$ is the group-averaged log-odds-ratio for $x_j$. This interpretation is natural: for the main-only logistic regression model, $\alpha_j$ is simply the log-odds-ratio for $x_j$. 
Similarly, for Poisson regression with $Y\in\{0,1,\ldots,\}$, $\mu(\bm x, \bm c)$ is the expectation of $Y$ and $g(t) = \log(t)$, so \eqref{joint-slopes-glm-0} is a log-ratio:
\[
g\{\mu(x_j+1, \bm x_{-j}, \bm c)\} - g\{\mu(x_j, \bm x_{-j}, \bm c)\} = \log\left\{\frac{\mu(x_j+1, \bm x_{-j}, \bm c)}{\mu(x_j, \bm x_{-j}, \bm c)}\right\}.
\]
Here, $\alpha_j$ is the group-averaged log-ratio. Finally, we note that both of these terms involve group-averaged quantities on the log-scale. Thus, it may be more natural to consider exponentiated versions on the $\mu$-scale, so the group-averages become weighted geometric means.


\section{Additional simulation results}\label{sec-a-sims}

\begin{figure}[h]
\centering
\includegraphics[width=.49\linewidth]{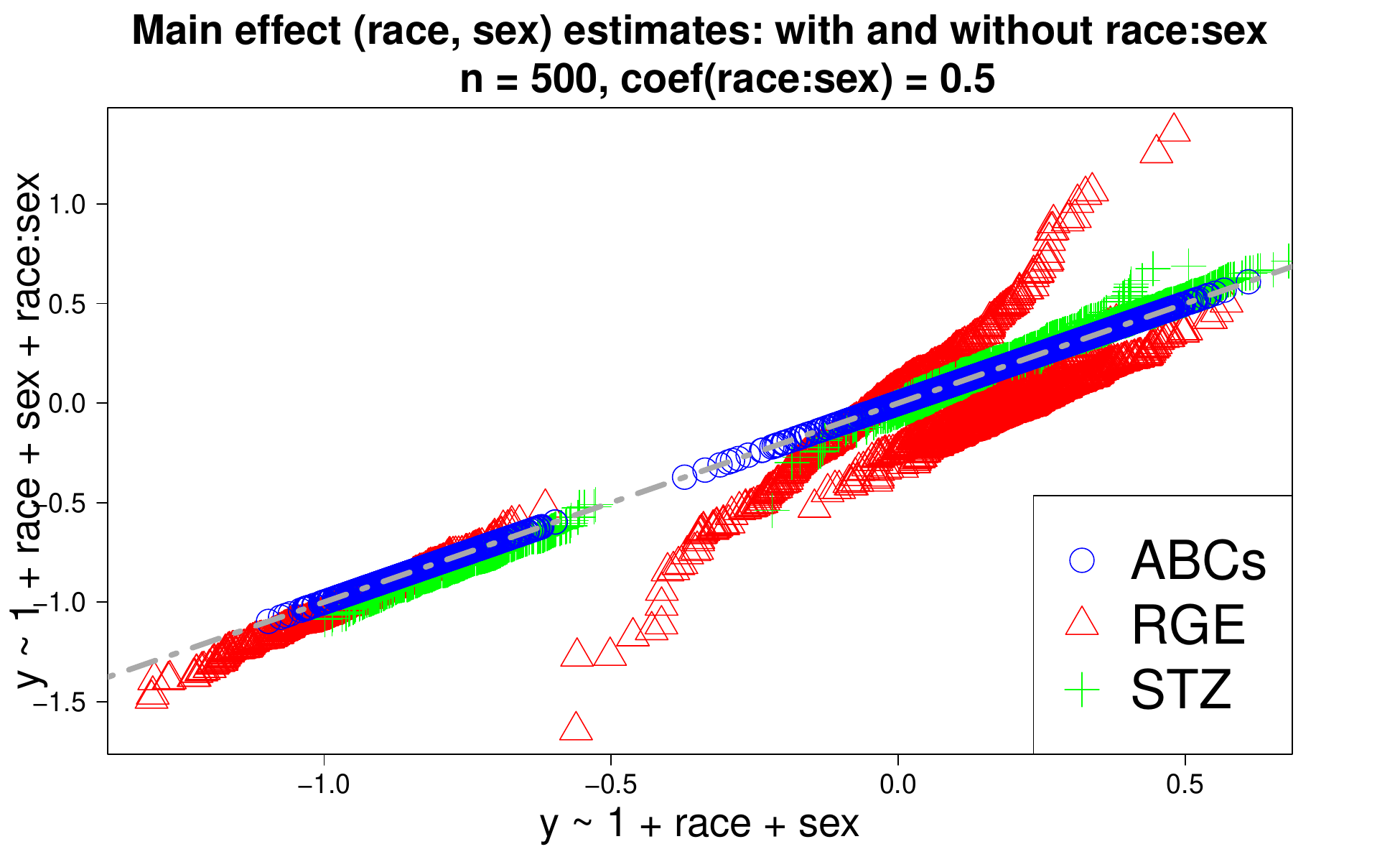}
\includegraphics[width=.49\linewidth]{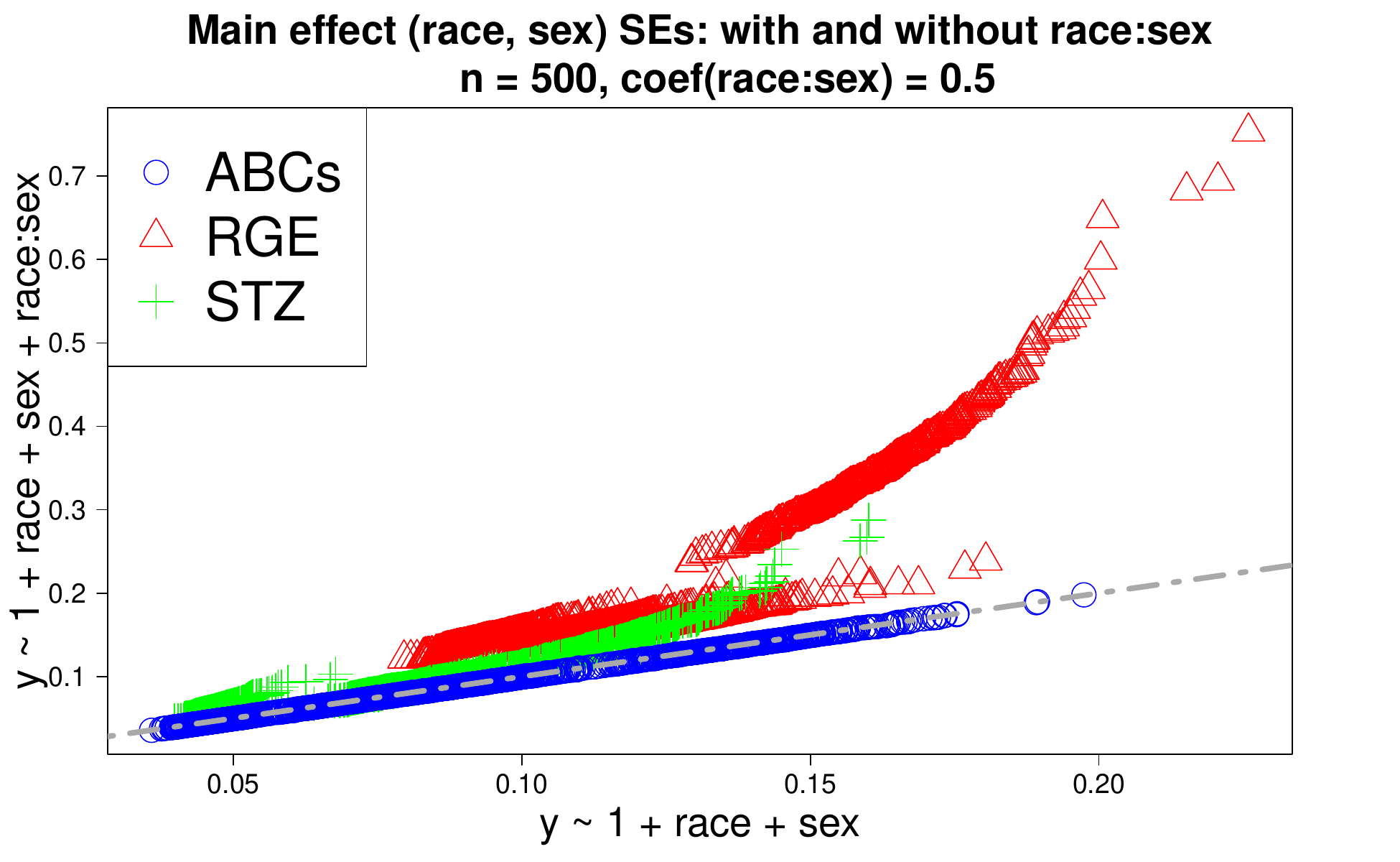}
\caption{\small Estimates (left) and standard errors (SEs, right) for all \texttt{race} and \texttt{sex} main effects 
for models that do (y-axis) and do not (x-axis) include the \texttt{race:sex} interaction across 500 simulated datasets.  Here, the interaction effect is moderate ($\gamma = 0.5$). Under ABCs, the estimates are exactly invariant and the SEs are nearly invariant ($45^\circ$ line).}
\label{fig:sim-race-sex-add}
\end{figure}

\begin{figure}[h]
\centering
\includegraphics[width=.49\linewidth]{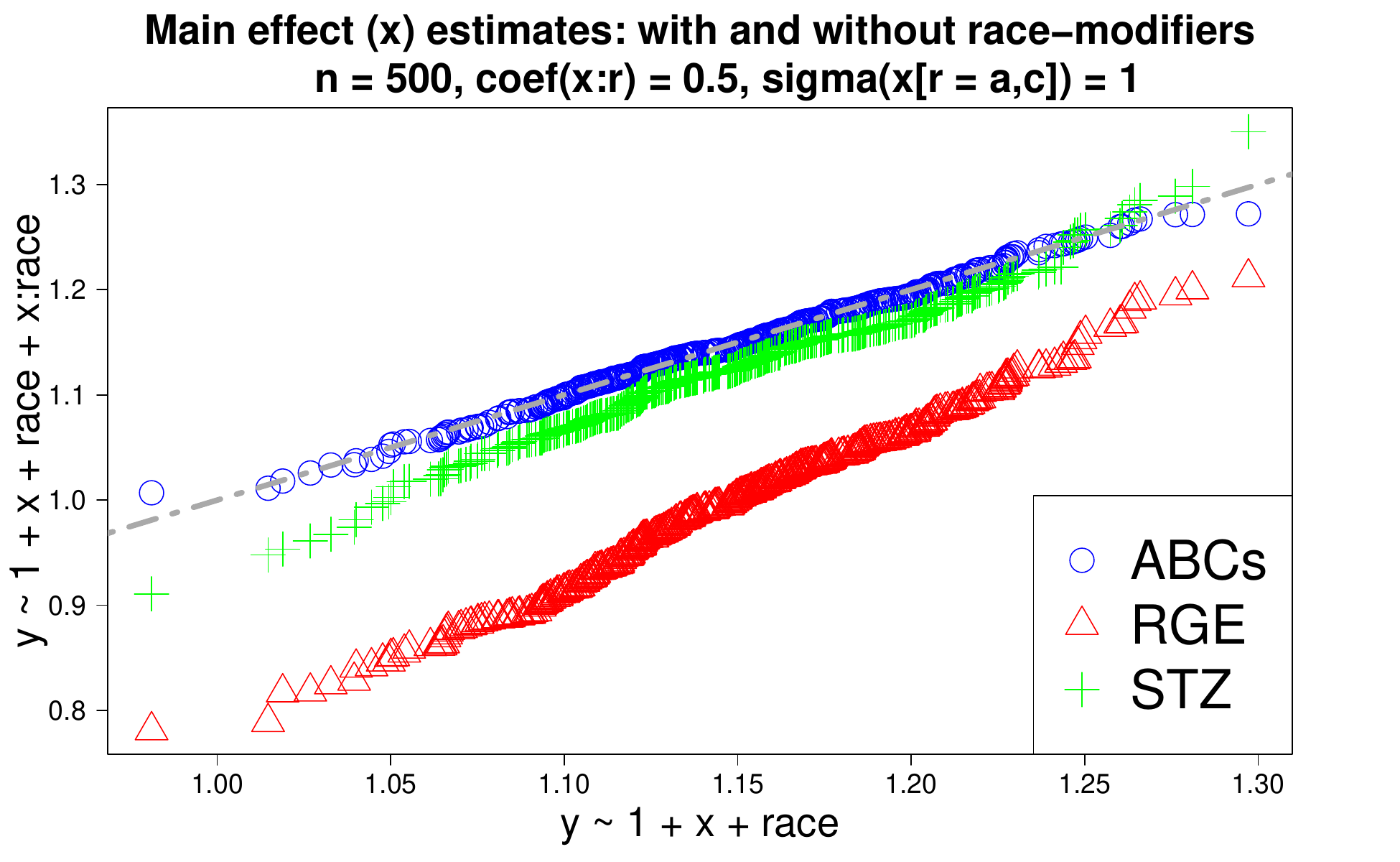}
\includegraphics[width=.49\linewidth]{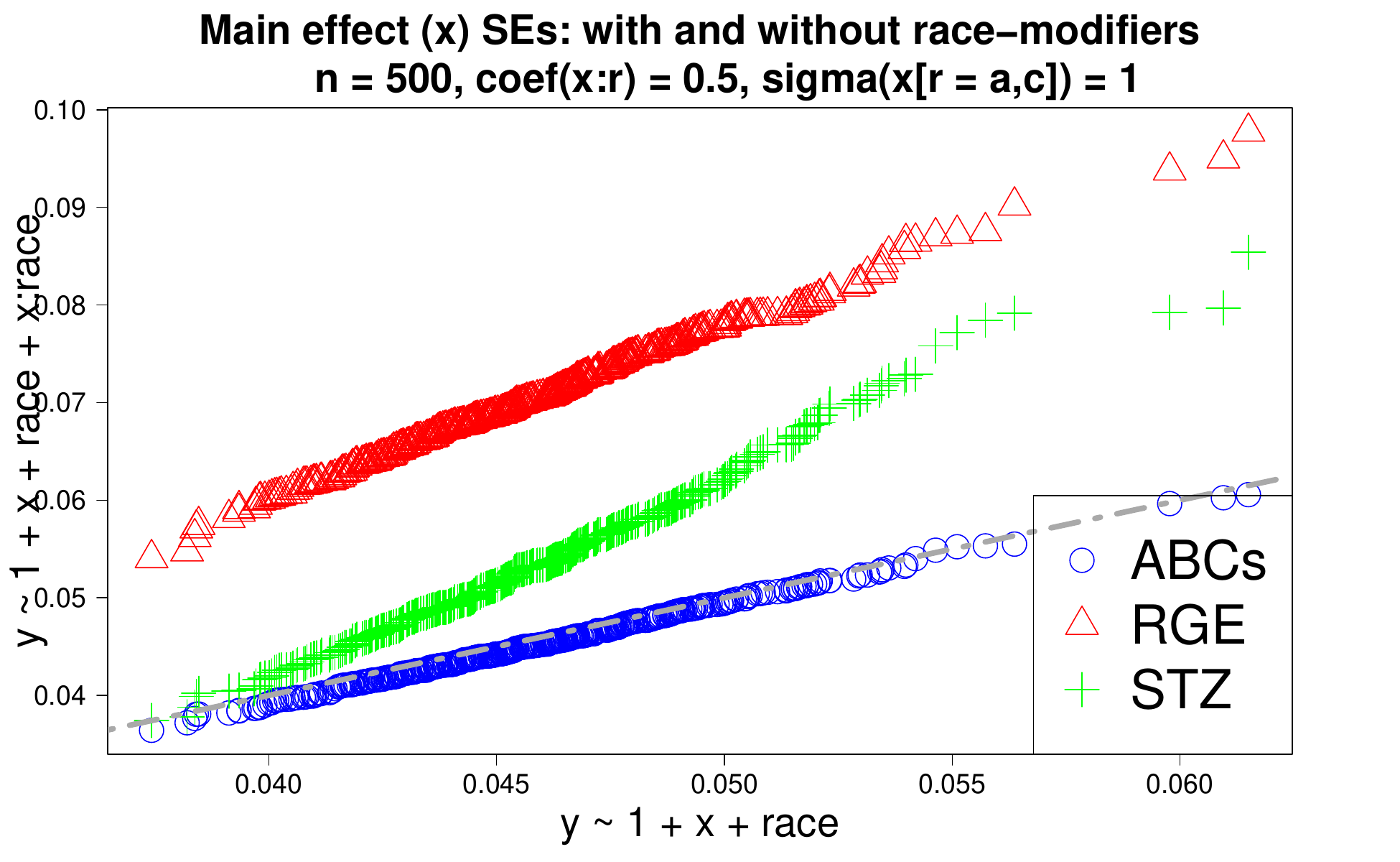}
\includegraphics[width=.49\linewidth]
{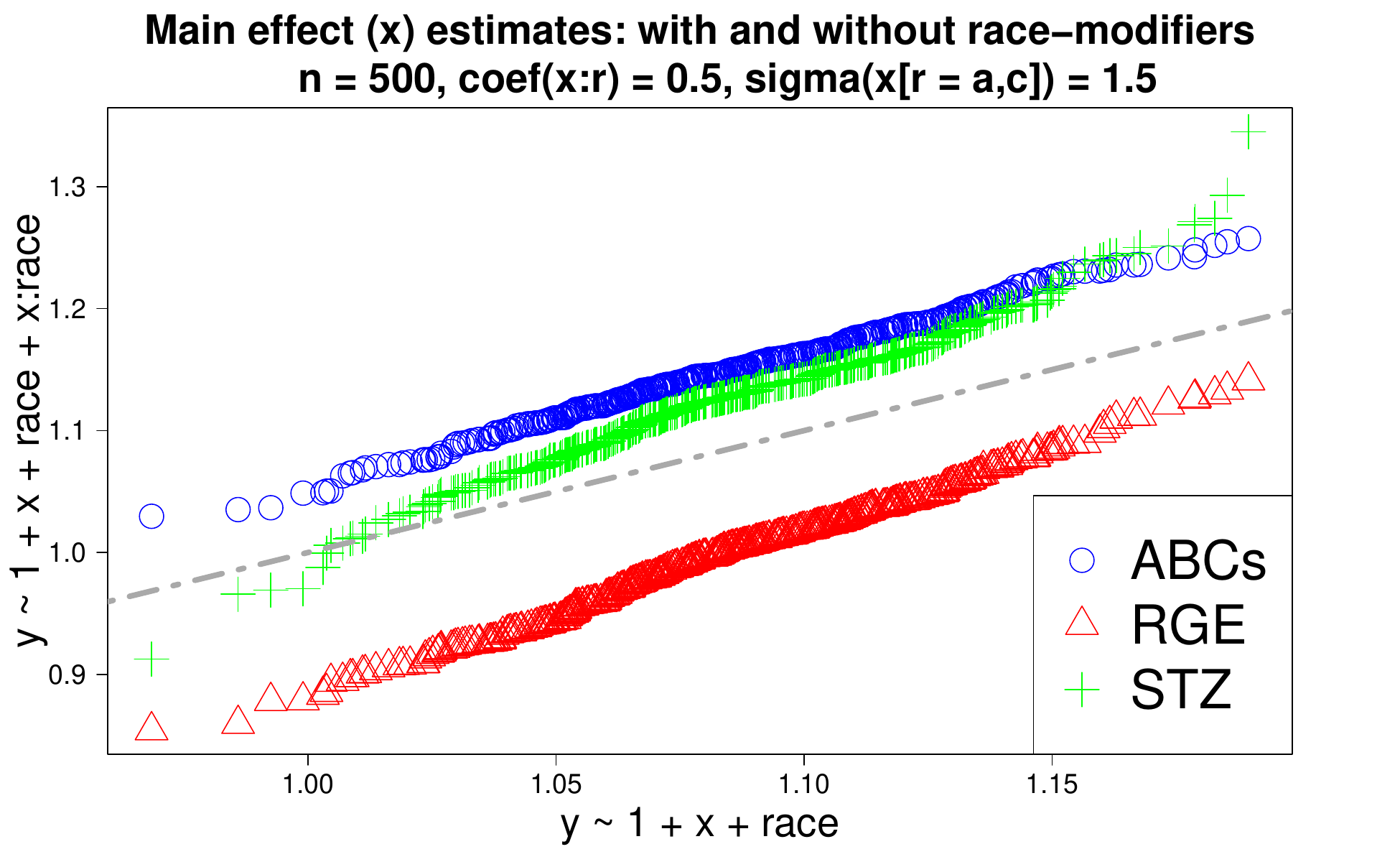}
\includegraphics[width=.49\linewidth]
{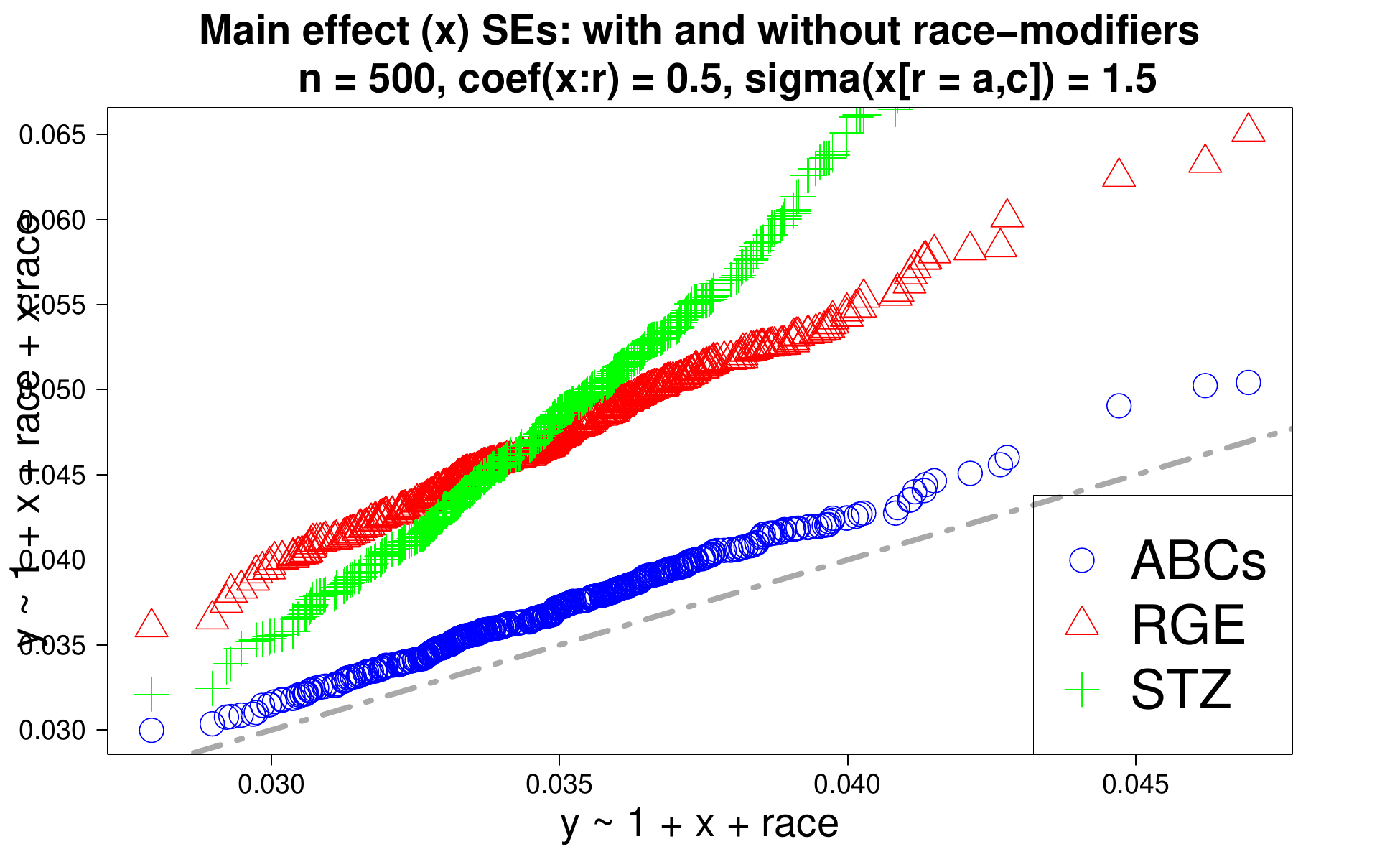}
\caption{\small Estimates (left) and standard errors (SEs, right) for the main $x$-effect for models that do (y-axis) and do not (x-axis) include the \texttt{x:race} interaction across 500 simulated datasets. Here, the interaction effect is moderate ($\gamma = 0.5$) in all cases. Under ABCs, the estimates and SEs are nearly invariant ($45^\circ$ line) as long as the deviations from equal-variance \eqref{eq-v} are  mild ($\sigma_{ac} = 1$, top). These effects are not assured when \eqref{eq-v} is strongly violated (bottom).  
}
\label{fig:sim-x-race-add}
\end{figure}

\begin{figure}[h]
\centering
\includegraphics[width=.49\linewidth]{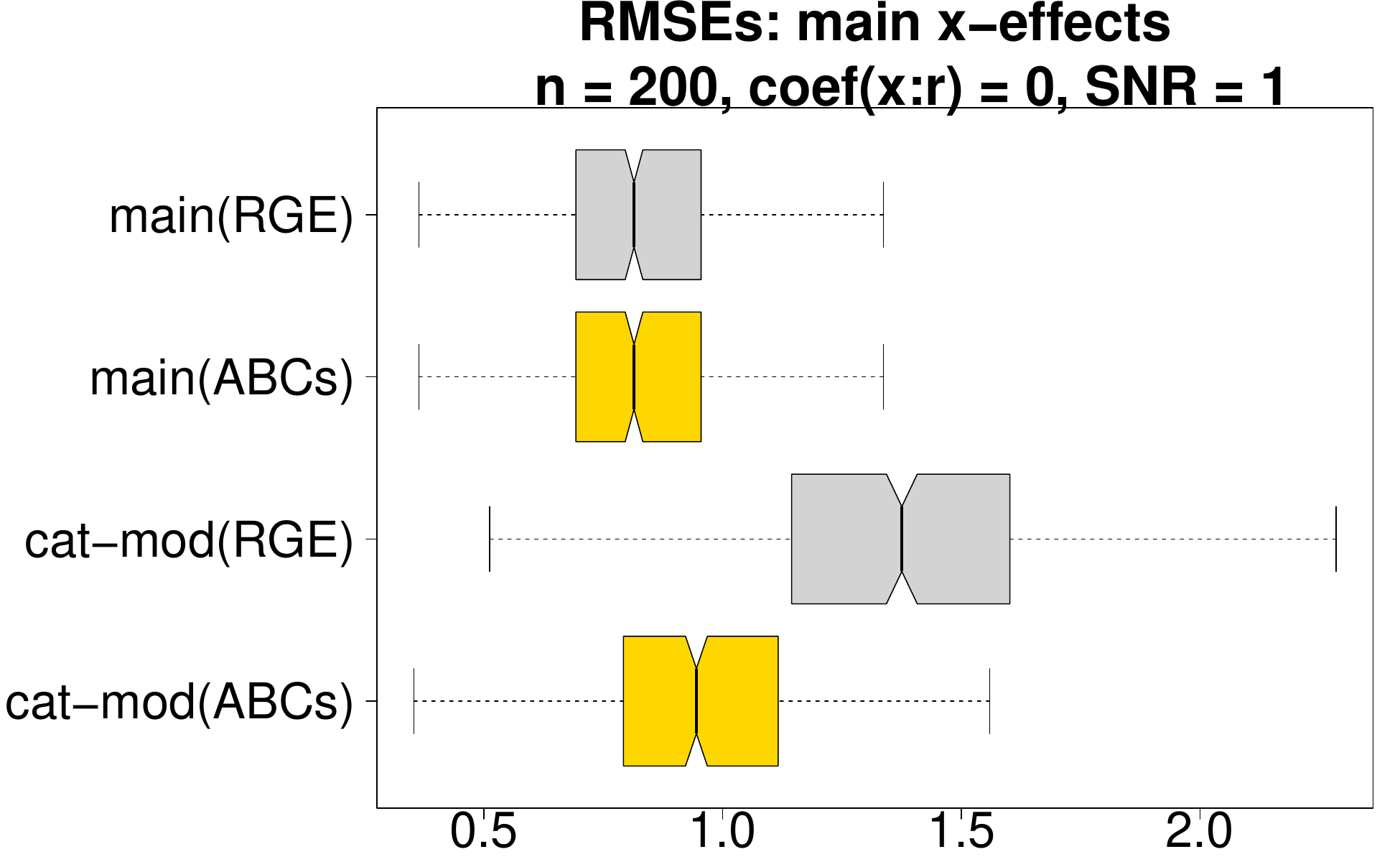}
\includegraphics[width=.49\linewidth]{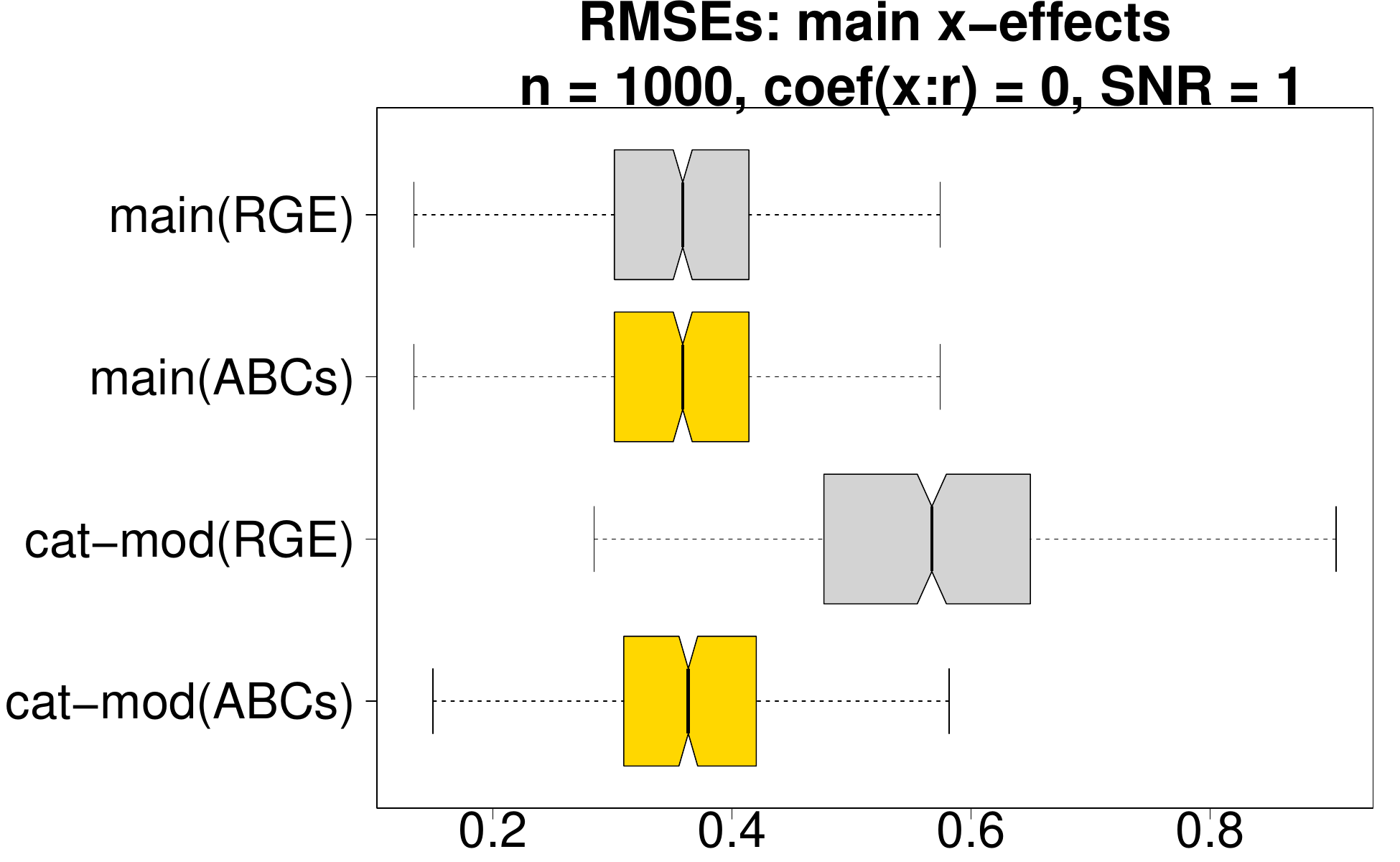}

\includegraphics[width=.49\linewidth]{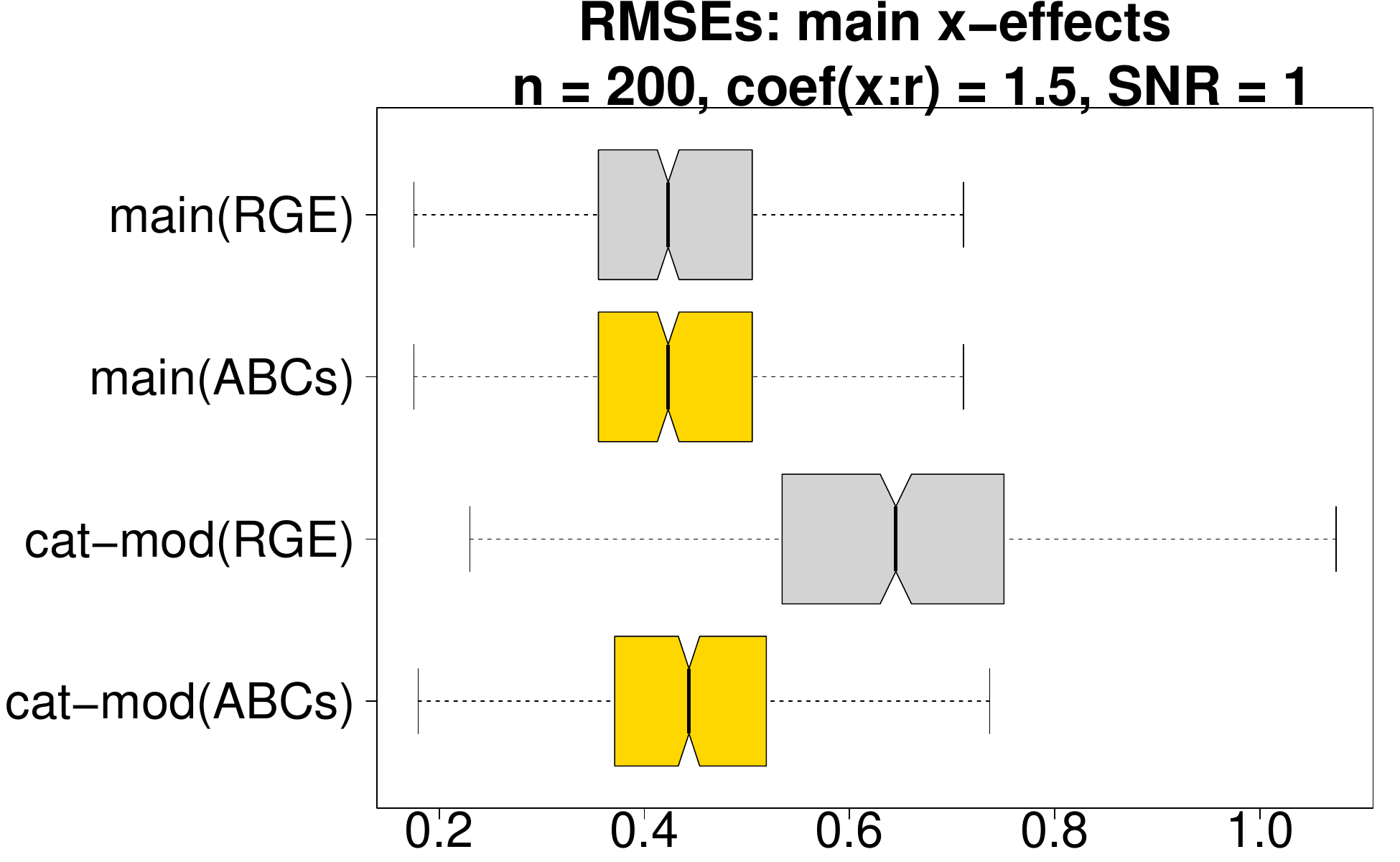}
\includegraphics[width=.49\linewidth]{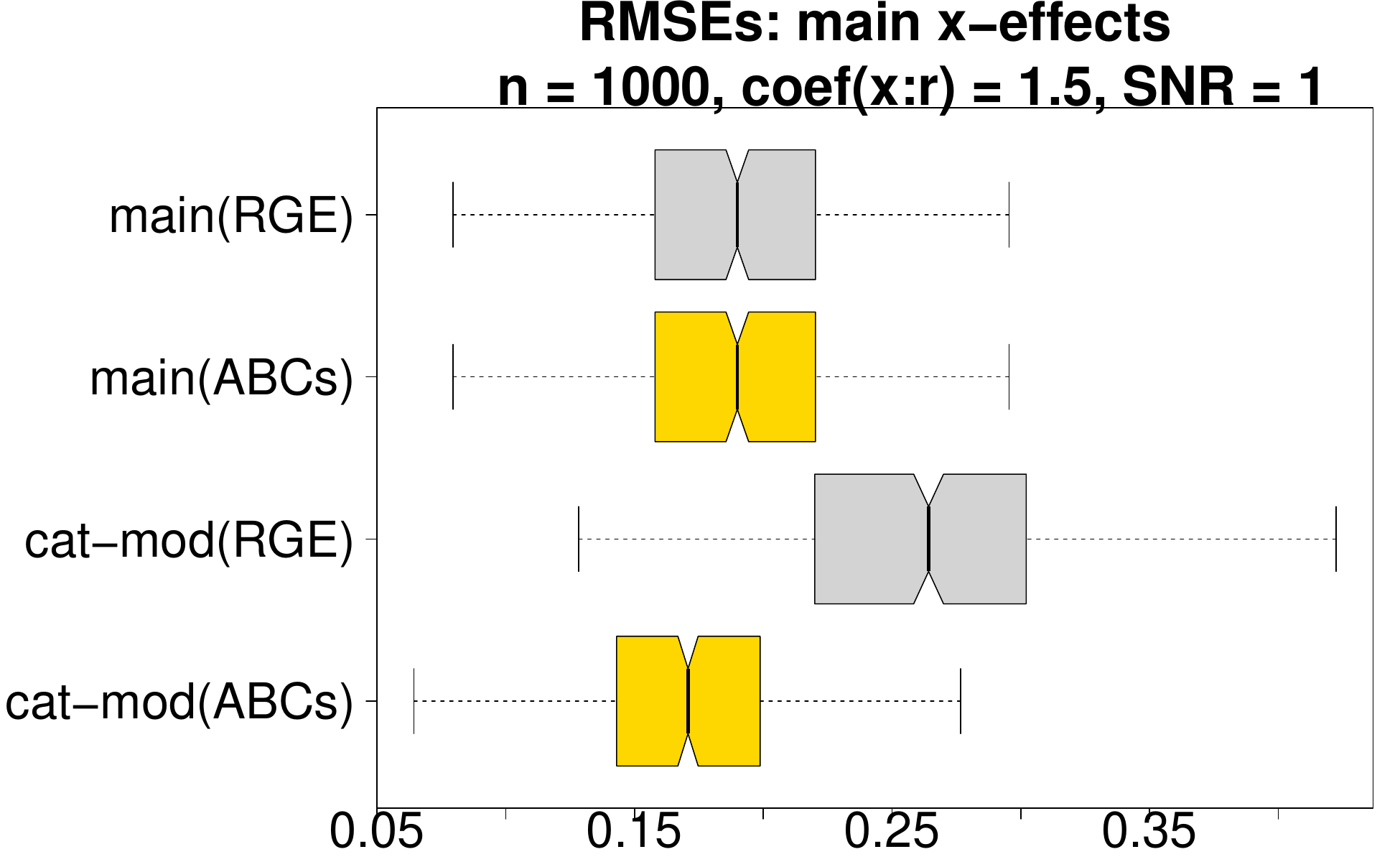}
\caption{\small RMSEs for the main $x$-effects  with extraneous (top) or necessary (bottom) cat-modifier effects for $n=200$ (left) and $n=1000$ (right) under main-only and cat-modified models with ABCs (gold)  and RGE (gray). Boxplots are across 500 simulations; nonoverlapping notches indicate a difference in medians. For $n=200$, the cat-modified models omit the \texttt{race:sex} interaction to avoid rank deficiency. For larger $n$, the cat-modified model with ABCs is better able to match (top right) or improve upon (bottom right) the main $x$-effect estimates compared to the main-only models. 
}
\label{fig:sim-est-n}
\end{figure}

\begin{figure}[h]
\centering
\includegraphics[width=.49\linewidth]{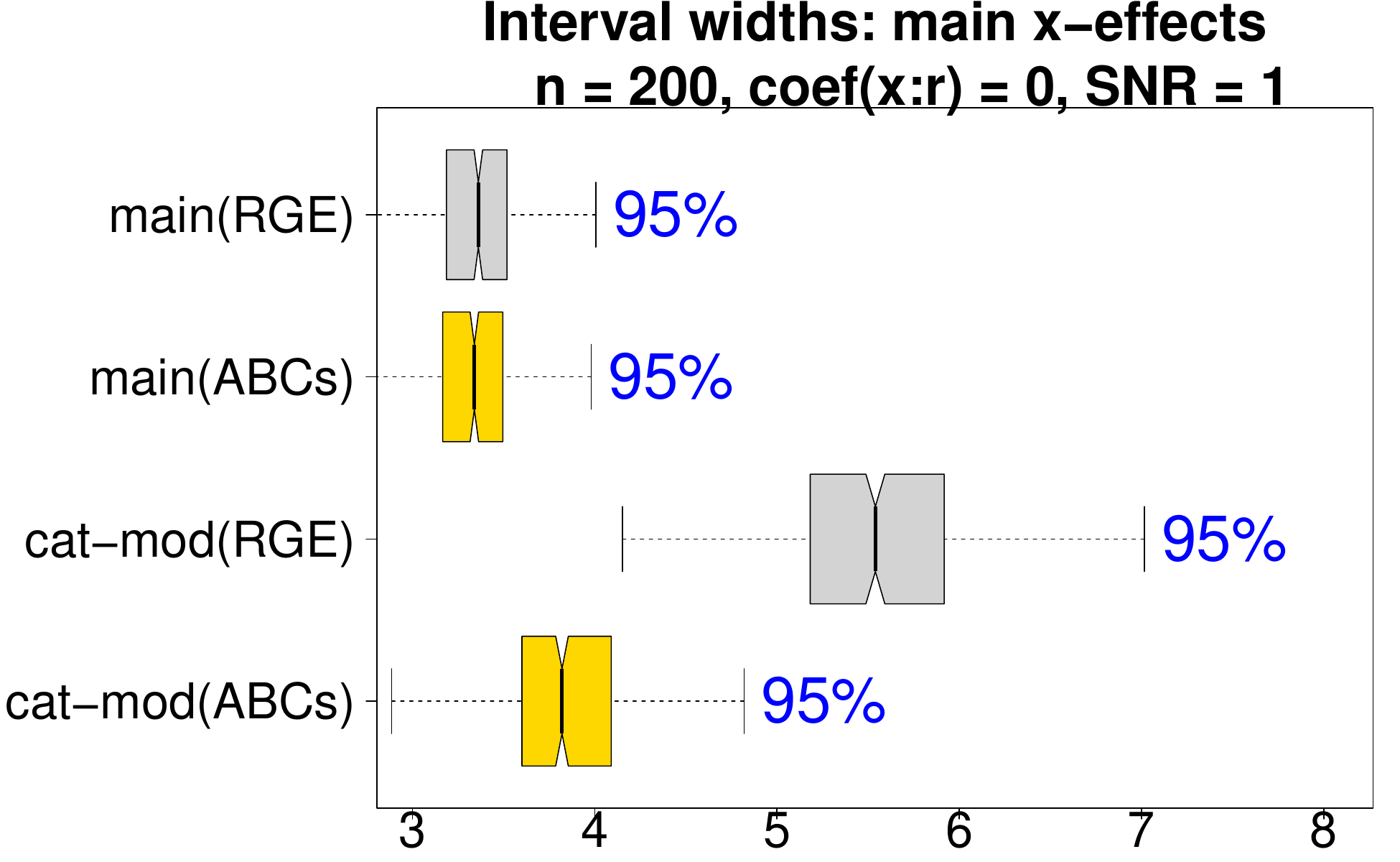}
\includegraphics[width=.49\linewidth]{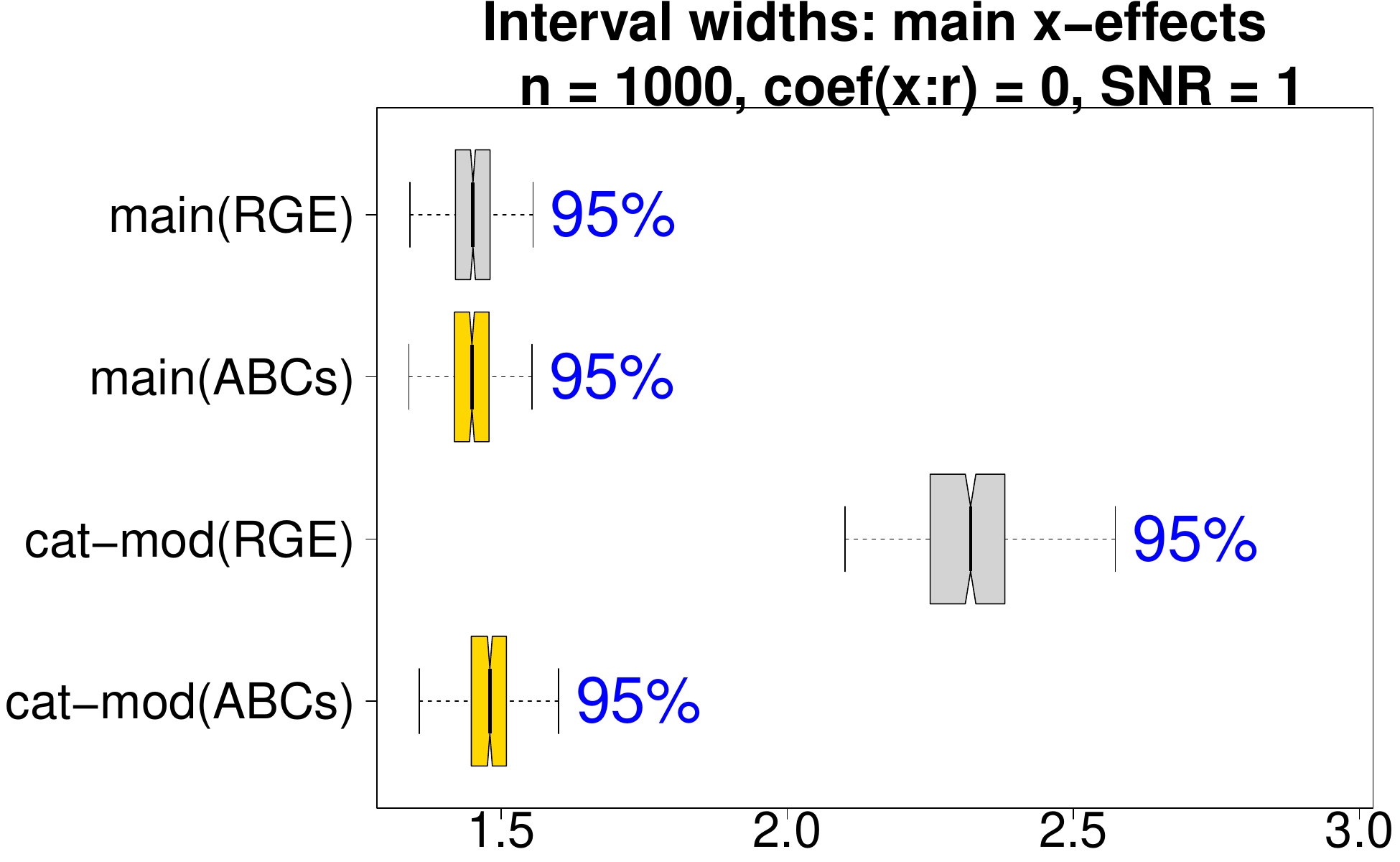}

\includegraphics[width=.49\linewidth]{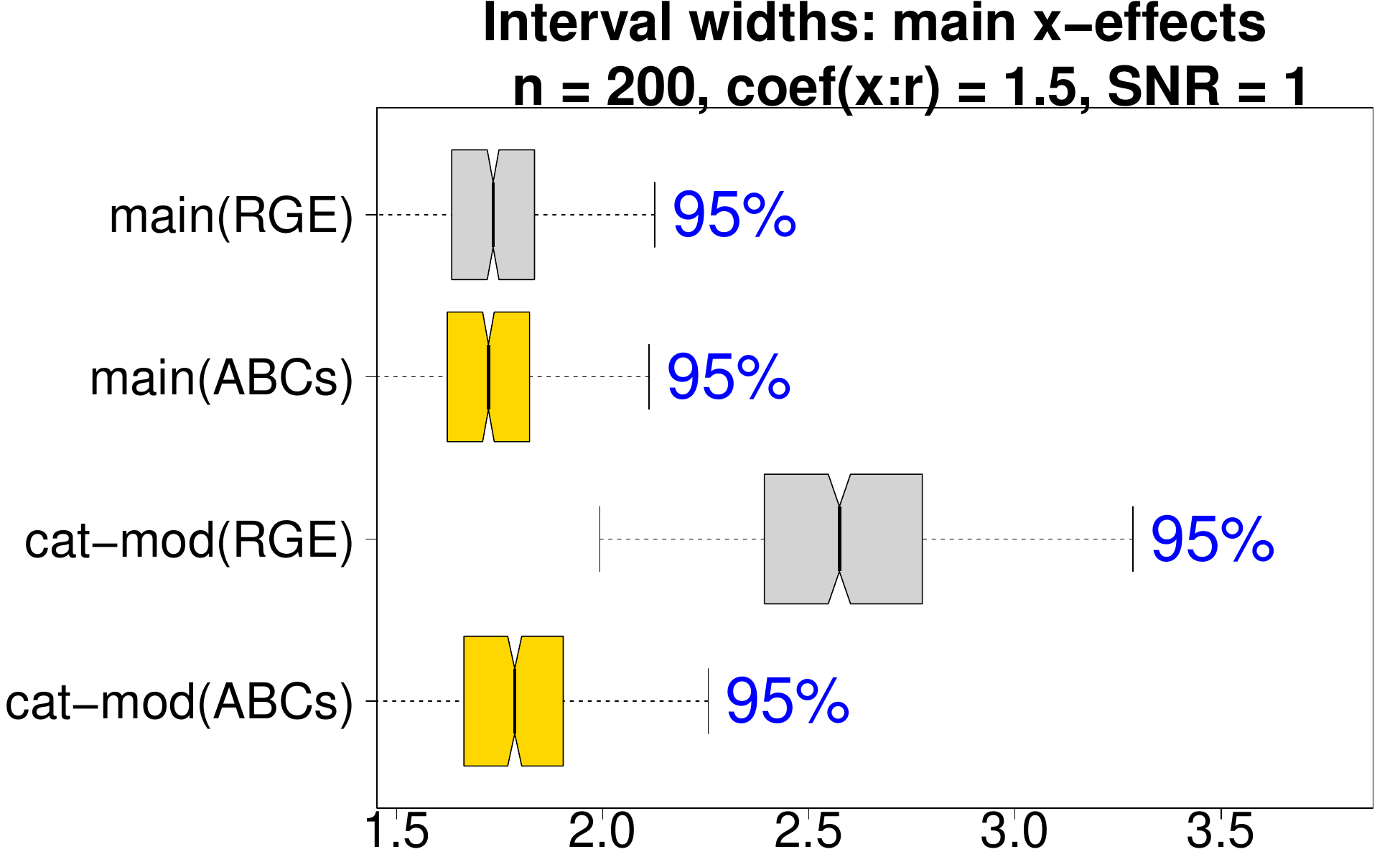}
\includegraphics[width=.49\linewidth]{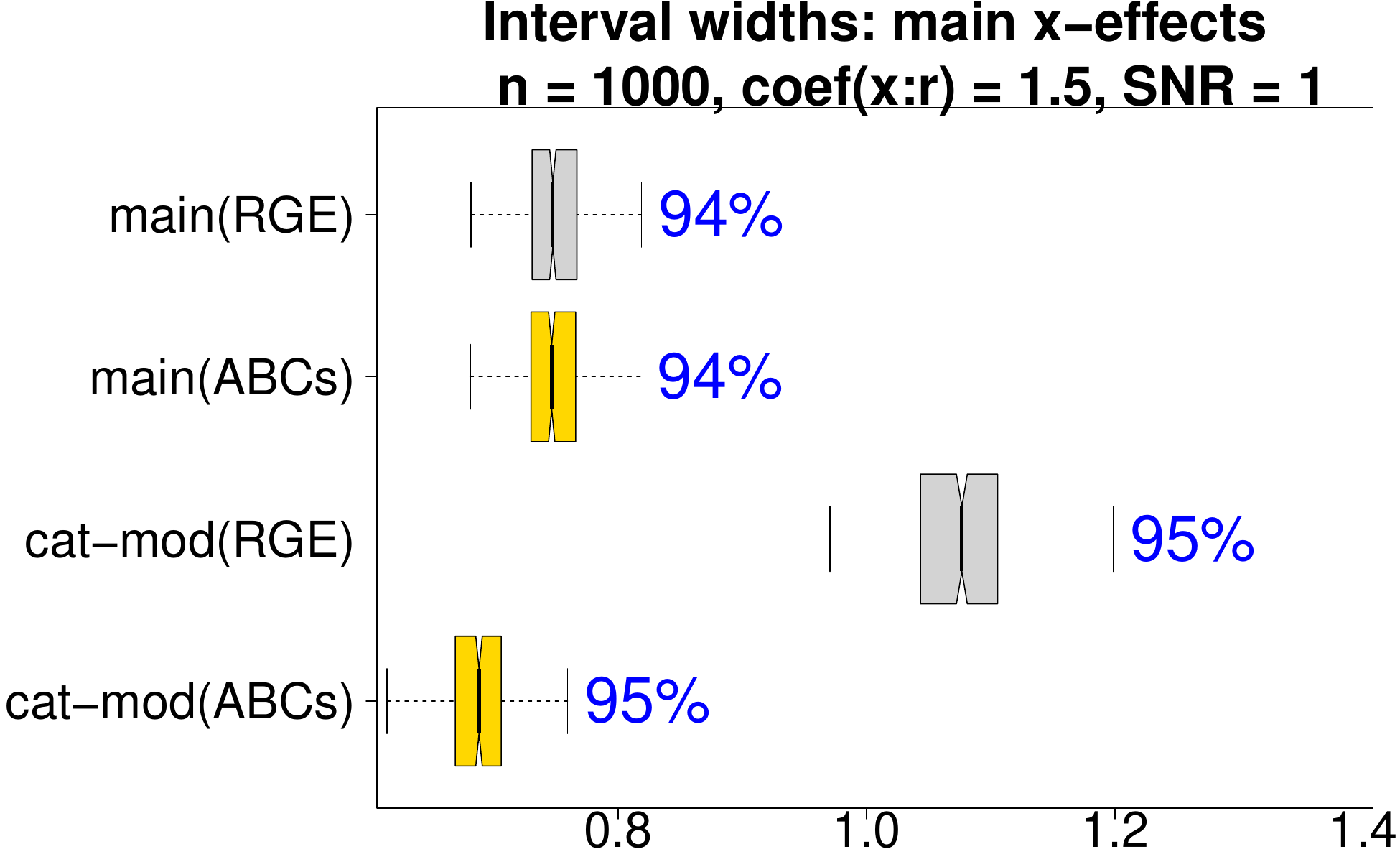}

\caption{\small Interval widths (boxplots) and empirical coverage (annotations) for 95\% confidence intervals for the main $x$-effects with extraneous (top) or necessary (bottom) cat-modifier effects for $n=200$ (left) and $n=1000$ (right) under main-only and cat-modified models with ABCs (gold)  and RGE (gray). For $n=200$, the cat-modified models omit the \texttt{race:sex} interaction to avoid rank deficiency. For larger $n$, the cat-modified model with ABCs is better able to match (top right) or improve upon (bottom right) the statistical power for the main $x$-effects compared to the main-only models.   
}
\label{fig:sim-int-n}
\end{figure}

\begin{figure}[h]
\centering
\includegraphics[width=.49\linewidth]{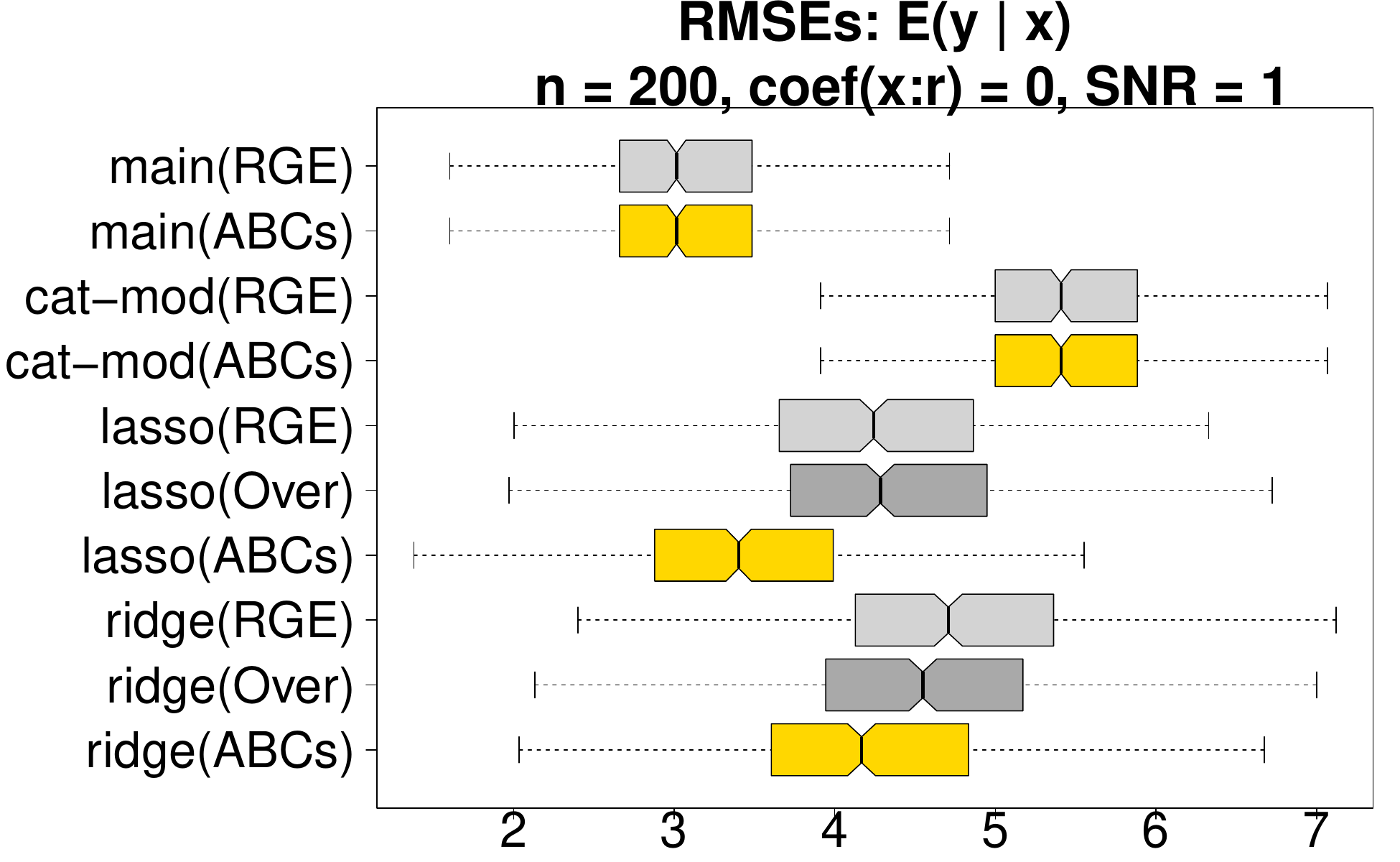}
\includegraphics[width=.49\linewidth]{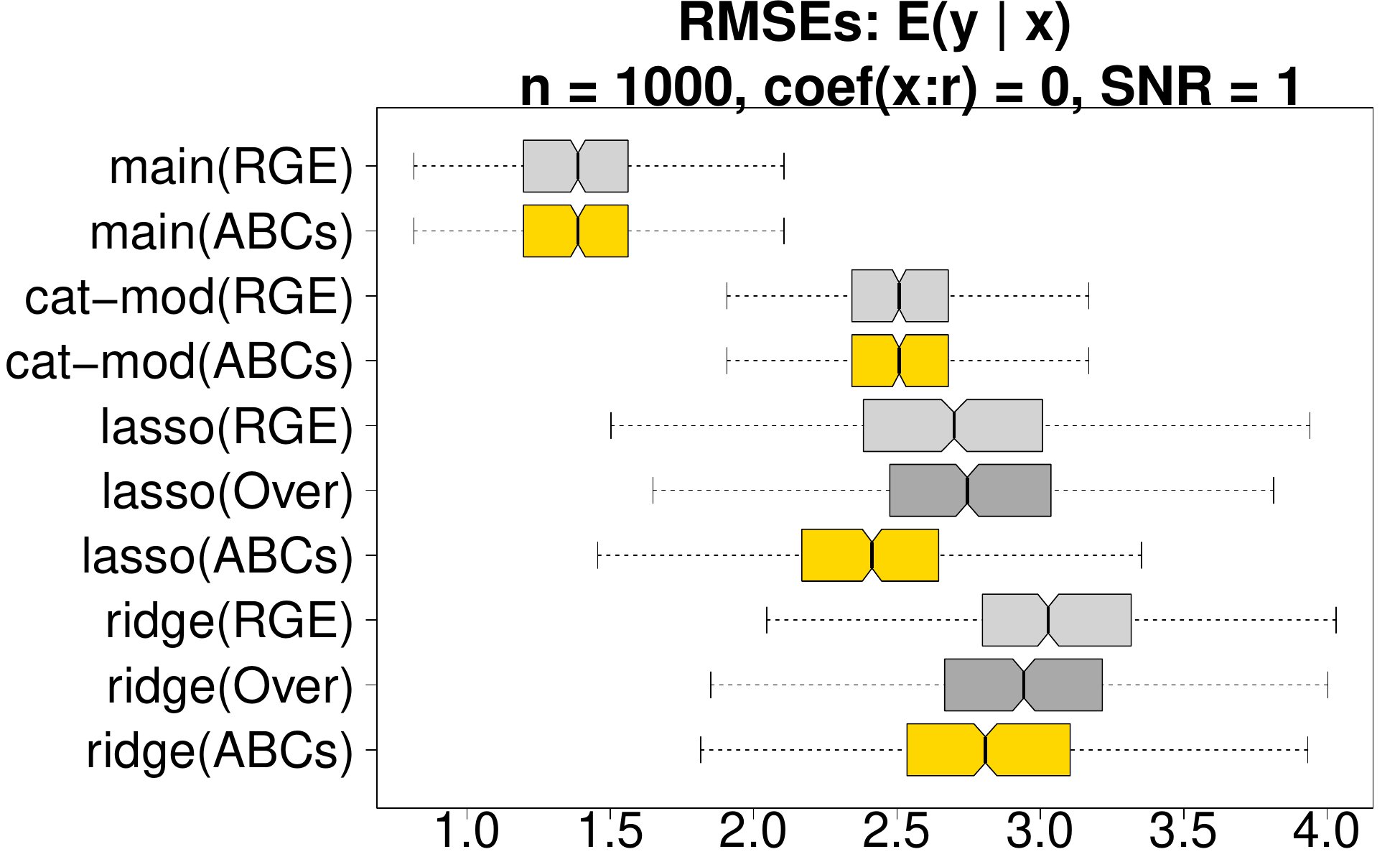}
\includegraphics[width=.49\linewidth]{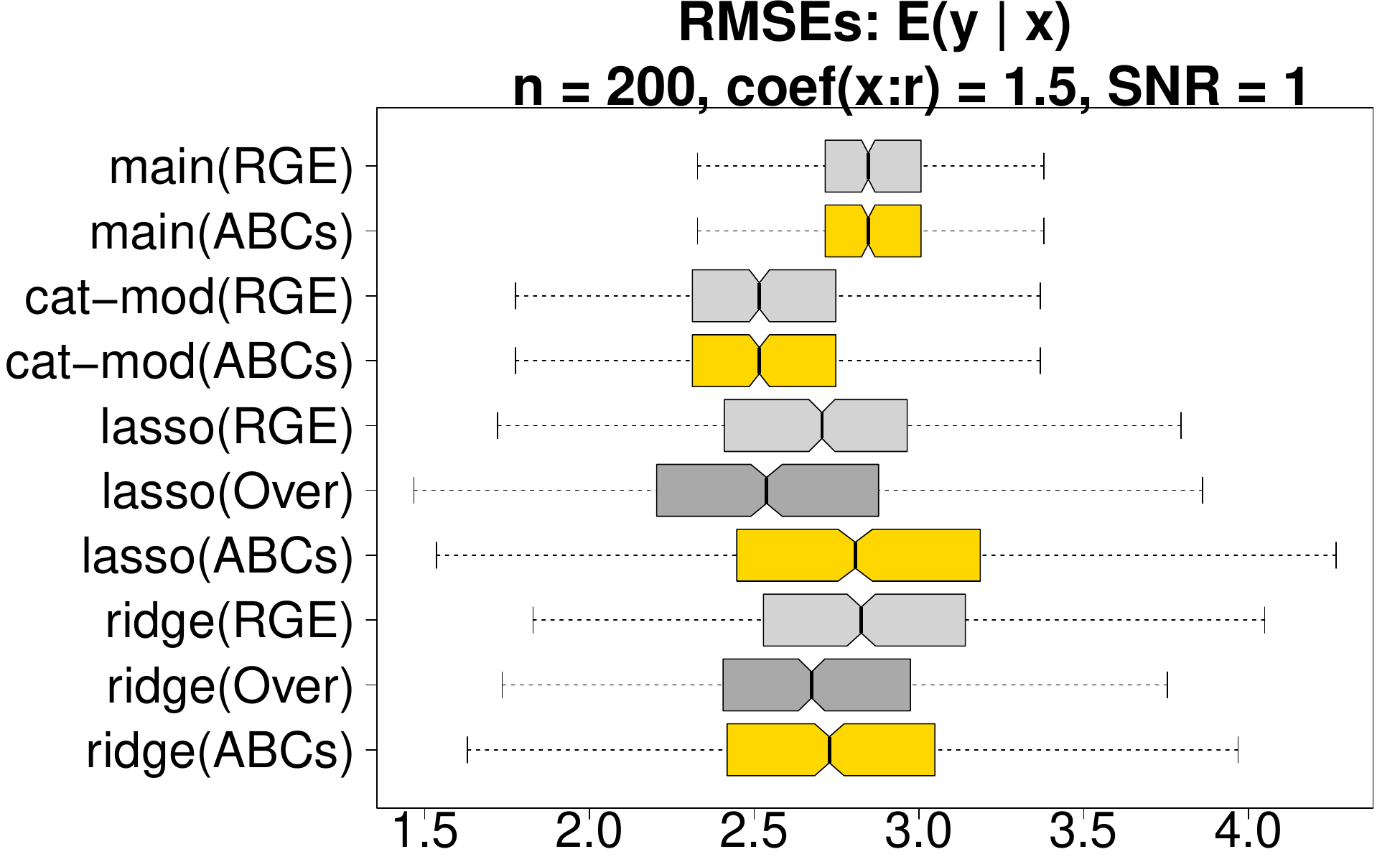}
\includegraphics[width=.49\linewidth]{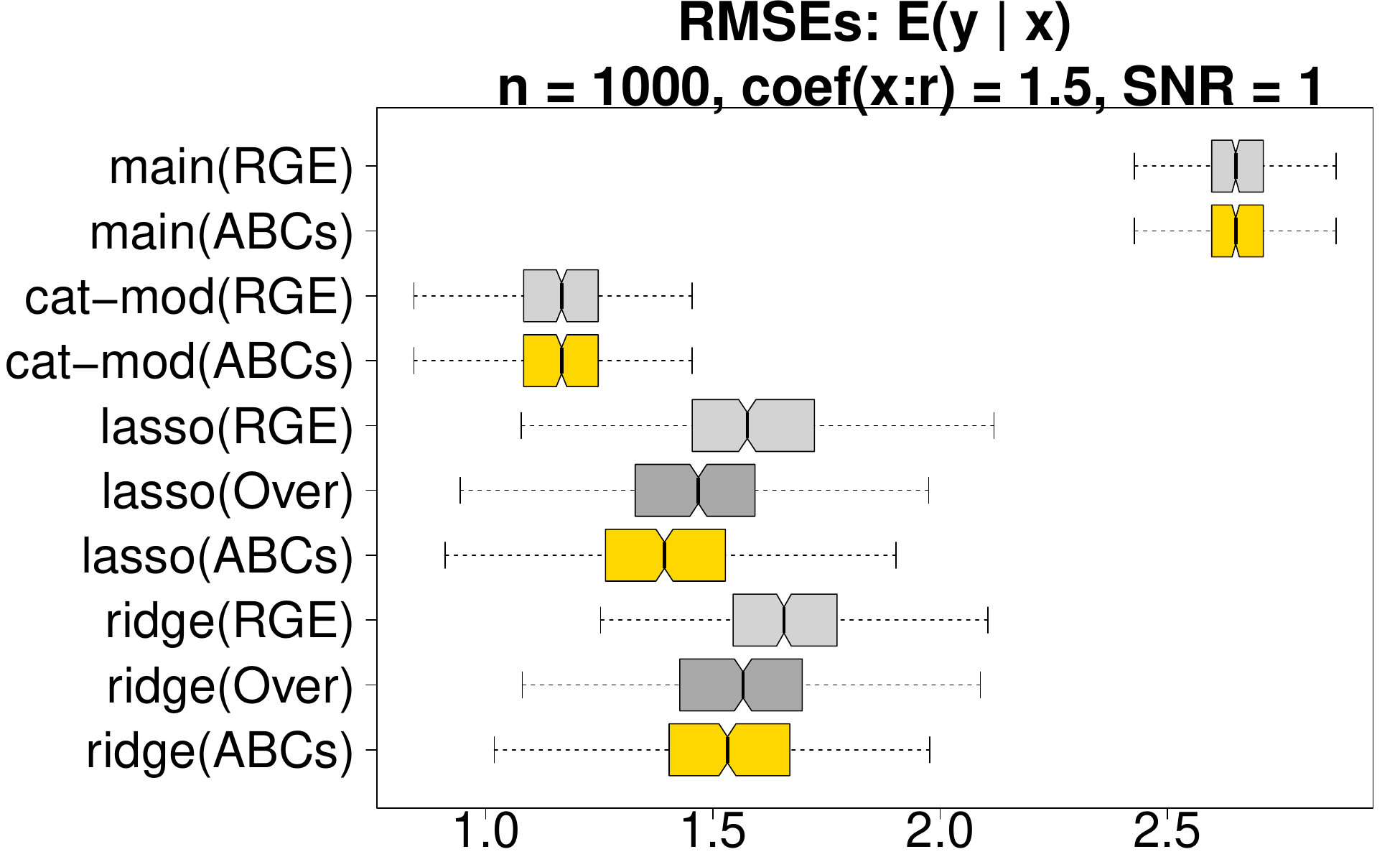}
\caption{\small RMSEs for prediction of $\mu(\bm x, r, s)$  with extraneous (top) or necessary (bottom) cat-modifier effects for $n=200$ (left) and $n=1000$ (right) under main-only and cat-modified models with ABCs (gold)  and RGE (gray). Boxplots are across 500 simulations; nonoverlapping notches indicate a difference in medians. All lasso and ridge estimators use the cat-modified model. Predictions under OLS are identical between RGE and ABCs. For $n=200$, the cat-modified models omit the \texttt{race:sex} interaction to avoid rank deficiency. For each penalized (lasso or ridge) regression, ABCs typically  outperform both RGE and the overparametrized models that omits any constraints. 
}
\label{fig:sim-pred}
\end{figure}

\clearpage
\section{Additional application details}\label{sec-a-app}

\begin{table}[h]
\centering \scriptsize
\begin{tabular}{llrr} 
\multicolumn{4}{c}{{\bf Reference group encoding (RGE)}} \\ 
Variable & Model & Estimate (SE)  & $p$-value \\
\hline
\multirow{2}{*}{\texttt{Intercept}} & Main-only & 0.238 (0.010) &   $<$0.001 \\
& Cat-modified & 0.217 (0.011)  & $<$0.001 \\
\multirow{2}{*}{\texttt{White}} & Main-only & ref& ref\\ 
& Cat-modified & ref& ref\\ 
\multirow{2}{*}{\texttt{Black}} & Main-only & -0.727 (0.013) & $<$0.001 \\
& Cat-modified & -0.664 (0.018) & $<$0.001 \\
\multirow{2}{*}{\texttt{Hispanic}} & Main-only &  -0.016 (0.025) & 0.517  \\
& Cat-modified &  -0.042 (0.035) & 0.228  \\
\multirow{2}{*}{\texttt{Female}} & Main-only & ref& ref\\ 
& Cat-modified & ref& ref\\ 
\multirow{2}{*}{\texttt{Male}} & Main-only & 0.036 (0.012) & 0.003 \\
& Cat-modified & 0.077 (0.015) & $<$0.001 \\
\hline
\texttt{White:Female} & Cat-modified& ref &  ref \\
\texttt{Black:Female} &Cat-modified & ref& ref \\
\texttt{Hisp:Female} & Cat-modified& ref& ref\\
\texttt{White:Male} & Cat-modified& ref&  ref \\
\texttt{Black:Male} &Cat-modified & -0.128 (0.025) & $<$0.001 \\
\texttt{Hisp:Male} & Cat-modified & 0.056 (0.050) & 0.262 \\
\hline
\end{tabular} 
\begin{tabular}{llrr} 
\multicolumn{4}{c}{{\bf Abundance-based constraints (ABCs)}} \\ 
Variable & Model & Estimate (SE)   & $p$-value \\
\hline
\multirow{2}{*}{\texttt{Intercept}} & Main-only &   0.000 (0.006) &   1.000 \\
& Cat-modified &   0.000 (0.006) &   1.000 \\
\multirow{2}{*}{\texttt{White}} & Main-only & 0.256 (0.005) & $<$0.001 \\
& Cat-modified & 0.256 (0.005) & $<$0.001 \\
\multirow{2}{*}{\texttt{Black}} & Main-only & -0.471 (0.008) & $<$0.001 \\
& Cat-modified & -0.471 (0.008) & $<$0.001 \\
\multirow{2}{*}{\texttt{Hispanic}} & Main-only &  0.240 (0.023) & $<$0.001  \\
& Cat-modified &  0.240 (0.023) & $<$0.001  \\
\multirow{2}{*}{\texttt{Female}} & Main-only & -0.018 (0.006) & 0.003 \\
& Cat-modified & -0.018 (0.006) & 0.003 \\
\multirow{2}{*}{\texttt{Male}} & Main-only & 0.018 (0.006) & 0.003 \\
& Cat-modified & 0.018 (0.006) & 0.003 \\
\hline 
\texttt{White:Female} & Cat-modified& -0.021 (0.005) &  $<$0.001 \\
\texttt{Black:Female} &Cat-modified & 0.043 (0.008) & $<$0.001\\
\texttt{Hisp:Female} & Cat-modified& -0.046 (0.022) & 0.034\\
\texttt{White:Male} & Cat-modified& 0.021 (0.005) & $<$0.001\\
\texttt{Black:Male} &Cat-modified & -0.044 (0.008)& $<$0.001\\
\texttt{Hisp:Male} & Cat-modified& 0.051 (0.024) & 0.034\\  
\hline
\end{tabular} 
\caption{\small Linear regression output with RGE (left) and ABCs (right) for the main-only model  \eqref{reg-main-cat} 
and the cat-modified model  \eqref{reg-cm-cat} 
for the North Carolina education data (Section~\ref{sec-app}). The (mother's) race groups are non-Hispanic White (58.7\%), non-Hispanic Black (35.1\%), and Hispanic (6.2\%) and the child's sex are \texttt{Female} (50.1\%) and \texttt{Male} (49.9\%). With RGE (references \texttt{White} and \texttt{Female}), the main effects change dramatically with the addition of cat-modifiers and the standard errors (SEs) uniformly inflate. Yet with ABCs, all main effect estimates \emph{and} SEs are invariant to cat-modifiers (the SEs actually decrease slightly;  this is obscured due to rounding). 
\label{tab:ex-cat}
}
\end{table}

\begin{table}
\centering
\begin{tabular}{lrrr}
Variable $j$ & $\hat \sigma_{x[\texttt{NHW}]}(j)$  & $\hat \sigma_{x[\texttt{NHB}]}(j)$  & $\hat \sigma_{x[\texttt{Hisp}]}(j)$ \\
\hline
Racial isolation (RI) & 
0.691 & 1.071&  0.942 \\ 
Blood lead level & 0.951 & 1.042 &  0.977 \\
Birthweight percentile for
gestational age & 0.994 &  0.963 & 0.979 \\
Mother's age &  0.999  & 0.971 & 0.889 \\
$\mbox{PM}_{2.5}$ exposure & 0.998 &  1.005 & 0.928 \\
\hline
\end{tabular}
\caption{\small The (scaled) sample standard deviations $\hat \sigma_{x[r]}(j)$ by race $r$ for each covariate $j=1,\ldots,p$. \label{tab:emp-cov}
The invariance result for estimators with and without cat-modifiers (Theorem~\ref{thm-int-full}) requires $\hat \sigma_{x[\texttt{NHW}]}(j) = \hat \sigma_{x[\texttt{NHB}]}(j) = \hat \sigma_{x[\texttt{Hisp}]}(j)$ for each covariate $j$ (and similarly for the cross-covariances). Although this condition is clearly violated, the estimates and SEs maintain invariance, which suggests strong empirical robustness for the desirable invariance property of ABCs.   
\label{tab:sigmas}}
\end{table}

\begin{table}[h] \footnotesize
\centering
\begin{tabular}{lrr} 
        Variable (continued) & Estimate (SE)  & $p$-value  \\
        \hline
        Economically disadvantaged \\ \quad (\texttt{EconDisadv})  & \\
        \quad \texttt{No} (39.5\%) & 0.163  (0.009) & $<$0.001 \\
        \quad \texttt{Yes} (60.5\%) & -0.106  (0.006) & $<$0.001 \\
        \hline 
                      \texttt{White:EconDisadvNo} & 0.010  (0.004) & 0.018 \\
        \texttt{Black:EconDisadvNo} & -0.034  (0.023) & 0.138 \\
        \texttt{Hisp:EconDisadvNo} & -0.171  (0.063) & 0.007 \\
        \texttt{White:EconDisadvYes} & -0.013  (0.006) & 0.018 \\
        \texttt{Black:EconDisadvYes} & 0.007  (0.005) & 0.138 \\
        \texttt{Hisp:EconDisadvYes} & 0.025  (0.009) & 0.007 \\
        \texttt{EconDisadvNo:Male} & -0.013  (0.008) & 0.118 \\
        \texttt{EconDisadvYes:Male} & 0.009  (0.006) & 0.118 \\
        \texttt{EconDisadvNo:Female} & 0.014 (0.009) & 0.118 \\
        \texttt{EconDisadvYes:Female} & -0.009 (0.006) & 0.118 \\
        \texttt{EconDisadvNo:mEdu<HS} & -0.056 (0.037) & 0.126 \\
        \texttt{EconDisadvYes:mEdu<HS} & 0.006 (0.004) & 0.126 \\
        \texttt{EconDisadvNo:mEdu=HS} & -0.039 (0.012) & 0.002 \\
        \texttt{EconDisadvYes:mEdu=HS} & 0.016 (0.005) & 0.002 \\
        \texttt{EconDisadvNo:mEdu>HS} & 0.020 (0.005) & $<$0.001 \\
        \texttt{EconDisadvYes:mEdu>HS} & -0.043 (0.011) & $<$0.001 \\
        \texttt{RI:EconDisadvNo} & -0.007  (0.011) & 0.513 \\
\texttt{RI:EconDisadvYes} & 0.005  (0.007) & 0.513 \\
\texttt{BLL:EconDisadvNo} & -0.011  (0.009) & 0.229 \\
\texttt{BLL:EconDisadvYes} & 0.007  (0.006) & 0.229 \\
\texttt{BWTpct:EconDisadvNo} & 0.000  (0.009) & 0.983 \\
\texttt{BWTpct:EconDisadvYes} & 0.000  (0.006) & 0.983 \\
\texttt{mAge:EconDisadvNo} & 0.016  (0.009) & 0.088 \\
\texttt{mAge:EconDisadvYes} & -0.011  (0.006) & 0.088 \\
\texttt{PM2.5:EconDisadvNo} & 0.010  (0.009) & 0.230 \\
\texttt{PM2.5:EconDisadvYes} & -0.007  (0.006) & 0.230 \\
\hline
\end{tabular}

\caption{\small Cat-modified model output under ABCs for NC STEM education outcomes. These results augment Table~\ref{tab:results} to include \texttt{EconDisadv} main and interaction effects, where ``Economically disadvantaged'' is determined by participation in the National Lunch Program. \texttt{EconDisadv} is associated with lower math scores and eliminates the significant positive benefits of higher-educated mothers (\texttt{mEdu>HS}), thus emphasizing the importance of heterogeneous effects. 
\label{tab:results-app}}
\end{table}


\end{document}